\newcommand{\Inter}{\mathcal{I}}
\newcommand{\Hinter}{H_0^\Inter}
\newcommand{\nn}{\nonumber}
\newcommand{\CB}{\color{black}}
\newtheorem{definition}{Definition}
\newtheorem{theorem}{Theorem}
\newtheorem*{theorem*}{Theorem}
\newtheorem{example}{Example}
\newtheorem{procedure}{Procedure}
\newtheorem{lemma}{Lemma}
\definecolor{myblue}{RGB}{80,80,160}
\definecolor{mygreen}{RGB}{80,160,80}
\newcommand{\bipartiteTwo}{
\begin{figure}[ht!]
\centering
\begin{tikzpicture}[thick,
  every node/.style={draw,circle},
  fsnode/.style={fill=myblue},
  ssnode/.style={fill=mygreen},
  every fit/.style={ellipse,draw,inner sep=-2pt,text width=2cm}, ,shorten >= 3pt,shorten <= 3pt
]
}
\newcommand{\Ind}[1]{\mathbbm{1}\{#1\}}
\newcommand{\Uset}{U}
\newcommand{\Zset}{\mathcal{Z}}
\newcommand{\Cset}{\mathcal{C}}
\newcommand{\Zobs}{Z^{\mathrm{obs}}}
\newcommand{\Yobs}{Y^{\mathrm{obs}}}
\newtheorem{proposition}{Proposition}
\let\ORGsidewaysfigure\sidewaysfigure
\let\ORGendsidewaysfigure\endsidewaysfigure
\newcommand{\appropto}{\mathrel{\vcenter{
  \offinterlineskip\halign{\hfil$##$\cr
    \propto\cr\noalign{\kern2pt}\sim\cr\noalign{\kern-2pt}}}}}
\newcolumntype{Y}{>{\centering\arraybackslash}X}
\providecommand{\U}[1]{\protect\rule{.1in}{.1in}}
\providecommand{\U}[1]{\protect\rule{.1in}{.1in}}
\providecommand{\U}[1]{\protect\rule{.1in}{.1in}}
\providecommand{\U}[1]{\protect\rule{.1in}{.1in}}
\theoremstyle{plain}
\newcommand{\comm}[1]{\text{\footnotesize [#1]}}
\newcommand{\tobs}{T^{\mathrm{obs}}}
\newcommand{\qa}{q_{\alpha}}
\newcommand{\Qa}{\hat q_{\alpha, m}}
\newcommand{\Fom}{\hat F_{0, n, m}}
\newcommand{\Udom}{\mathbb{U}}
\newcommand{\Zdom}{\mathbb{Z}}
\newcommand{\Ydom}{\mathbb{R}}
\newcommand{\Edom}{\mathbb{F}}
\newcommand{\Real}{\mathbb{R}}
\newcounter{remark}[section]
\newcommand{\Tobs}{T^{\mathrm{obs}}}
\newcommand{\pr}{P}
\newcommand{\obs}{\mathrm{obs}}
\renewcommand{\aa}{\mathsf{a}}
\newcommand{\bb}{\mathsf{b}}
\newcommand{\Hf}{H_0^{\mathcal{F}}}
\newcommand{\Hab}{H_0^{\{\aa, \bb\}}}
\newcommand{\Haa}{H_0^{\{\aa\}}}
\newcommand{\Hbb}{H_0^{\{\bb\}}}
\newcommand{\EX}{\mathbb{E}}
\newcommand{\Hex}{H_0^{\mathrm{ex}}}
\newcommand{\Fset}{\mathcal{F}}
\newcommand{\GfF}{G^{\Fset}_f}
\newcommand{\ThmOne}{
Consider the null hypothesis $\Hf$ in Equation~\eqref{eq:H0}. 
Construct the corresponding null exposure graph, $\GfF$, and compute a biclique
decomposition~$\Cset$. Let $C\in \Cset$ be the unique biclique such that $\Zobs \in C$.
Then, the randomization test described in Procedure~\ref{proc:test} is valid conditionally at any level, i.e., 
the p-value defined in Equation~\eqref{eq:pval} satisfies:
$$
\EX\left(\Ind{\mathrm{pval}(\Zobs, \Yobs; C) \le \alpha} \mid \Cset, \Hf\right) = \alpha,
$$
where the expectation is with respect to the design, $\pr(\Zobs)$, 
and $\alpha\in(0, 1)$.
}
\newcommand{\ThmTwo}{
Consider the intersection hypothesis  $\Hinter$ defined in~\eqref{eq:H0_inter}.
Then, the biclique test in Procedure~\ref{proc:test} operating on the biclique decomposition $\Cset$ from Procedure~\ref{proc:multi} is a conditionally valid test for $\Hinter$.
}
\newcommand{\Fo}{F_{0, n}}
\newcommand{\Fn}{F_{1, n}}
\newcommand{\ThmPowerText}{
In Procedure~\ref{proc:test}, let $C = (U, \Zset)\in\Cset$ be the conditioning biclique, where $\Cset$ is a fixed biclique decomposition. Let  $|C| = (n, m)$ denote the size of the biclique, 
with $n=|U|$ and $m=|\Zset|$.
Let the randomization distribution and the null distribution be denoted, respectively, by
\begin{align}\label{eq:def_hatF}
t(Z, Y(Z); C) \sim \hat F_{1, n, m},~\text{and}~t(Z, \Yobs; C) \sim \hat F_{0, n, m},
~\text{where}~Z\sim P(Z | C).
\end{align}
Suppose that for any fixed $n>0$:
\begin{enumerate}
\item[(A1)]  There exist continuous cdfs $F_{1, n}$ and $F_{0, n}$ such that $\hat F_{1, n, m}$ and $\hat F_{0, n,m}$ in~\eqref{eq:def_hatF} are  the empirical distribution functions over $m$ independent samples 
from $F_{1, n}$ and $F_{0, n}$, respectively.

\item[(A2)] There exists $\sigma_n>0$, and a continuous cdf $F$, such that 
$\Fo(t) = F(t/\sigma_n)$, for all $t\in\mathbb{R}$.
\item[(A3)] The treatment effect~(e.g., spillover contrast) is additive, that is, there exists a fixed $\tau\in\mathbb{R}$ such that $\Fn(t) = \Fo(t-\tau)$, for all $t\in\mathbb{R}$.
\end{enumerate}
Let $\phi_{n, m}$ be the power of the biclique randomization test conditional on a biclique of size $(n, m)$, i.e., 
$$
\phi_{n, m} = \EX \left(\Ind{\mathrm{pval}(Z, \Yobs; C) \le \alpha} \mid |C|=(n, m)\right).
$$ 

Fix any small $\delta>0$. Then, for large enough $m$,
\begin{equation}\label{eq:phi_power}
\phi_{n, m}
\ge 1- F(F^{-1}\big(1-\alpha) - \tau/\sigma_n\big)- O(m^{-0.5+\delta}).\nonumber
\end{equation}
If, in addition, $\sup_{x\in\mathbb{R}} |F(x) - 1/(1+e^{-b x})| \le \epsilon$ for fixed $b, \epsilon>0$, and 
$\sigma_n = O(1/\sqrt{n})$, then for some fixed $A, a>0$
}
\newcommand{\ThmPower}{
\ThmPowerText
\begin{equation}\label{eq:power}
\phi_{n, m}
\ge \frac{1}{1 + A e^{- a\tau \sqrt{n}}}  - O(m^{-0.5+\delta}) - \epsilon.
\end{equation}
}
\newcommand{\ThmPowerApp}{
\ThmPowerText
\begin{equation}
\phi_{n, m}
\ge \frac{1}{1 + A e^{- a\tau \sqrt{n}}}  - O(m^{-0.5+\delta}) - \epsilon.\nonumber
\end{equation}
}
\newcommand{\ThmDesignAid}{
The randomization implied by Procedure~\ref{proc:test2} is valid at the nominal level for the null hypothesis of Equation~\eqref{eq:H0cluster} 
under the two-stage experimental design of Section~\ref{sec:clustered_sim}.
}
\newcommand{\commentEq}[1]{
~\text{\footnotesize~[ #1 ]}
}
\begin{document}

\title{A Graph-Theoretic Approach to Randomization Tests of \\ 
Causal Effects
Under General Interference\thanks{email: \texttt{David.Puelz@chicagobooth.edu}. We thank Peng Ding, Connor Dowd, Colin Fogarty, 
Sam Pimentel, Stephen Raudenbush, Paul Rosenbaum, and Santiago Tob{\'o}n, as well as conference participants at Polmeth, Advances with Field Experiments, Design and Analysis of Experiments, 
Atlantic Causal Inference, and seminar  participants at the University of Chicago for helpful comments and discussion. AF gratefully acknowledges support from a National Academy of Education/Spencer Foundation postdoctoral fellowship. PT is grateful for the John E. Jeuck Fellowship at University of Chicago, Booth School of Business.}\vspace{12mm}}

\author{
David Puelz\thanks{The University of Chicago, Booth School of Business}
\and 
Guillaume Basse\thanks{Stanford University}
\and
Avi Feller\thanks{The University of California, Berkeley}
\and
Panos Toulis$^{\dagger}$}

\date{}
\maketitle
\vspace{5pt}

\begin{center}
\textsc{This Version: \today \\ \vspace{2mm}
}
\end{center}

\thispagestyle{empty}

\singlespacing
\begin{abstract}
Interference exists when a unit's outcome depends on another unit's treatment assignment.
For example, intensive policing on one street could have a spillover effect on neighboring streets.
Classical randomization tests typically break down in this setting because many null hypotheses of interest are no longer sharp under interference.
A promising alternative is to instead construct a conditional randomization test on a subset of units and assignments for which a given null hypothesis is sharp. 
Finding these subsets is challenging, however, and existing methods are limited to special cases or have limited power.
In this paper, we propose valid and easy-to-implement randomization tests for a general class of null hypotheses under arbitrary interference between units.  
Our key idea is to represent the hypothesis of interest as a bipartite graph between units and assignments, and to find an appropriate biclique of this graph.  
Importantly,  the null hypothesis is sharp within this biclique, enabling conditional randomization-based tests. We also connect the size of the biclique to statistical power.
Moreover, we can apply off-the-shelf graph clustering methods to find such bicliques efficiently and at scale.
We illustrate our approach in settings with clustered interference and show advantages over methods designed specifically for that setting.  We then apply our method to a large-scale policing experiment in Medell\'in, Colombia, where interference has a spatial structure.
\end{abstract}

\textbf{Keywords:} randomization test, interference, causal inference, networks, biclique.

\setcounter{page}{0}\thispagestyle{empty}\baselineskip18.99pt\newpage 

\newpage
\onehalfspacing

\section{Introduction}

The assumption of  ``no interference'' between units~\citep{cox1958planning} underlies 
most classical approaches to causal inference.
The key premise is that a unit's treatment does not affect another unit's outcome, so that 
each unit's outcome depends only on its own treatment status.
This is implausible in many settings, however, as it precludes peer effects, treatment spillovers, and other forms of treatment interference~\citep{halloran2016dependent}.

Classical approaches often break down under interference. The canonical Fisher Randomization Test~(FRT), for example, is valid for testing the global sharp null hypothesis of no treatment effect but fails when testing non-sharp null hypotheses, such as tests of treatment spillovers. Several recent proposals address this issue by restricting the randomization test to a subset of units, often called focal units,
and a subset of assignments~\citep{aronow2012general,athey2018exact,basse2019}. 
The central idea is that, conditional on these subsets, the specified null hypothesis is sharp for every focal unit, and thus the \emph{conditional} randomization test is valid. 
These randomization-based approaches have many advantages over model-based alternatives~\citep{manski2013identification, graham2008identifying, jackson2010social, graham2005identification, brock2001interactions, blume2015linear, toulis2013estimation, bowers2013reasoning, auerbach2016identification, leung2015two} because  they make minimal assumptions and are typically more robust.
However, existing randomization-based methods are usually tailored to a specific interference structure, 
which limits their power and widespread adoption.
For example,~\citet{basse2019} develop a permutation-based approach tailored to clustered interference, which cannot be easily generalized, and~\citet{athey2018exact} restrict the type of conditioning information used for constructing the test.

In this paper, we propose a general procedure for identifying subsets of units and assignments for which the null hypothesis of interest is sharp, and use it to develop a method for randomization tests under arbitrary interference.
Our key methodological contribution is the {\em null exposure graph}. 
This is a bipartite graph with units and assignments as the nodes, and an edge between any unit-assignment pair if we observe a 
null exposure, i.e., a treatment exposure specified in the null hypothesis, for the unit and assignment in the pair.
A biclique of the null exposure graph is a complete bipartite subgraph in which all units on one side are connected to all assignments on the other side.
Crucially for testing, this biclique constitutes a subset of units and assignments for which the null hypothesis is sharp and a valid randomization test is possible.


Our proposed method offers  three main benefits over existing approaches. 
First, it allows conditioning 
on the observed assignment, which can increase power over methods that suggest random conditioning~\citep{athey2018exact, aronow2012general}.
Second, since a null hypothesis uniquely determines a null exposure graph,
our method is constructive and defines, step by step, how to conduct randomization tests under 
general forms of interference. This is an improvement over methods that are tailored to specific
patterns of interference~\citep{basse2019}.  Finally, our method  translates questions of computation and statistical power 
into properties and operations on the null exposure graph, 
separating these considerations from test validity. 
Our approach is therefore modular, and  can benefit from separate advances in graph algorithms for biclique computation.

To illustrate our method, we consider two natural structures of interference: clustered interference and spatial interference.
Under clustered interference, units can be separated into well-defined clusters, such as households or classrooms, where we assume that units interact within clusters but not between clusters. Our motivating example here is a two-stage randomized trial of a student attendance intervention in which households are assigned to treatment or control and then, within each treated household, one student is randomly treated; see~\citet{basse2018analyzing}.
Under spatial interference, we assume that interactions ``pass through'' neighboring units, but without the simpler structure of clustered interference. Here, we focus on re-analyzing a large-scale experiment in Medell{\'i}n, Colombia studying the impact of ``hotspot policing'' on crime \citep{collazos2019hot}.
Our analysis of spatial interference considers hypotheses on 
any specified spillovers, which 
differs from other design-based methods
that consider a marginal spillover effect over the design~\citep{aronow2019design}.
We also extend our framework to test null hypotheses that restrict the level of interference between units, such as the null hypothesis of no interference.

Our paper is structured as follows. 
Section~\ref{sec:setup} introduces the problem setup and all necessary notation. 
Section~\ref{sec:method} outlines our methodology, comprised of the 
null exposure graph~(Section~\ref{sec:null_graph}) and biclique decompositions of the graph~(Section~\ref{sec:cliques}). 
Section~\ref{sec:test} presents the proposed randomization test.
Section~\ref{sec:power_main} gives the main results on statistical power.
We then illustrate our method in two applications. 
Section~\ref{sec:clust} considers settings with clustered interference, and Section~\ref{sec:spatial} considers settings with spatial interference, specifically in the context of a large-scale policing experiment in Medell\'in, Colombia.  
Finally, Section~\ref{sec:composite} extends our results to complex null hypotheses. The Appendix contains additional empirical and theoretical results, as well as the proofs.

\section{Overview of causal inference under interference and problem setup}
\label{sec:setup}

\subsection{Setup and notation}
\label{sec:setup_sub}
Consider a finite population of $N$ units indexed by $i=1, \ldots, N$. Let $Z_i$ denote unit $i$'s treatment, which we assume to be binary 
without loss of generality. Let $Z  = (Z_1, Z_2, \ldots, Z_N)\in\{0, 1\}^N$ denote the population treatment assignment, and $\pr(Z)$ its distribution according to the experiment design. We focus on experimental studies,  and so $P(Z)$ is known.
Let $Y_i(z)\in\Ydom$ denote the  potential outcome of unit $i$ under 
population assignment $z\in\{0, 1\}^N$.
For the observed quantities we use the modifier ``$\mathrm{obs}$'' as a superscript for emphasis. Thus, 
$\Zobs\in\{0, 1\}^N$ is the observed population treatment, and
$\Yobs = (Y_1(\Zobs), \ldots, Y_N(\Zobs))\in\mathbb{R}^N$ is the vector of observed outcomes. In the randomization framework, $\Zobs$ is random according to the design, $\Zobs\sim\pr(\Zobs)$, whereas the potential outcomes are fixed.
Let $\Udom = \{1, \ldots, N\}$ denote the set of all units,
 and $\Zdom = \{z\in\{0, 1\}^N : \pr(z) > 0\}$ denote the set of 
 all population treatment assignments supported by the design.

 The main challenge is that causal inference is infeasible without restrictions on the potential outcomes; unrestricted, each unit has $2^N$ possible potential outcomes. For instance, the common \emph{no interference} assumption states that the outcome for each unit depends only on its own
 treatment assignment, so each unit has only two potential outcomes.

 When this assumption is implausible, one strategy is to define a \emph{treatment exposure}, which is a low-dimensional summary of $Z$, such as ``spillovers'' or ``peer effects".
In particular, we assume a finite set of possible treatment exposures, $\Edom = \{\aa, \bb, ...\}$, 
and exposure mapping functions, $f_i :\Zdom\to\Edom$, 
for each unit $i\in\Udom$. 
Thus, the definition of $f_i$ is application-specific, as it needs to consider the interference structures that are appropriate for a given problem; see \citet{aronow2017estimating}. 
To make this concrete, we briefly introduce three examples.

\begin{example}[Clustered interference]\label{example:clustered}
Following the setting in \citet{basse2019}, each unit belongs to a fixed cluster, such as individuals within households. The key assumption is that units interact within each cluster, but not between clusters, also known as partial interference \citep{sobel2006randomized}. 
In the experiment, clusters are assigned to treatment or control, and, within treated clusters, one unit is randomly assigned to treatment.
Here, the exposure function may be defined as
$f_i(z) = z_i + \sum_{j\in[i]} z_j$, where $[i]$ denotes all units in $i$'s cluster. Thus, $\Edom=\{0, 1, 2\}$, and there are three exposure levels: $f_i=0$ denotes a control unit in a control cluster; 
$f_i=1$ denotes a control unit in a treated cluster; and $f_i=2$ denotes a treated unit 
in a treated cluster.~We explore this setting further in Section~\ref{sec:clust}. 
\end{example}

\begin{example}[Spatial interference]
In this setting, units interact with one another locally through shared space, like street segments in a city. 
The goal is to test whether the outcomes of untreated units close to treated units (i.e., spillover units) are affected by the treatment. For example, let $g_{ij}^r = \Ind{d(i, j) < r}$ denote whether units $i$ and $j$ are within distance $r$ of each other, where $d(i, j)$ denotes their spatial distance.
We may define $f_i(z) = (w_i, z_i)$, where 
$w_i = \Ind{\sum_{j\neq i} g_{ij}^rz_j > 0 }$ indicates whether $i$ 
is within a radius $r$ of any other treated unit. 
Here, $\Edom=\{(0, 0), (0, 1), (1, 0), (1, 1)\}$, and there are four exposure levels. 
We explore this setting further in Section~\ref{sec:spatial}.
\end{example}

\begin{example}[Extent of interference]\label{example_extent}
In this setting, units are linked in a social network. Specifically, we assume an integer-valued function $d(i, j)\in\mathbb{N}^+$  that measures the distance between units $i$ and $j$ in the network, 
such that  $d(i, j)$ is symmetric, $d(i, i)=0$ for each $i$, and $d(i, j)=\infty$ if $i$ and $j$ are not connected.
For some integer $k\ge 0$, let $G^k = (g_{ij}^k)$ be the social network such that 
$g_{ij}^k = \mathbb{I}\{d(i, j) \le k\}$, and define $f_i(z) = G_i^k z$, where $G_i^k$ 
is the $N\times N$ diagonal matrix having as diagonal the $i$-th row of matrix $G^{k}$. 
Here, the exposure for every unit is an $N$-length binary vector,
and so $\Edom \subseteq\{0, 1\}^N$. 
With this setup, different values of $k$ restrict the extent of interference.
For example, when $k=0$, the exposure $f_i(z)$ for some unit  $i$  depends only on 
individual treatment $z_i$. This can describe settings where the extent of interference is minimal. However, when $k=1$, $f_i(z)$ depends on the individual treatment of $i$ and the treatments of 
$i$'s immediate neighbors; when $k=2$, $f_i(z)$ additionally depends on the treatments of second-order neighbors, and so on. Thus, for sufficiently large $k$, $f_i(z)$ could depend on the treatments 
of all units connected to $i$.
We explore this setting further in Section \ref{sec:composite}.
\end{example}

\subsection{Null hypotheses on exposure mappings}
\label{sec:H0}
Our primary goal is to test null hypotheses that specify relations between treatment assignments and potential outcomes. First, consider the following general hypothesis:
\begin{align}
\label{eq:H0}
\Hf : Y_i(z) = Y_i(z')~\text{for all}~i=1, \ldots, N,~\text{and any}~z, z'\in\Zdom~\text{such that}~f_i(z), f_i(z')\in
\Fset\subseteq\Edom.
\end{align}
%
In words, $\Hf$ states that for each unit the outcomes under any exposure in $\Fset$ are 
identical, regardless of the particular population treatment assignment, $z \in \Zdom$. 
The formulation in Equation~\eqref{eq:H0} is quite general, and can express different kinds of hypotheses under interference for various choices of $\Fset$.
At one extreme, $H_0^{\Edom}$ is the ``global null hypothesis'' of no treatment effect. This 
is a sharp null hypothesis that
can be tested by the classical Fisher randomization test, since all potential outcomes can be imputed under the null. At the other extreme, $H_0^{\emptyset}$ is degenerate and is fundamentally untestable.

An important special case is the singleton null hypothesis, $H_0^{\{\aa\}}$, where $\Fset = \{\aa\}$. Formally:
\begin{align}
\label{eq:H0_singleton}
H_0^{\{\aa\}} : Y_i(z) = Y_i(z')~\text{for all}~i=1, \ldots, N,~\text{and any}~z, z'\in\Zdom~\text{such that}~f_i(z) = f_i(z') = \aa.
\end{align}
This is a \emph{stability} or \emph{exclusion restriction} hypothesis: under $H_0^{\{\aa\}}$, the potential outcomes for exposure $f_i(z) = \aa$ are only functions of exposure $\aa$ and not the underlying population treatment assignment, $z$.  Versions of this statement often appear as assumptions about properly specified exposure mappings; see~\citet{aronow2017estimating, basse2019randomization}.
By itself, this singleton hypothesis $H_0^{\{\aa\}}$ is not typically testable. However, it is the building block for many more general hypotheses. We focus on two classes: (1) contrast hypotheses that compare two exposures, e.g., $\Hab$; and (2) intersection hypotheses that consider sets of singleton hypotheses, e.g., $\bigcap_{\aa\in\Edom} \Haa$. 

First, to compare two exposures, $\aa$ and $\bb$, we set $\Fset = \{\aa, \bb\}$, 
and refer to the resulting hypothesis $\Hab$ as a {\em contrast hypothesis}:
\begin{align}
\label{eq:H0_contrast}
\Hab : Y_i(z) = Y_i(z')~\text{for all}~i=1, \ldots, N,~\text{and any}~z, z'\in\Zdom~\text{such that}~f_i(z), f_i(z')\in\{\aa,\bb\}.
\end{align}
A subtle issue with such contrast hypotheses is that $\Hab$ implies 
both $\Haa$ and $\Hbb$. As such, rejecting $\Hab$ does not necessarily mean that 
treatment exposures $\aa$ are $\bb$ are different, but it could be that 
either $\Haa$ or $\Hbb$ is not true. We will return to this issue of interpretation in the application of Section~\ref{sec:FRT_med}.

\addtocounter{example}{-3}
\begin{example}[Clustered interference (cont.)]
In this setting, we consider testing $\Hab$ with 
$\aa = 0$ and $\bb = 1$; that is, 
we test whether there is a spillover effect on control units
from a treated unit in the same cluster.
%
\end{example}
\begin{example}[Spatial interference (cont.)]
In this setting, we consider testing $\Hab$ with 
$\aa = (0, 0)$ and $\bb = (1, 0)$ for some value of the distance threshold, $r$. That is, we test whether there is a spillover effect on an untreated unit from having one or more treated units within the specified distance, $r$~(e.g., $r=125$m in the application of Section~\ref{sec:spatial}).
\end{example}

An alternative direction is to instead consider intersections of singleton hypotheses. 
For example, the intersection hypothesis of all possible singletons, namely $\bigcap_{\aa\in\Edom} \Haa =: \Hex$, is a full {\em exclusion restriction} condition on the exposures:
\begin{align}\label{eq:H0_ex}
\Hex: Y_i(z) = Y_i(z')~\text{for all}~i=1, \ldots, N,~\text{and any}~z, z'\in\Zdom~\text{such that}~f_i(z) = f_i(z').
\end{align}
This covers many standard hypotheses under interference 
in the literature~\citep{toulis2013estimation, bowers2013reasoning, rosenbaum2007interference, aronow2012general, basse2019randomization,basse2019,athey2018exact}. 
In Section~\ref{sec:composite}, we show that our main procedure can also test intersection null hypotheses of the form $H_0^{\Fset_1} \cap  H_0^{\Fset_2}$, where $\Fset_1 \cap \Fset_2 = \emptyset$. 


\begin{example}[Extent of interference]\label{example_extent}
In the social network setting, we consider
the null hypothesis that for all $i=1, \ldots, N$:
\begin{equation}\label{eq:H0_extent}
 Y_i(z) = Y_i(z')~\text{for any}~z, z'\in\Zdom~\text{such that}~
 z_j=z_j'~\text{for all}~j=1, \ldots, N,~\text{for which}~d(i, j) \le k.
\end{equation}
This hypothesis posits that a unit's outcome may depend 
only on treatments of units up to $k$ hops away in the social network but no further.
This is standard in the literature; see, for example, Hypotheses 2, 3, \& 4 in \citet{athey2018exact}.
The hypothesis in Equation~\eqref{eq:H0_extent} is equivalent to $\Hex$ with an exposure function as described in Example~\ref{example_extent} of Section \ref{sec:setup_sub} above.
\end{example}

As we discuss next, the main challenge in testing these hypotheses
is that the treatment exposures in $\Fset$ are not independently manipulated in the experiment---the design only 
determines the joint distribution of the population treatment assignment, $Z$. This precludes a simple randomization test for $\Hf$ since we cannot impute the outcomes of 
units assigned to exposure levels other than those in  $\Fset$.\footnote{
In Appendix~\ref{appendix:H0}, we discuss an equivalent formulation 
of this imputability problem. Specifically, we express $\Hf$ as a composite null hypothesis on the full schedule of potential outcomes. 
Then, the imputability problem 
becomes essentially a problem of identification, where several alternative hypotheses may lead to a randomization distribution that is equal to the null. }
%
Similar to other approaches discussed below, we propose to address this issue by appropriately conditioning the randomization test.  A contribution of this paper is to constructively find such conditioning for general null hypotheses.
We give a general overview of conditional randomization tests, and in Section~\ref{sec:method} we describe our proposed conditioning method based on the concept of null exposure graphs to test $\Hf$. 

\subsection{Conditional randomization tests under interference: A review}
\label{sec:workhorse}

We briefly review the general framework proposed by \cite{basse2019} for valid
randomization tests under interference. This framework builds on the key insight first formulated by
\cite{aronow2012general} and developed by \cite{athey2018exact} that, although
the null hypothesis $\Hf$ is not sharp in general, it can be ``made sharp'' if we restrict
our attention to a well chosen subset of units, $\Uset\subseteq \Udom$, and subset of assignments, $\Zset\subseteq\Zdom$.
\citet{basse2019} formalized the idea as sampling a {\em conditioning event}, $C = (\Uset, \Zset)$, from a carefully constructed distribution $P(C | \Zobs)$, called a
{\em conditioning mechanism}, and then running a test conditional on $C$.
This requires the use of a test statistic {\em restricted} on $C$:
\begin{definition}[Restricted test statistic]
Let $C = (\Uset, \Zset)$ be a conditioning event, where $\Uset\subseteq \Udom$ and $\Zset\subseteq\Zdom$. A test statistic, 
$t(z, y; C) : \{0, 1\}^N \times \Real^n \to \Real$  is restricted on $C$ iff:
$$
t(z', y'; C) = t(z, y; C),~\text{for all}~y,y',\text{and } z,z'\in \Zset, ~\text{such that}~z_U'=z_U~\text{and}~y_U'=y_U.
$$
Here, subscript ``$U$'' denotes the corresponding subvector 
restricted only to units $U$ in event $C$.
\end{definition}

\begin{theorem}[\cite{basse2019}]
\label{thm1}
Let $H_0$ be a null hypothesis. 
For some conditioning event $C=(\Uset,\Zset)$, where $U\subseteq \Udom$ and $\Zset\subseteq \Zdom$, let $t(z, y; C)$ be a test statistic restricted on $C$. 
Suppose that 
the test statistic is conditionally imputable under $H_0$ given $C$, i.e.,  
$Y_U(z') = Y_U(z)$
for any $z, z'\in\Zset$ such that
$P(C|z) > 0~\text{and}~P(C | z') >0$.
The p-value obtained from the following procedure:
  \begin{enumerate}
  \item Draw $Z^{\obs} \sim \pr(Z^{\obs})$, and observe $Y^{\obs} = Y(Z^{\obs})$.
  \item Draw $C \sim P(C | \Zobs)$, and compute $T^{\obs} = t(\Zobs, Y^{\obs}; C)$.
  \item Compute
  $\text{p-value} = \EX
\left[\Ind{t(Z, Y^{\obs}; C) > T^{\obs}} | C \right]$,
    where the expectation is with respect to 
    the correct randomization distribution, $P(Z |C) \propto 
    P(C | Z) \pr(Z)$,
  \end{enumerate}
  is valid conditionally and marginally.
\end{theorem}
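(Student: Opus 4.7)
The plan is to show that, conditional on the sampled conditioning event $C=(\Uset,\Zset)$, the observed test statistic $T^{\obs}$ and a fresh randomization draw $t(Z, \Yobs; C)$ with $Z\sim P(Z\mid C)$ have the same conditional distribution; both conditional and marginal validity of the p-value then follow from a standard tail-probability argument.

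First, I would establish the key functional identity: under $H_0$ with conditional imputability on $C$, and since $t(\cdot,\cdot; C)$ is restricted on $C$, the two maps $z \mapsto t(z, \Yobs; C)$ and $z \mapsto t(z, Y(z); C)$ coincide on the support $\{z\in\Zset : P(C\mid z) > 0\}$. Indeed, conditional imputability gives $Y_{\Uset}(z) = Y_{\Uset}(\Zobs)$ for every such $z$; since $\Yobs_{\Uset} = Y_{\Uset}(\Zobs)$ by definition and since the restriction property forces $t(z, y; C)$ to depend only on $(z_{\Uset}, y_{\Uset})$ for $z\in\Zset$, we may replace $\Yobs$ by $Y(z)$ in the statistic without changing its value.

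Second, I would invoke Bayes' rule on the conditioning mechanism: $P(\Zobs = z \mid C) \propto P(C\mid \Zobs = z)\, \pr(\Zobs = z)$, which is exactly the randomization distribution $P(Z\mid C)$ used in step 3 of the procedure. Combined with the identity in step one, this shows that, conditional on $C$, the random variables $T^{\obs} = t(\Zobs, \Yobs; C)$ and $t(Z, \Yobs; C)$ (with $Z\sim P(Z\mid C)$ independent of $\Zobs$ given $C$) are identically distributed. The p-value is thus a right-tail probability of $T^{\obs}$ under its own conditional distribution, so the standard fact that $1 - F(T) \ge_{\mathrm{st}} \mathrm{Uniform}(0,1)$ when $T \sim F$ yields $\EX\!\left[\Ind{\mathrm{pval} \le \alpha} \mid C\right] \le \alpha$. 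Marginal validity follows immediately by the tower property.

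The main obstacle will be the careful bookkeeping around the imputability assumption: one has to verify that its support qualifier ($P(C\mid z)>0$ and $P(C\mid z')>0$) exactly matches the support of $P(Z\mid C)$, so that the functional identity in step one holds on the entire set over which the p-value integrates. A minor technical point is the treatment of ties under the strict ``$>$'' inequality inside the p-value: strict inequality makes the p-value weakly conservative, which already suffices for the validity claim as stated.
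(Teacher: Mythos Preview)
The paper does not supply its own proof of this statement; Theorem~1 is quoted as a result from the cited reference and is used as a black box in the appendix to verify the two conditions (imputability and correct randomization distribution) that drive the paper's own validity result, Theorem~2. So there is no in-paper proof to compare against.

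That said, your proposal is correct and is the standard argument for results of this type: reduce, via imputability plus the restriction property, the map $z\mapsto t(z,\Yobs;C)$ to the deterministic map $z\mapsto t(z,Y(z);C)$ on the support of $P(Z\mid C)$; then identify the conditional law of $\Zobs$ given $C$ with the randomization distribution by Bayes' rule, so that $T^{\obs}$ and a fresh draw $t(Z,\Yobs;C)$ are exchangeable given $C$; and conclude via the usual tail-probability bound. The ``obstacle'' you flag is indeed the only real subtlety: the argument implicitly requires that the support of $P(Z\mid C)$ be contained in $\Zset$, so that both the restriction property of $t$ and the imputability hypothesis (which is stated only for $z,z'\in\Zset$) apply on the entire range of integration. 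This is natural given that $C=(\Uset,\Zset)$ is meant to restrict attention to $\Zset$, but it is worth stating explicitly.
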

Conditional validity means that, under any conditioning event $C$, the test in the third step of Theorem 1 has the correct level (i.e., has Type-I error probability equal to $\alpha$ under the null)  with respect to the conditional randomization distribution of $\Zobs$ given $C$.
    Marginal validity means that the test has the correct level marginally over $C$, and is thus weaker than conditional validity. Indeed, it is possible that the the test does not have the correct level for some conditioning events, but nonetheless has correct Type-I error on average over $C$.
    On the other hand, if the test is conditionally valid then it is also marginally valid.
    
The conditioning event $C$ is an abstract device used to construct valid randomization tests and is generated from the conditioning mechanism $P(C | \Zobs)$. Moreover, the analyst can construct the conditioning mechanism of their choosing so that $C$ depends stochastically or deterministically on $\Zobs$.  A key contribution of this paper is to propose an approach such that $C$ is easily constructed.

We can see from Theorem~\ref{thm1} that there are two main challenges in constructing a conditioning mechanism that leads to valid conditional randomization tests.
First, the test statistic should be imputable under the null hypothesis $H_0$. Generally,
this means that based only on the observed value of the outcomes, $\Yobs$, we can compute the null distribution of the test statistic $t(z, Y(z); C) = t(z, \Yobs; C)$ that is induced by
the randomization distribution, $P(Z | C)$. 
Second, we must be able to draw samples from this randomization distribution, given by its conditional-marginal decomposition $P(Z|C) \propto P(C |Z) \pr(Z)$ in the third step. Ensuring that
this distribution is computationally tractable can be challenging~\citep{basse2019randomization, basse2019}.

In addition, the null hypotheses of interest are defined for all units $i = 1, \ldots, N$, even if they are defined for a subset of assignments, e.g., $f_i \in \Fset$. As a result, rejecting a null hypothesis for a subset of units logically implies rejecting that same null hypothesis for all units.
Thus, rejecting a null hypothesis with a conditional randomization test has the same interpretation as rejecting a null hypothesis with a (possibly infeasible) unconditional test. In addition, this interpretation holds regardless of the specific conditioning event, although the power of the test might depend on the conditioning mechanism, as we discuss in Section \ref{sec:power_main}.

To gain more intuition, we can also describe the approaches of \cite{basse2019}, \cite{athey2018exact} and \cite{aronow2012general}
within this framework, each corresponding to different choices of the conditioning mechanism. 
In particular,~\citet{basse2019} propose a conditioning mechanism under clustered
interference, such that the implied randomization distribution, $P(Z | C)$, leads
to a permutation test, which is easy to implement. However, their approach does not readily generalize to other settings, such as spatial
interference. The methods of~\citet{athey2018exact} and \citet{aronow2012general} correspond to
conditioning mechanisms of the form $P(C | Z) = P(C)$, where conditioning is either random, or guided by
known auxiliary information, but is not conditioned on the observed assignment. In contrast to the approach of \citet{basse2019}, these methods can be applied in general interference settings. However, they are usually underpowered because they do not use the observed assignment
to do the conditioning; as an extreme example, these approaches cannot test $\Hf$ if there are no units in $C$ 
exposed to any exposure value in $\Fset$. This may happen because $C$ is randomly sampled, and does not use the exposure information 
in $\Zobs$.
We discuss this issue further in Section~\ref{sec:comparison}.

In this paper, we propose a testing method that is both general and powerful.
 In the following sections, we develop the core concepts of our approach, 
 using the framework of conditioning mechanisms presented here.
 Our main contribution is an algorithm that automatically constructs a tractable conditioning mechanism, $P(C|Z)$, through the concept of null exposure graph presented in the upcoming section.
Our proposed randomization test for $\Hf$ is presented later in Section~\ref{sec:test}, and in Section~\ref{sec:comparison} we 
follow up on this discussion of related methods and describe the benefits of our approach.

\section{The null exposure graph and bicliques}
\label{sec:method}
We now introduce some preliminary concepts underlying our test for
$\Hf$ in~Equation~\eqref{eq:H0}. 
The first key idea is to represent the imputability of outcomes under the null
hypothesis through a graph between units and assignments, which we call
{\em the null exposure graph}.
The conditioning event $C$ will then be taken to be a biclique in that graph, 
and the conditioning mechanism $P(C | \Zobs)$ will determine the biclique that
contains $\Zobs$. This transforms the analytical task of defining $P(C | \Zobs)$ into a computational task.

\subsection{The null exposure graph}

\label{sec:null_graph}
The first component of our method is the null exposure
graph, which is a graph that encodes the units' treatment
exposures under different population treatment assignments. 
As we will show in the next section, the null exposure graph determines the appropriate conditioning for the randomization
test of $\Hf$.
\begin{definition}[Null exposure graph]
\label{def:nullG}
Let $\Udom=\{1, \ldots, N\}$ and $\Zdom=\{z_1, \ldots, z_J\}$ denote the sets
of units and assignments, respectively. Define the vertex set as
$V = \Udom\cup \Zdom$, and the edge set as 
\begin{equation}\label{eq:E}
E = \big\{(i, z) \in \Udom\times\Zdom : f_i(z)\in\Fset\big\}.
\end{equation}
That is, an edge between unit $i$ and assignment $z$ exists if and only if $i$
is exposed to $\Fset$ under $z$. There are no edges
between units or between assignments. Then, $\GfF=(V, E)$ is the null exposure
graph of $\Hf$ with respect to exposure mapping $f$ and exposure set $\Fset$.
\end{definition}

The null exposure graph is a bipartite graph since there are not edges between units or between assignments. In order to visualize the null exposure graph in a concrete setting, we return to the example of clustered interference~(Example 1).
\addtocounter{example}{-3}
\begin{example}[Clustered interference (cont.)]
For simplicity, suppose we have four units in two clusters, namely $\{1,2\}$
and $\{3, 4\}$; and  that we treat exactly one cluster leading to four possible  assignment vectors, depending on which unit is treated within the treated cluster.
Following Section~\ref{sec:setup}, 
\begin{align}
f_1(z) = 2 z_1 + z_2,~f_2(z) = 2 z_2 + z_1;\nonumber\\
f_3(z) = 2 z_3 + z_4,~f_4(z) = 2 z_4 + z_3.
\end{align}
We test $H_0^{\{0, 1\}}$, i.e., whether outcomes are equal between exposures
$f_i=0$ and $f_i=1$.
Figure~\ref{fig2} displays the null exposure graph for this scenario.  
For example, when $z=(1, 0, 0, 0)$---denoted as population assignment ``1''---unit 1 is treated, and so 
the exposures are as follows:~$f_1=2, f_2=1, f_3=0, f_4=0$.
Since only units 2, 3, and 4 are exposed to the exposure levels in the null hypothesis, we draw edges from assignment ``1'' only to those units.  
This process repeats for all assignments to produce the null exposure graph shown in Figure~\ref{fig2}~(left).
On the right side of Figure~\ref{fig2}, the blue edges connecting units 1 and 4 to assignments ``2'' and ``3'' highlight a complete bipartite subgraph (biclique) of the null exposure graph.  Importantly, within this biclique, all missing potential outcomes are imputable under the null. 
\end{example}

\begin{figure}[t!]

\centering
\begin{minipage}{.45\textwidth}
\begin{tikzpicture}[thick,
  every node/.style={draw,circle},
  fsnode/.style={fill=darkgray},
  ssnode/.style={fill=darkgray},
  every fit/.style={ellipse,draw,inner sep=-2pt,text width=2cm}, ,shorten >= 3pt,shorten <= 2pt
]

\begin{scope}[start chain=going below,node distance=2mm]
\foreach \i in {1,2}
  \node[fsnode,on chain] (f\i) [label=left: \i] {};
\end{scope}

\begin{scope}[yshift=-1.5cm, start chain=going below,node distance=2mm]
\foreach \i in {3,4}
  \node[fsnode,on chain] (f\i) [label=left: \i] {};
\end{scope}

\begin{scope}[xshift=3cm,yshift=0.45cm,start chain=going below,node distance=7mm]

  \node[ssnode,on chain] (s1) [label=right: {1 (1,0,0,0)}] {};
  \node[ssnode,on chain] (s2) [label=right: {2 (0,1,0,0)}] {};
  \node[ssnode,on chain] (s3) [label=right: {3 (0,0,1,0)}] {};
  \node[ssnode,on chain] (s4) [label=right: {4 (0,0,0,1)}] {};

\foreach \i in {1,2,3,4}
  ;
  \end{scope}

\node [label={[label distance=0.52cm]90:Units} ]  {};
\node [label={[label distance=2cm]13:Assignments} ] {};

\draw[ultra thick,gray] (f2) -- (s1);
\draw[ultra thick,gray] (f3) -- (s1);
\draw[ultra thick,gray] (f4) -- (s1);
\draw[ultra thick,gray] (f1) -- (s2);
\draw[ultra thick,gray] (f3) -- (s2);
\draw[ultra thick,gray] (f4) -- (s2);
\draw[ultra thick,gray] (f1) -- (s3);
\draw[ultra thick,gray] (f2) -- (s3);
\draw[ultra thick,gray] (f4) -- (s3);
\draw[ultra thick,gray] (f1) -- (s4);
\draw[ultra thick,gray] (f2) -- (s4);
\draw[ultra thick,gray] (f3) -- (s4);
\end{tikzpicture}
\end{minipage}
\begin{minipage}{.45\textwidth}
\begin{tikzpicture}[thick,
  every node/.style={draw,circle},
  fsnode/.style={fill=darkgray},
  ssnode/.style={fill=darkgray},
  every fit/.style={ellipse,draw,inner sep=-2pt,text width=2cm}, ,shorten >= 3pt,shorten <= 2pt
]

\begin{scope}[start chain=going below,node distance=2mm]
\foreach \i in {1,2}
  \node[fsnode,on chain] (f\i) [label=left: \i] {};
\end{scope}

\begin{scope}[yshift=-1.5cm, start chain=going below,node distance=2mm]
\foreach \i in {3,4}
  \node[fsnode,on chain] (f\i) [label=left: \i] {};
\end{scope}

\begin{scope}[xshift=3cm,yshift=0.45cm,start chain=going below,node distance=7mm]

  \node[ssnode,on chain] (s1) [label=right: {1 (1,0,0,0)}] {};
  \node[ssnode,on chain] (s2) [label=right: {2 (0,1,0,0)}] {};
  \node[ssnode,on chain] (s3) [label=right: {3 (0,0,1,0)}] {};
  \node[ssnode,on chain] (s4) [label=right: {4 (0,0,0,1)}] {};

\foreach \i in {1,2,3,4}
  ;
  \end{scope}

\node [label={[label distance=0.52cm]90:Units} ]  {};
\node [label={[label distance=2cm]13:Assignments} ] {};

\draw[ultra thick,gray] (f2) -- (s1);
\draw[ultra thick,gray] (f3) -- (s1);
\draw[ultra thick,gray] (f4) -- (s1);
\draw[ultra thick,blue,line width=3pt] (f1) -- (s2);
\draw[ultra thick,gray] (f3) -- (s2);
\draw[ultra thick,blue,line width=3pt] (f4) -- (s2);
\draw[ultra thick,blue,line width=3pt] (f1) -- (s3);
\draw[ultra thick,gray] (f2) -- (s3);
\draw[ultra thick,blue,line width=3pt] (f4) -- (s3);
\draw[ultra thick,gray] (f1) -- (s4);
\draw[ultra thick,gray] (f2) -- (s4);
\draw[ultra thick,gray] (f3) -- (s4);
\end{tikzpicture}
\end{minipage}
\vspace{-5px}
\caption{{\em Left:}~Depiction of the null exposure graph for a clustered interference setting with four units and four assignments. The left nodes represent the experimental units, and the right nodes represent the population treatment assignments.  There are only four assignments since we consider treating exactly one cluster.  The graph is bipartite because no units and assignments are connected with other like nodes. 
{\em Right:}~One biclique in the null exposure graph is highlighted in blue.
}

\label{fig2}
\end{figure}

\subsection{Bicliques and biclique decompositions}\label{sec:cliques}

The ``completeness'' of bicliques in  the null exposure graph means that, within a biclique, all units are connected to all assignments.  
As we note above, this implies that for the units and assignments that comprise the biclique, we can impute all potential outcomes for exposures $\Fset$ under $\Hf$.
Below, we give definitions for these important objects and discuss how to partition the set of treatment assignments in the null exposure graph.  The resulting decomposition will be essential for the proposed randomization test. We describe algorithms for finding bicliques in Section \ref{sec:decomposition}.

\begin{definition}[Biclique]\label{def:decomposition}
	A biclique in the null exposure graph, $\GfF=(V, E)$, where $V= \Udom \cup \Zdom$
	and $E$ is defined in Equation~\eqref{eq:E}, is a set-pair $C = (U, \Zset)$, with $U\subseteq \Udom$ and $\Zset\subseteq \Zdom$, 
	such that $(i, f_i(z))\in E$, for every $i\in U$ and every $z\in\Zset$.
	\end{definition}
As an example, $C=(\{1, 4\}, \{z_2, z_3\})$ is a biclique in Figure~\ref{fig2} since it is a complete bipartite subgraph.
Note that we use the same notation ``$C$'' for bicliques as we did for conditioning events in Section~\ref{sec:workhorse}.
This is intentional since our proposed test~(in Section~\ref{sec:test}) will condition on a biclique of the null exposure graph.  

We now formalize the intuition that bicliques in the null exposure graph allow imputation of the missing potential outcomes.

\begin{proposition}\label{prop:clique}
Consider a null exposure graph, $\GfF$, with some biclique $C = (U, \Zset)$.
If $\Zobs\in\Zset$, then
$Y_i(z) = Y_i(\Zobs)$ under $\Hf$, for all $i\in U$ and all $z\in\Zset$.
\end{proposition}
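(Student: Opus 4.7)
The plan is to unpack the definitions and apply $\Hf$ directly, as the conclusion essentially expresses that the biclique condition is the graph-theoretic encoding of imputability under $\Hf$. The argument proceeds in three short steps.

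First, I would fix an arbitrary $i \in U$ and $z \in \Zset$. By Definition~\ref{def:decomposition}, $C = (U, \Zset)$ being a biclique in $\GfF$ means that every unit on the $U$ side is connected to every assignment on the $\Zset$ side, i.e., $(i, z) \in E$ for all such pairs. Invoking the definition of $E$ in Equation~\eqref{eq:E}, this is equivalent to $f_i(z) \in \Fset$.

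Second, I would apply this observation to the observed assignment. Since $\Zobs \in \Zset$ by hypothesis, the same biclique condition gives $f_i(\Zobs) \in \Fset$ for every $i \in U$. Thus both $f_i(z)$ and $f_i(\Zobs)$ lie in $\Fset$ for all $i \in U$ and $z \in \Zset$, and moreover $z, \Zobs \in \Zdom$ since $\Zset \subseteq \Zdom$.

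Third, I would invoke $\Hf$ from Equation~\eqref{eq:H0}: for every unit $i$ and every pair of assignments $z, z' \in \Zdom$ with $f_i(z), f_i(z') \in \Fset$, the null hypothesis equates $Y_i(z) = Y_i(z')$. Applying this with $z' = \Zobs$ yields $Y_i(z) = Y_i(\Zobs)$ for every $i \in U$ and every $z \in \Zset$, which is the claim. There is no real obstacle here; the proof is essentially a direct translation between the graph-theoretic language of Definition~\ref{def:nullG} and the potential-outcome language of $\Hf$, and this equivalence is by design: the null exposure graph was constructed precisely so that bicliques correspond to subsets of units and assignments on which $\Hf$ is sharp.
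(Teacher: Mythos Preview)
Your proof is correct and follows essentially the same approach as the paper's own proof: both fix a unit and an assignment in the biclique, use the biclique property together with the edge-set definition to conclude that the relevant exposures lie in $\Fset$, and then invoke $\Hf$ to equate the potential outcomes. The only difference is cosmetic ordering (you establish $f_i(z)\in\Fset$ first and then specialize to $\Zobs$, whereas the paper does the reverse).
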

\begin{proof}
For any unit $i\in \Uset$ in the biclique, $f_i(\Zobs)\in\Fset$ since $\Zobs$ and $i$ are both in the biclique.
Fix any other assignment $z\in\Zset$ in the biclique, 
then there is also an edge between $i$ and $z$ by definition of the biclique.
This implies that $f_i(z)\in\Fset$ as well, by construction of the null exposure graph in Definition~\ref{def:nullG}. Hence, $f_i(z)\in\Fset$ as well, and so $Y_i(z) = Y_i(\Zobs)$ under $\Hf$  of Equation~\eqref{eq:H0}.
\end{proof}

Proposition~\ref{prop:clique} shows that we can condition our test on a biclique of $\GfF$ because we can impute all the missing potential outcomes in a biclique that contains the observed treatment assignment $\Zobs$. Since there exist many such bicliques in $\GfF$, 
we need to decide  how to condition on one through 
an appropriate conditioning mechanism, $P(C| Z)$.
Choosing this mechanism is not a trivial task, however, because in order to calculate the randomization distribution, $P(z | C)$, we also need to calculate $P(C|z)$, for any assignment $z$ in $C$. This can lead to meaningful statistical and computational challenges, even 
under seemingly reasonable choices of the conditioning mechanism.

 
In this paper, our approach is designed so that the conditioning mechanism operates on a restricted set of bicliques, such that, for each assignment $z\in\Zdom$, there is only one biclique that contains $z$.
The benefit of this approach is that conditioning on the unique biclique containing $\Zobs$ yields a simple mechanism of the form $P(C|\Zobs) = \Ind{\Zobs \in \Zset(C)}$. This approach only needs to test membership of $Z$ in biclique $C$, and, consequently, also yields a simple randomization test in which we only need to randomize the assignments in $C$ weighted by the design, $P(Z|C) \propto \Ind{Z \in \Zset(C)} P(Z) $.

Such special construction of the conditioning mechanism
can be implemented through the concept of a \emph{biclique decomposition}, which is a set of bicliques in the null exposure graph that fully partitions the population set of assignments, $\Zdom$. This concept is defined formally as follows.


\begin{definition}[Biclique Decomposition]

Let $d : \Zdom \to \mathbb{N}$ denote the degree of assignment $z$ in the null exposure graph.
Define $\Zdom_\Fset = \{z \in \Zdom: d(z) > 0\}$ as the set of all assignments connected to at least one unit in the null exposure graph.
A biclique decomposition, $\mathcal C = \{C_1, \ldots, C_K\}$, of the null exposure graph in Definition~\ref{def:nullG} is a finite
set of bicliques, $C_k=(\Uset_k, \Zset_k)$, $k=1, \ldots, K$, such that $\Zdom_\Fset$ is partitioned, i.e.,
$$
\bigcup_k \Zset_k = \Zdom_\Fset,~\text{and}~\Zset_k\cap \Zset_{k'} = \emptyset,~\text{for any}~k \neq k'.
$$
	\end{definition}

Notably, it is not necessary that the population of units, $\Udom$, is partitioned in a biclique decomposition. This is crucial because partitioning both $\Udom$ and $\Zdom_\Fset$ may not be possible, or could 
lead to low-powered tests.


\section{Biclique-based randomization tests}
\label{sec:test}

\subsection{Main method and test validity}

We can now describe our proposed conditional randomization test for $\Hf$ in Equation~\eqref{eq:H0},
which is the key methodological contribution of this paper. Throughout, let $\Cset$ be some fixed biclique decomposition of the null exposure graph $\GfF$. 
Consider the following procedure:
\begin{procedure} \label{proc:test}
For observed assignment $\Zobs \sim P(\Zobs)$:
\begin{enumerate}
	\item Find the unique biclique, $C=(U, \Zset)\in\Cset$, such that $\Zobs\in\Zset$. Consider 
	a test statistic $t(z, y; C)$ restricted to biclique $C$.
\item Calculate the observed value of the test statistic, 
	$\Tobs = t(\Zobs, \Yobs; C)$. 
\item Define the randomization distribution as
$r(Z) \propto \Ind{Z\in\Zset}\cdot \pr(Z).$
	\item Define the randomization p-value as follows:
	\begin{equation}
	\label{eq:pval}
	\mathrm{pval}(\Zobs, \Yobs; C) = 
  \EX_{Z\sim r}\left[\Ind{ t(Z, \Yobs; C) > \Tobs }\right].
	\end{equation}
\end{enumerate}
\end{procedure}
\noindent We describe an algorithm for choosing a unique biclique (Step 1) in Section \ref{sec:decomposition}.
The following theorem shows that this procedure is valid; the proof is in Appendix~\ref{appendix:proofs}.
%
\begin{theorem}\label{thm:main}
\ThmOne
\end{theorem}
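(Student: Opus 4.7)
The plan is to reduce the claim to the general conditional randomization test framework of Theorem~\ref{thm1} (from Basse et al., 2019), already stated in the excerpt. That theorem requires three ingredients: (i) a valid conditioning mechanism $P(C \mid \Zobs)$, (ii) a test statistic that is restricted on $C$, and (iii) conditional imputability of the test statistic under $\Hf$ given $C$. If all three are verified, Theorem~\ref{thm1} immediately gives conditional validity, and the biclique decomposition machinery developed in Section~\ref{sec:method} is precisely what makes these three conditions clean to check.

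First I would define and justify the conditioning mechanism. Because $\Cset$ is a biclique decomposition, the assignment-sets $\Zset_1,\ldots,\Zset_K$ partition $\Zdom_\Fset$; hence, for every observed $\Zobs$ that lies in at least one biclique, there is a \emph{unique} $C=(U,\Zset)\in\Cset$ with $\Zobs\in\Zset$. This uniqueness lets us take the deterministic mechanism $P(C\mid\Zobs) = \Ind{\Zobs\in\Zset(C)}$, which is a bona fide conditional distribution over $\Cset$. Applying Bayes' rule gives
$$
P(Z\mid C) \;\propto\; P(C\mid Z)\,P(Z) \;=\; \Ind{Z\in\Zset}\,P(Z),
$$
which coincides exactly with the randomization distribution $r(Z)$ used in Step~3 of Procedure~\ref{proc:test}. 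So the procedure is sampling from the correct conditional randomization distribution.

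Next I would verify imputability, which is the content of Proposition~\ref{prop:clique}: for the biclique $C=(U,\Zset)$ containing $\Zobs$, under $\Hf$ we have $Y_i(z)=Y_i(\Zobs)$ for every $i\in U$ and every $z\in\Zset$. Because the test statistic is restricted on $C$, it depends on $(z,y)$ only through $(z_U,y_U)$ for $z\in\Zset$. Combining these two facts,
$$
t\big(z,\,Y(z);\,C\big) \;=\; t\big(z,\,Y_U(z);\,C\big) \;=\; t\big(z,\,Y_U(\Zobs);\,C\big) \;=\; t(z,\,\Yobs;\,C),
$$
for every $z\in\Zset$. This is exactly the conditional imputability hypothesis of Theorem~\ref{thm1}, so the p-value of Equation~\eqref{eq:pval} inherits conditional validity: $\EX[\Ind{\mathrm{pval}\le\alpha}\mid \Cset,\Hf]\le\alpha$ for every $\alpha\in(0,1)$.

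The main obstacle is really conceptual rather than technical: once Proposition~\ref{prop:clique} has done the work of translating ``sharpness within a biclique'' into ``imputability of the restricted test statistic,'' the remaining argument is a direct application of the general conditioning theorem. One subtle point I would flag is the equality ``$=\alpha$'' in the statement versus the usual ``$\le\alpha$'' obtained from randomization arguments; the equality holds when the randomization distribution of $t$ conditional on $C$ is continuous (or after the standard tie-breaking convention on the p-value), otherwise the conclusion should be read as $\le\alpha$, which is still the conditional-validity statement the theorem is used for downstream.
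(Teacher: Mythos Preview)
Your proposal is correct and follows essentially the same route as the paper's proof: both reduce to Theorem~\ref{thm1} by (i) taking the deterministic conditioning mechanism $P(C\mid Z)=\Ind{Z\in\Zset(C)}$ induced by the biclique decomposition, (ii) invoking Proposition~\ref{prop:clique} for imputability of the restricted test statistic, and (iii) verifying that $r(Z)$ in Step~3 equals the implied $P(Z\mid C)$ via Bayes' rule. Your closing remark on ``$=\alpha$'' versus ``$\le\alpha$'' is a fair caveat that the paper's proof does not address explicitly.
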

Theorem~\ref{thm:main}~follows from Theorem~\ref{thm1} by recognizing that
 Procedure~\ref{proc:test} describes a conditional randomization test in which the conditioning
event is a biclique $C \in \Cset$, and the conditioning mechanism is
defined as $P\{C=(\Uset, \Zset) | Z\} = \Ind{C \in \Cset} \Ind{Z \in \Zset}$. The proof first
verifies that any test statistic restricted on $C$ is imputable: this
follows from the construction of the biclique and Proposition~\ref{prop:clique}. It then shows
that the randomization distribution $r(Z)$ defined in Step~3 of the procedure is the correct conditional distribution $P(Z | C)$ implied by the 
design and the conditioning mechanism.

The computational tractability of
the randomization distribution,
$r(Z) \propto \Ind{Z \in \Zset} P(Z)$, 
is immediate provided that we can compute $P(Z)$ and enumerate the assignments
$\Zset$ in any biclique $C \in \Cset$. This last condition, however, may be prohibitive
if the support of the design $P(Z)$ is too large 
 since biclique enumeration is NP-hard. 
Fortunately, a small modification of our test can address this issue. The idea is to add a 
preliminary step in Procedure~\ref{proc:test} that subsamples assignments to limit the size of $\Zdom$. 
We describe this modification in Appendix~\ref{appendix:proofs} and show that the procedure is still valid.


Finally, while Procedure~\ref{proc:test} automates the construction of a conditioning mechanism, it
still allows flexibility in the choice of the test statistic. 
For instance, to test $\Hab$, the simplest choice is for the test statistic to denote the difference-in-means between outcomes of units in $C$ exposed to $\aa$ and $\bb$.
However, we may improve power by using test statistics motivated by a model of interference, such as a network regression models with spillovers, which may fit the data better than standard linear regression. See
also~\citet[Section 5.3]{athey2018exact} for an excellent related discussion. We turn to power in Section \ref{sec:power_main}.

\subsection{Comparison to related work}\label{sec:comparison}
In Section \ref{sec:workhorse}, we discussed how our method, along with
those of \cite{aronow2012general} and \cite{athey2018exact}, could be described
within the general framework of \cite{basse2019}. We can also describe these approaches using the framework in this paper, in which each method corresponds to a different approach for selecting bicliques from the null exposure graph.

%
The method of~\citet{basse2019}, for instance, can be viewed as implicitly
considering bicliques of the null exposure graph with possibly overlapping assignments.  In other words, assignments in the bicliques form a covering---not a partition---of
$\Zdom$, such that an assignment may belong to more than one biclique. The
conditioning mechanism is then uniform on the set of all bicliques containing the
observed assignment. This approach works in their particular setting and
results in powerful tests because the conditioning is guided by the observed
assignment, $\Zobs$.  However, the drawback of this approach is that there is
no general way to construct good biclique coverings, instead
requiring case-by-case derivations. Specifically, the covering that is implied
by the conditioning mechanism constructed in \cite{basse2019} works only in
two-stage experiments with clustered interference. 

The approach of \cite{athey2018exact} applies to more general settings, but it may lead to underpowered tests. To understand their approach, we introduce some additional notation.
For any $U \subseteq \Udom$, let  $C(U; z)$ denote the largest biclique of the null exposure
graph that contains only units from $U$ and also contains assignment $z$. Then, the conditioning
mechanism implicitly considered by \citet{athey2018exact} is of the form:
\begin{equation}\label{eq:athey}
  P(C | Z) = P(U) \Ind{C = C(U; Z)},
\end{equation}
where $P(U)$ is specified by the analyst. In other words, the approach of \citet{athey2018exact}
implicitly suggests first to sample units from one side of the bipartite null exposure graph, 
and then to calculate the induced biclique that contains the observed assignment. This approach is
generally applicable, but the random choice of $U$ may lead to underpowered or even
ill-defined tests. This is the case when, for example, $C(U; Z)$ is empty
due to a poor initial choice of $U$.

Our method combines the benefits of both approaches. Unlike the biclique covering
strategy implicit in \cite{basse2019}, our approach is not problem-specific and automates the
construction of conditioning mechanisms.
Also, in contrast to the method in \cite{athey2018exact}, our proposed approach gives concrete guidance on how to properly condition the test to achieve higher power. We discuss these advantages in the context of clustered and spatial interference in Sections \ref{sec:clust} and \ref{sec:spatial}.


\section{Power for biclique-based randomization tests}
\label{sec:power_main}

While the proposed randomization test is guaranteed to be exact, the power of the test depends on many factors, including the size of the study,  the interference structure, and the test statistic. 
In this section, we explore this problem via a theoretical analysis, turning to simulations in later sections. We then propose a \emph{biclique decomposition algorithm} that improves power by finding larger cliques. In Section \ref{sec:design_cliques}, we show how we can further leverage knowledge of the design to improve power.


\subsection{Theoretical results for statistical power}\label{sec:power}
In this section, we analyze the power of the main biclique randomization test of Procedure~\ref{proc:test}. 
%
As mentioned earlier, power results for Fisher randomization tests is challenging because power ultimately
depends on the full schedule of potential outcomes---very few results are available, even
in the simple, non-conditional case~\citep{rosenbaum2010design}. A standard approach in power analyses for
randomization tests is therefore to consider a simplified
model of the problem for which formal results can be obtained~\citep[Chapter 15]{lehmann2006testing}. These results then serve
as a foundation for useful heuristics.
In this spirit, we consider a simplified model solely for the purpose of assessing power under certain conditions --- if these conditions do not hold, the underlying randomization test is still guaranteed to be exact. 

\begin{theorem}\label{thm:power}
\ThmPower
\end{theorem}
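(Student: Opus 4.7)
The plan is to reduce the power expression to a clean two-step approximation: first replace the empirical null cdf $\hat F_{0,n,m}$ by the population cdf $F_{0,n}$ via DKW, then exploit the scale-shift structure in (A2)--(A3) to put the bound in terms of $F^{-1}(1-\alpha)$ and $\tau/\sigma_n$. To start, under (A1) and continuity there are no ties a.s., so the p-value in Equation~\eqref{eq:pval} equals $1 - \hat F_{0,n,m}(\Tobs)$, which gives
\begin{equation*}
\phi_{n,m} = \Pr\!\left(\hat F_{0,n,m}(\Tobs) \ge 1-\alpha\right), \qquad \Tobs \sim F_{1,n}.
\end{equation*}

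Next I would set $\eta_m = m^{-1/2+\delta/2}$ and introduce the DKW event $G_m = \{\sup_t |\hat F_{0,n,m}(t) - F_{0,n}(t)| \le \eta_m\}$. By Dvoretzky--Kiefer--Wolfowitz, $\Pr(G_m^c) \le 2 e^{-2m^{\delta}} = o(m^{-1/2+\delta})$. On $G_m$, the event $\hat F_{0,n,m}(\Tobs) \ge 1-\alpha$ is implied by $F_{0,n}(\Tobs) \ge 1-\alpha + \eta_m$, so
\begin{equation*}
\phi_{n,m} \ge \Pr\!\left(\Tobs \ge F_{0,n}^{-1}(1-\alpha + \eta_m)\right) - O(m^{-1/2+\delta}).
\end{equation*}
Because $\Tobs \sim F_{1,n}$ with $F_{1,n}(t) = F((t-\tau)/\sigma_n)$ and $F_{0,n}^{-1}(u) = \sigma_n F^{-1}(u)$, substitution gives the probability as $1 - F(F^{-1}(1-\alpha + \eta_m) - \tau/\sigma_n)$. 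A mean-value/Lipschitz argument on $F$ (with strictly positive density near $F^{-1}(1-\alpha)$) and on $F^{-1}$ absorbs the $\eta_m$ perturbation into the $O(m^{-1/2+\delta})$ slack, yielding the first, pre-logistic bound.

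For the refinement in Equation~\eqref{eq:power}, I would compose with the logistic hypothesis $F(x) \le (1+e^{-bx})^{-1} + \epsilon$ to get
\begin{equation*}
1 - F(F^{-1}(1-\alpha) - \tau/\sigma_n) \ge \frac{1}{1 + e^{b F^{-1}(1-\alpha)}\, e^{-b\tau/\sigma_n}} - \epsilon.
\end{equation*}
Setting $A = e^{b F^{-1}(1-\alpha)}$ and using $\sigma_n = O(1/\sqrt n)$ to pick $a>0$ with $b/\sigma_n \ge a\sqrt n$ then delivers the stated bound.

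The main obstacle is carefully handling the coupling between $\Tobs$ and the null samples that define $\hat F_{0,n,m}$: under (A1) each side is an i.i.d. empirical cdf, but $\Yobs$ is a function of $\Zobs$, so the $m$ null summands $t(Z_j, \Yobs; C)$ share $\Yobs$ with $\Tobs$ and one of them actually equals $\Tobs$. The clean way to push DKW through is to argue (by exchangeability of the $m$ assignments comprising the biclique) that deleting $\Tobs$ from the null sample changes $\hat F_{0,n,m}$ by at most $1/m$, which is absorbed into the $O(m^{-1/2+\delta})$ term; this reduces the problem to the independent case and lets the remaining steps above go through without modification. A secondary technicality is the quantile-continuity estimate for $F^{-1}$, which needs a density lower bound near $F^{-1}(1-\alpha)$---a mild regularity condition implicit in (A2).
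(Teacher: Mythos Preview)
Your argument is correct and leads to the stated bounds, but it is organized differently from the paper's proof. The paper works through the empirical null quantile $\hat q_{\alpha,m}=\hat F_{0,n,m}^{-1}(1-\alpha)$: it writes the (conditional) power as $1-\hat F_{1,n,m}(\hat q_{\alpha,m})$, then decomposes the deviation from $1-F_{1,n}(q_\alpha)$ into two pieces, $F_{1,n}(q_\alpha)-F_{1,n}(\hat q_{\alpha,m})$ and $F_{1,n}(\hat q_{\alpha,m})-\hat F_{1,n,m}(\hat q_{\alpha,m})$. These are controlled, respectively, by a quantile concentration lemma (DKW plus Lipschitz of $F_{0,n}$, together with a tuning-parameter optimization over $\epsilon_m=m^{-x}$) and by DKW applied to $\hat F_{1,n,m}$. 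Your route instead applies a single uniform DKW bound directly to $\hat F_{0,n,m}$, converts the rejection event to $\{F_{0,n}(T^{\mathrm{obs}})\ge 1-\alpha+\eta_m\}$, and evaluates this under $T^{\mathrm{obs}}\sim F_{1,n}$ marginally, absorbing the $\eta_m$ shift via local regularity of $F^{-1}$ and $F$. This avoids both the separate quantile-concentration lemma and the balancing optimization, and it does not need to treat the alternative side empirically at all; the paper's version, by contrast, keeps the randomization-test structure more explicit by viewing $T^{\mathrm{obs}}$ as a draw from the finite collection $\hat F_{1,n,m}$.

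Two remarks on your ``main obstacle.'' First, under (A1) as stated the null and alternative empirical cdfs are each built from $m$ i.i.d.\ samples, so the independence you need between $T^{\mathrm{obs}}$ and the null sample is already part of the hypothesis; the paper's proof makes the same implicit use of this. Second, your exchangeability/leave-one-out patch is fine but unnecessary for the theorem as stated: the $1/m$ correction is dominated by your $\eta_m$ anyway, and the shared-$Y^{\mathrm{obs}}$ dependence you worry about is precisely what (A1) abstracts away. The density-lower-bound caveat you flag for $F^{-1}$ is likewise present in the paper's argument (it enters through the constant $\lambda$ in the quantile concentration lemma and the Lipschitz constant $\mu$ of $F_{1,n}$).
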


The proof of Theorem~\ref{thm:power} is in Appendix~\ref{app:powertheory}, and can be summarized as follows. We start with the random variable $t(Z, Y(Z); C)$ ---i.e., the test statistic--- as induced
by the joint distribution of $(Z, Y(Z), C)$ conditional on $C$.
The distribution of the test statistic is denoted by $\hat F_{1, n, m}$.
Under the null, the distribution of
$t(Z, Y(Z); C)$ is the same as that of $t(Z, \Yobs; C)$, and is denoted by $\hat F_{0, n, m}$.

The first condition, $(A1)$, connects these two distributions to marginal and continuous  distributions denoted, respectively, by $F_{0, n}$ and $F_{1,n}$. In particular, this condition claims that the null (or alternative) distribution consists of $m$ i.i.d. samples from the corresponding 
marginal. The main restriction underlying this condition is that 
the actual biclique $C$ we condition on does not matter for any biclique decomposition, $\Cset$.
It is perhaps the least realistic condition, as we
expect the test statistic values to be correlated within a biclique $C$, and across 
different bicliques within the same decomposition $\Cset$. 
Note, however, that (A1) is only useful to leverage results from empirical process theory, 
and thus derive asymptotic rates for the testing power as in Equation~\eqref{eq:phi_power}.
The other two conditions are more typical in randomization test analysis.
For instance, the second
condition, $(A2)$, posits that the number of units, $n$, affects the null distribution only as a scale parameter.
Condition $(A3)$ posits a constant spillover effect, and thus 
relates the null distribution of the test statistic with its distribution under the alternative. 
%
As mentioned earlier, these conditions describe a simplified model --- neither can
be strictly true in our randomization-based framework, but they can be useful approximations
for large $m$ and will allow us to derive helpful power heuristics.
Indeed, the empirical studies
in Section~\ref{sec:clustered_sim} suggest that even under potential violations of these conditions,
the theoretical predictions from Theorem~\ref{thm:power} remain valid.

Under these conditions, Theorem~\ref{thm:power} shows that power is increasing in the size
of the biclique. Specifically, the number of focal units ($n$) in the biclique controls the
``sensitivity'' of the test, that is, how quickly the test achieves maximum power (as a
function of the treatment effect). The number of focal assignments ($m$) in the biclique
instead affects the maximum power of the test. See Section~\ref{appendix:confirm} in
the Appendix for a simulation study that confirms these results. These insights provide
useful guidance for tuning and diagnosing clique decomposition algorithms.

Finally, although Equation~\eqref{eq:power} shows the relationship between the power of the 
biclique test and the biclique size, it is not clear how this relates to properties the null exposure graph.
In Appendix \ref{app:implications}, we leverage an important result from extremal graph theory to show that the density of the null exposure graph relates to the existence of bicliques of certain size.


\newcommand{\EE}{\mathrm{E}}
\newcommand{\UU}{\mathrm{U}}
\newcommand{\ZZ}{\mathrm{Z}}
\subsection{Biclique decomposition algorithm and power considerations}
\label{sec:decomposition}

Given Theorem \ref{thm:power}, an important step for improving power is to condition on large bicliques in Step 1 of Procedure~\ref{proc:test}.
We now present an algorithm for decomposing the null exposure graph into bicliques, which in turn enables finding larger bicliques.
%
For a given biclique, $C$, in the null exposure graph, let $\EE(C)$, $\UU(C)$, and $\ZZ(C)$ denote the set of edges, the set of units, and the set of assignments in the biclique, respectively. Thus, $|\mathrm{E}(C)| = |\mathrm{U}(C)| |\mathrm{Z}(C)|$ since the biclique is bipartite. 
 Also, let $C\in G$ indicate that $C$ includes nodes and edges from null exposure graph $G$.

Our proposed biclique decomposition algorithm can be described as follows: 
\begin{enumerate}
\item Start with an empty biclique set: $\Cset = \{\}$, 
and the original null exposure graph $G=\GfF$ that 
corresponds to the null hypothesis, $\Hf$.
\item Solve the ``largest biclique problem'':
\begin{equation}
\label{eq:max_clique}
C^* = \arg\max_{C \in G}~|E(C)|.
\end{equation}
\item Remove biclique edges: $E(G)\gets  E(G) \setminus E(C^*)$.
\item Remove biclique assignments: $Z(G)\gets  Z(G) \setminus Z(C^*)$.
\item Update biclique set, $\Cset \gets \Cset\cup \{C^*\}$, 
and repeat from Step 2 if  $|E(G)| > 0$.
\end{enumerate}

\noindent The output of this procedure is a biclique decomposition, $\Cset$.

The main computational challenge is in Equation~\eqref{eq:max_clique}, 
where we calculate bicliques with the
largest possible number of edges in the remaining null exposure graph. 
 This is a variant of the maximal edge biclique problem, and is computationally challenging.  In fact,~\citet{peeters2003maximum} show that finding such bicliques is NP-hard. See \citet{zhang2014finding} for a review.
Fortunately, our randomization test remains valid even when the solution to~\eqref{eq:max_clique} is approximate, so we do not need to solve Equation~\eqref{eq:max_clique} exactly. In this paper, we use the ``binary inclusion-maximal biclustering'' ({\tt Bimax}) method~\citep{prelic2006systematic} for such approximation of \eqref{eq:max_clique}.\footnote{This is a fast divide-and-conquer method to find sub-blocks of ones in a binary matrix, known as biclusters, mainly used in the bioinformatics and gene expression literature. 
The algorithm is implemented in the R package {\tt biclust} and can incorporate constraints on the solution space.  The constraints are placed on the number of units and assignment nodes so that {\tt Bimax} returns all bicliques $C^*$ where, for example, $|\mathrm{U}(C^*)| \geq n_0, |\mathrm{Z}(C^*)| \geq n_1$ for specified, integer-valued 
parameters $n_0$ and $n_1$.  
In practice, to approximate the solution of Equation~\eqref{eq:max_clique}, and to produce balanced bicliques, we suggest choosing $n_0$ and $n_1$ so that they are similar to each other in magnitude, and $n_0\cdot n_1$ is large.
}
In the following section, we show that using this algorithm within the proposed method results in a powerful test.

Finally, we note that these power considerations are also useful for optimal experimental design. 
Specifically, one can parameterize the design, simulate outcomes ---according to an outcome model, as mentioned above--- and then estimate the power of the biclique test for a treatment effect of interest. The design with highest power can then be chosen for the experiment.
We illustrate this idea in Appendix~\ref{app:design} in the context of the Medell{\'i}n data.
In particular, we consider a design space with two parameters: one parameter controls the treatment probability at the ``center'' of the city; and the other controls the treatment probability at the outskirts. The resulting optimization problem appears to be convex, suggesting that this approach may be more useful for experimental design under interference. We leave this for future work.

%

\section{Application to clustered interference: A simulation study}\label{sec:clust}

In this section, we illustrate the proposed biclique test in settings with clustered interference, continuing our running Example~\ref{example:clustered}.
In these settings, we assume that interactions between units occur within, but not between, well-defined clusters of units, such as households or classrooms. 

\subsection{Problem setup and comparison of available methods}
\label{sec:clust_study}
Following Example~\ref{example:clustered}, we have $N$ units divided equally into $K$ clusters.
In the experiment, we first treat $K/2$ clusters at random and then randomly treat one unit in each treated cluster.  
Here, the exposure function is defined as $f_i(z) = z_i + \sum_{j\in[i]} z_j$, where $[i]$ denotes all units in $i$'s cluster.
Thus, each unit $i$ has three 
possible exposure levels, namely $0, 1$ and $2$, which we describe as ``control,'' ``spillover'', and ``treated'' exposures, respectively. 
Overloading notation for simplicity, we denote the three potential outcomes for each unit as 
$Y_i(\text{``control''}), Y_i(\text{``spillover''})$, 
and $Y_i(\text{``treated''})$, respectively.

We focus on the following spillover effect hypothesis,
\begin{equation}\label{eq:H0cluster}
H_0	: ~~ Y_i(\text{``spillover''}) = Y_i(\text{``control''}) +\tau,~\text{for all}~i,
\end{equation}
and vary $\tau$. We assess validity when $\tau=0$ and power when $\tau\neq 0$. 
The null hypothesis in Equation~\eqref{eq:H0cluster} is therefore 
an instance of $\Hab$ in Equation~\eqref{eq:H0_contrast}, with $\aa=1$ and $\bb=0$.
Following our discussion in Section~\ref{sec:comparison}, we compare three methods for testing this null:
\begin{enumerate}[(i)]
	\item The biclique test proposed in this paper~(Procedure~\ref{proc:test});
	\item The method of~\citet{basse2019}, which samples one focal unit per cluster among those 
	who are not treated~(``conditional focals'');
	\item The method of~\citet{athey2018exact}, which samples one focal unit per cluster at random~(``random focals'').
\end{enumerate}

In their current form with one focal unit per cluster, methods (ii) and (iii) can be implemented as permutation tests. 
We could select multiple focals per cluster for these methods, but the resulting test is difficult 
to implement via a permutation test.
Notably, the biclique method avoids such implementation issues~(see also Section~\ref{sec:clustered_sim}).
Figure~\ref{fig2_comparison} illustrates how these tests differ in a hypothetical example with six units arranged in two clusters, namely, 
 $\{1, 2, 3\}$  and $\{4, 5, 6\}$, and where 
unit 6 is assigned to treatment~(colored green).
For unit 6, we observe the ``treated'' potential outcome, $Y_6(\text{``treated''})$; thus, unit 6 is not connected to any other assignment in the null exposure graph, since we have no information about 
$Y_6(\text{``control''})$ or $Y_6(\text{``spillover''})$ under the null.
For units 4 and 5, we observe the ``spillover'' potential outcomes; and 
for units 1, 2, and 3 in the first cluster, we observe their ``control'' outcomes.
Under the null, we can impute the ``control'' potential outcomes for units 4 and 5 and the ``spillover'' potential outcomes for units 1, 2, and 3.

\begin{figure}[t!]
\centering
\hspace{1mm}
\begin{tikzpicture}[thick,
  every node/.style={draw,circle},
  fsnode/.style={fill=darkgray},
  tnode/.style={fill=mygreen},
  ssnode/.style={fill=darkgray},
  enode/.style={fill=none,draw=none,minimum height=5cm,inner sep=1cm},
  every fit/.style={ellipse,draw,inner sep=-1pt,text width=1.8cm}, ,shorten >= 2pt,shorten <= 2pt
]

\begin{scope}[start chain=going below,node distance=1mm]
\foreach \i in {1}
  \node[fsnode,on chain] (f\i) [label=left: \i] {};
\end{scope}

\begin{scope}[yshift=-0.5cm,start chain=going below,node distance=1mm]
\foreach \i in {2,...,3}
  \node[fsnode,on chain] (f\i) [label=left: {*\i}] {};
\end{scope}

\begin{scope}[yshift=-2cm, start chain=going below,node distance=1mm]
\foreach \i in {4,5}
  \node[fsnode,on chain] (f\i) [label=left: *\i] {};
\end{scope}

\begin{scope}[yshift=-3cm, start chain=going below,node distance=1mm]
\foreach \i in {6}
  \node[tnode,on chain] (f\i) [label=left: \i] {};
\end{scope}

\begin{scope}[xshift=2.2cm,yshift=-0cm,start chain=going below,node distance=1mm]
\foreach \i in {1, 2,...,8}
  \node[ssnode,on chain] (s\i) [label=right: \i] {};
\end{scope}

\begin{scope}[xshift=2.2cm,yshift=-3.8cm,start chain=going below,node distance=1mm]
\foreach \i in {9}
  \node[enode,on chain] (s\i) {$\vdots$};
\end{scope}

\node [white,fit=(f1) (f3), label={[label distance=-0cm]90:Units} ]  {};

\node [white,fit=(s1) (s8), label={[label distance=-1.1cm]90:Assignments} ] {};

\draw[ultra thick,gray] (f2) -- (s1);
\draw[ultra thick,gray] (f2) -- (s2);
\draw[ultra thick,gray] (f2) -- (s3);
\draw[ultra thick,gray] (f2) -- (s4);
\draw[ultra thick,gray] (f2) -- (s5);
\draw[ultra thick,gray] (f2) -- (s6);
\draw[ultra thick,gray] (f2) -- (s7);
\draw[ultra thick,gray] (f2) -- (s8);
\draw[ultra thick,gray] (f3) -- (s1);
\draw[ultra thick,gray] (f3) -- (s2);
\draw[ultra thick,gray] (f3) -- (s3);
\draw[ultra thick,gray] (f3) -- (s4);
\draw[ultra thick,gray] (f3) -- (s5);
\draw[ultra thick,gray] (f3) -- (s6);
\draw[ultra thick,gray] (f3) -- (s7);
\draw[ultra thick,gray] (f3) -- (s8);
\draw[ultra thick,gray] (f4) -- (s1);
\draw[ultra thick,gray] (f4) -- (s2);
\draw[ultra thick,gray] (f4) -- (s3);
\draw[ultra thick,gray] (f4) -- (s4);
\draw[ultra thick,gray] (f4) -- (s5);
\draw[ultra thick,gray] (f4) -- (s6);
\draw[ultra thick,gray] (f4) -- (s7);
\draw[ultra thick,gray] (f4) -- (s8);
\draw[ultra thick,gray] (f5) -- (s1);
\draw[ultra thick,gray] (f5) -- (s2);
\draw[ultra thick,gray] (f5) -- (s3);
\draw[ultra thick,gray] (f5) -- (s4);
\draw[ultra thick,gray] (f5) -- (s5);
\draw[ultra thick,gray] (f5) -- (s6);
\draw[ultra thick,gray] (f5) -- (s7);
\draw[ultra thick,gray] (f5) -- (s8);
\end{tikzpicture} \hspace{-8mm}	
\begin{tikzpicture}[thick,
  every node/.style={draw,circle},
  fsnode/.style={fill=darkgray},
  tnode/.style={fill=mygreen},
  ssnode/.style={fill=darkgray},
  enode/.style={fill=none,draw=none,minimum height=5cm,inner sep=1cm},
  every fit/.style={ellipse,draw,inner sep=-1pt,text width=1.8cm}, ,shorten >= 2pt,shorten <= 2pt
]

\begin{scope}[start chain=going below,node distance=1mm]
\foreach \i in {1}
  \node[fsnode,on chain] (f\i) [label=left: \i] {};
\end{scope}

\begin{scope}[yshift=-0.5cm, start chain=going below,node distance=1mm]
\foreach \i in {2}
  \node[fsnode,on chain] (f\i) [label=left: *\i] {};
\end{scope}

\begin{scope}[yshift=-1cm, start chain=going below,node distance=1mm]
\foreach \i in {3}
  \node[fsnode,on chain] (f\i) [label=left: \i] {};
\end{scope}

\begin{scope}[yshift=-2cm, start chain=going below,node distance=1mm]
\foreach \i in {4}
  \node[fsnode,on chain] (f\i) [label=left: *\i] {};
\end{scope}

\begin{scope}[yshift=-2.5cm, start chain=going below,node distance=1mm]
\foreach \i in {5}
  \node[fsnode,on chain] (f\i) [label=left: \i] {};
\end{scope}

\begin{scope}[yshift=-3cm, start chain=going below,node distance=1mm]
\foreach \i in {6}
  \node[tnode,on chain] (f\i) [label=left: \i] {};
\end{scope}

\begin{scope}[xshift=2.2cm,yshift=-0cm,start chain=going below,node distance=1mm]
\foreach \i in {1, 2,...,8}
  \node[ssnode,on chain] (s\i) [label=right: \i] {};
\end{scope}

\begin{scope}[xshift=2.2cm,yshift=-3.8cm,start chain=going below,node distance=1mm]
\foreach \i in {9}
  \node[enode,on chain] (s\i) {$\vdots$};
\end{scope}

\node [white,fit=(f1) (f3), label={[label distance=-0cm]90:Units} ]  {};

\node [white,fit=(s1) (s8), label={[label distance=-1.1cm]90:Assignments} ] {};

\draw[ultra thick,gray] (f2) -- (s1);
\draw[ultra thick,gray] (f2) -- (s2);
\draw[ultra thick,gray] (f2) -- (s3);
\draw[ultra thick,gray] (f2) -- (s4);
\draw[ultra thick,gray] (f2) -- (s5);
\draw[ultra thick,gray] (f2) -- (s6);
\draw[ultra thick,gray] (f2) -- (s7);
\draw[ultra thick,gray] (f2) -- (s8);
\draw[ultra thick,gray] (f4) -- (s1);
\draw[ultra thick,gray] (f4) -- (s2);
\draw[ultra thick,gray] (f4) -- (s3);
\draw[ultra thick,gray] (f4) -- (s4);
\draw[ultra thick,gray] (f4) -- (s5);
\draw[ultra thick,gray] (f4) -- (s6);
\draw[ultra thick,gray] (f4) -- (s7);
\draw[ultra thick,gray] (f4) -- (s8);
\end{tikzpicture} \hspace{-8mm}
\begin{tikzpicture}[thick,
  every node/.style={draw,circle},
  fsnode/.style={fill=darkgray},
  tnode/.style={fill=mygreen},
  ssnode/.style={fill=darkgray},
  enode/.style={fill=none,draw=none,minimum height=5cm,inner sep=1cm},
  every fit/.style={ellipse,draw,inner sep=-1pt,text width=1.8cm}, ,shorten >= 2pt,shorten <= 2pt
]

\begin{scope}[start chain=going below,node distance=1mm]
\foreach \i in {1}
  \node[fsnode,on chain] (f\i) [label=left: \i] {};
\end{scope}

\begin{scope}[yshift=-0.5cm, start chain=going below,node distance=1mm]
\foreach \i in {2}
  \node[fsnode,on chain] (f\i) [label=left: *\i] {};
\end{scope}

\begin{scope}[yshift=-1cm, start chain=going below,node distance=1mm]
\foreach \i in {3}
  \node[fsnode,on chain] (f\i) [label=left: \i] {};
\end{scope}

\begin{scope}[yshift=-2cm, start chain=going below,node distance=1mm]
\foreach \i in {4,5}
  \node[fsnode,on chain] (f\i) [label=left: \i] {};
\end{scope}

\begin{scope}[yshift=-3cm, start chain=going below,node distance=1mm]
\foreach \i in {6}
  \node[tnode,on chain] (f\i) [label=left: *\i] {};
\end{scope}

\begin{scope}[xshift=2.2cm,yshift=-0cm,start chain=going below,node distance=1mm]
\foreach \i in {1, 2,...,8}
  \node[ssnode,on chain] (s\i) [label=right: \i] {};
\end{scope}

\begin{scope}[xshift=2.2cm,yshift=-3.8cm,start chain=going below,node distance=1mm]
\foreach \i in {9}
  \node[enode,on chain] (s\i) {$\vdots$};
\end{scope}

\node [white,fit=(f1) (f3), label={[label distance=-0cm]90:Units} ]  {};

\node [white,fit=(s1) (s8), label={[label distance=-1.1cm]90:Assignments} ] {};

\draw[ultra thick,gray] (f2) -- (s1);
\draw[ultra thick,gray] (f2) -- (s2);
\draw[ultra thick,gray] (f2) -- (s3);
\draw[ultra thick,gray] (f2) -- (s4);
\draw[ultra thick,gray] (f2) -- (s5);
\draw[ultra thick,gray] (f2) -- (s6);
\draw[ultra thick,gray] (f2) -- (s7);
\draw[ultra thick,gray] (f2) -- (s8);
\end{tikzpicture}

\vspace{-20mm}
\caption{Example bicliques used by the three methods of Section~\ref{sec:clust_study}. Interference is clustered with six units divided equally into two clusters.
$\Zobs$ is~Assignment 1 and unit 6 is treated~(colored green). 
The units of each clique~(focal units) are marked by an asterisk, ``*''. An edge denotes that the 
unit is exposed to either $f_i=1$ or $f_i=0$, the 
exposures in $H_0$ of Equation~\eqref{eq:H0cluster}.\\
{\textit{Left:}} conditioning event of the biclique test: two focals per cluster; 
{\textit{Middle:}} conditioning event of~``conditional focals''~\citep{basse2019}: one untreated focal per cluster is randomly chosen~(units 2 and 4);~\textit{Right:}  conditioning event of~``random focals''~\citep{athey2018exact}:
only one focal per cluster is randomly chosen~(units 2 and 6). 
However, unit 6 is effectively removed since unit 6 is treated, 
and there are no edges between this unit and 
any assignment.
}
\label{fig2_comparison}
\end{figure}
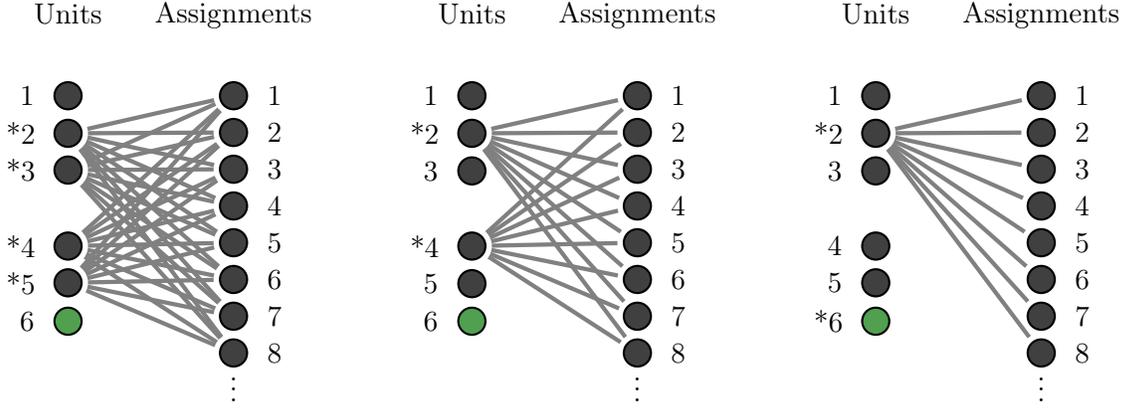

Figure~\ref{fig2_comparison} also depicts the conditioning events for every method, with the selected focal units in each biclique marked with an asterisk.
The leftmost subfigure shows that the biclique test in Procedure~\ref{proc:test} conditions on the biclique 
that includes $\{2, 3, 4, 5\}$ as focal units. Unit 6 is not included 
since it does not add any edges in the objective function of Equation~\eqref{eq:max_clique}.
This biclique is balanced and dense in the sense that there are two units in each exposure of interest, and it contains many edges.
%
The middle subfigure shows the ``conditional focals'' method of~\citet{basse2019}.
For this test, we see that, by construction, only one focal unit is selected per cluster.  This leads to a biclique that is less dense than the biclique from our proposed test.
The rightmost subfigure shows the ``random  focals'' method of~\citet{athey2018exact}. 
This differs from ``conditional focals'' because it may select treated units as focal units.
For illustration, if the method selects unit 6, which is assigned to treatment, this unit is effectively dropped from the conditioning biclique, 
thus reducing power. 


\subsection{Simulation study: Two-stage experiment}\label{sec:clustered_sim}

We follow \citet{basse2018analyzing} to generate data for the 
setting with clustered interference:
\begin{align}
		y_{i, 0} &\sim N(\mu_{0},\sigma_\mu^2),\nonumber\\
		\tau_i^P & \sim N(\tau^P, \sigma_\tau^2),\nonumber\\
		 \tau_i^S & \sim N(\tau^S, \sigma_\tau^2),\nonumber\\
		y_{i, 2} &=  y_{i, 0} + \tau_i^P,\nonumber\\
		y_{i, 1} &=  y_{i, 0} + \tau_i^S.\nonumber
\end{align}

In the experiment, $\lfloor K/2 \rfloor$ clusters are assigned to treatment, and within treated clusters, one unit is randomly assigned to treatment (as in Example \ref{example:clustered}). We sample $Y_i(\text{``control''}) \sim N(y_{i, 0},\sigma_y^2)$, $Y_i(\text{``spillover''}) \sim N(y_{i, 1},\sigma_y^2)$, 
 and  $Y_i(\text{``treated''}) \sim N(y_{i, 2},\sigma_y^2)$, 
in order to generate the individual potential outcomes.
As such, $\tau_i^P$  and $\tau_i^S$ correspond to idiosyncratic primary and spillover effects, respectively.
We consider the following specifications:  $N=300$, $\mu_{0}=2$, $\sigma_{\mu}=\sigma_{\tau}=0.1$, $\sigma_y=0.5$, $\tau^S=0.7$, $\tau^P=1.5$, and $K \in \{20, 30, 75\}$.  In the simulation we vary $K$ to see how 
different methods perform with small and large-sized clusters. The different cluster sizes, $N/K$, are therefore contained in $\{15, 10, 4\}$. %
%
%
For the simulations, we generate 5,000 different assignments and construct the null exposure graph for the biclique method.  
For each cluster size and fixed $\tau$, we generate 2,000 data sets from the DGP given above;  $\tau$ is varied among 300 equally spaced values from 0 to 1.%


The power plots are shown in Figure~\ref{powerpanel}. 
We see that the biclique method performs substantially better with larger cluster sizes. 
Recalling the conclusions from Figure \ref{fig2_comparison}, this is explained by noting that the biclique method can select multiple focal units per cluster in the biclique decomposition.  In contrast, the methods of~\citet{basse2019} and \citet{athey2018exact}, as defined here, both use one focal unit per cluster.
 %
To confirm, we calculated the average number of focal units per cluster for each method and data configuration with 15 units per cluster ($N=300,K=20$).  The biclique method had 5.24 focal units per cluster compared to 1 for method (ii) and 0.97 for method (iii). 
In contrast, the biclique method underperforms in the right-most plot of Figure~\ref{powerpanel}.  This is because 
smaller cluster sizes imply a sparser null exposure graph. 
In such settings, the biclique method has trouble finding good biclique decompositions, and can drop entire clusters in its conditioning. By contrast,
alternative methods are fixed to sample exactly one focal from every cluster.
%


\begin{figure}[t!]
\centerline{
\includegraphics[scale=0.41]{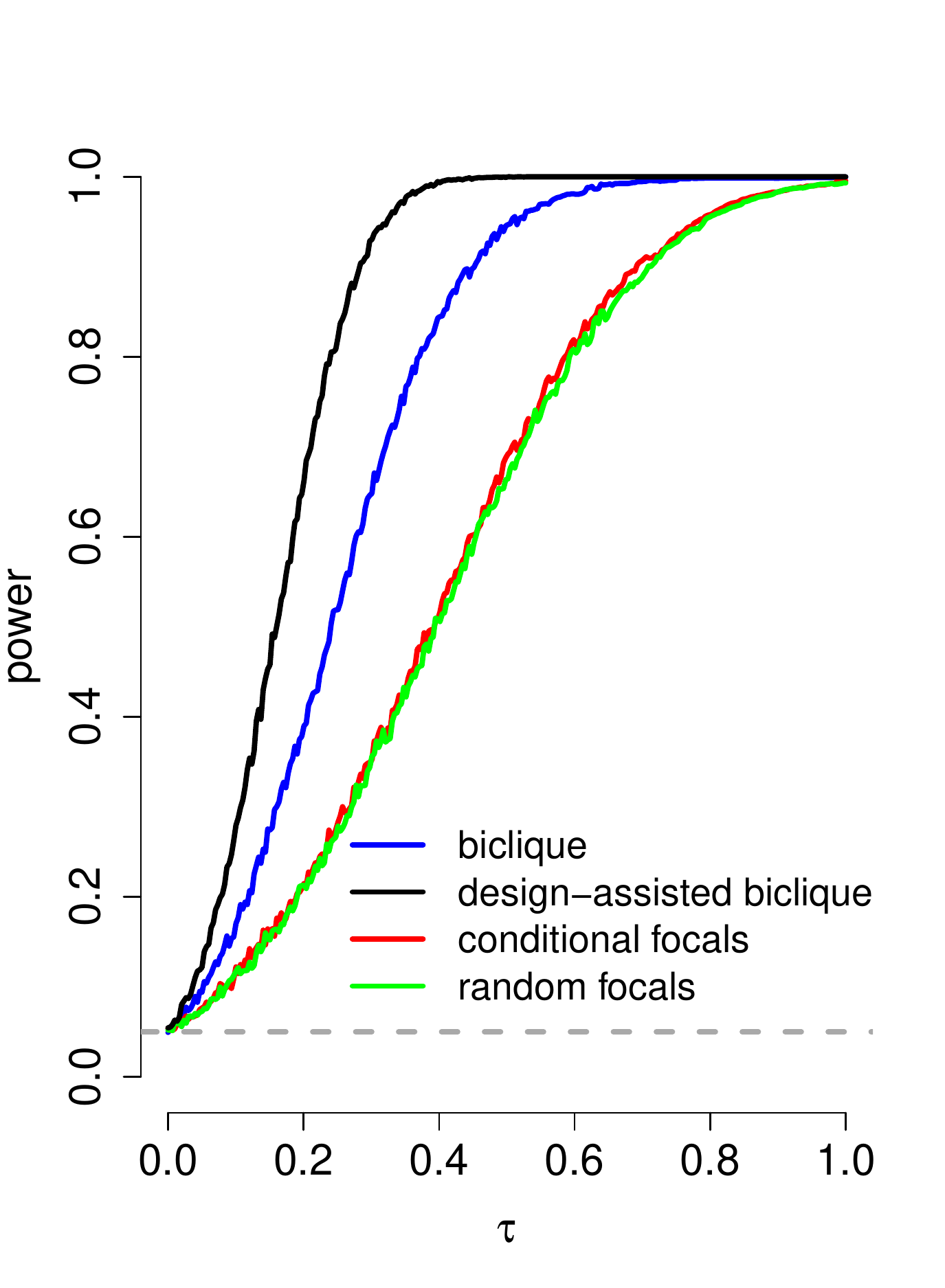}
\includegraphics[scale=0.41]{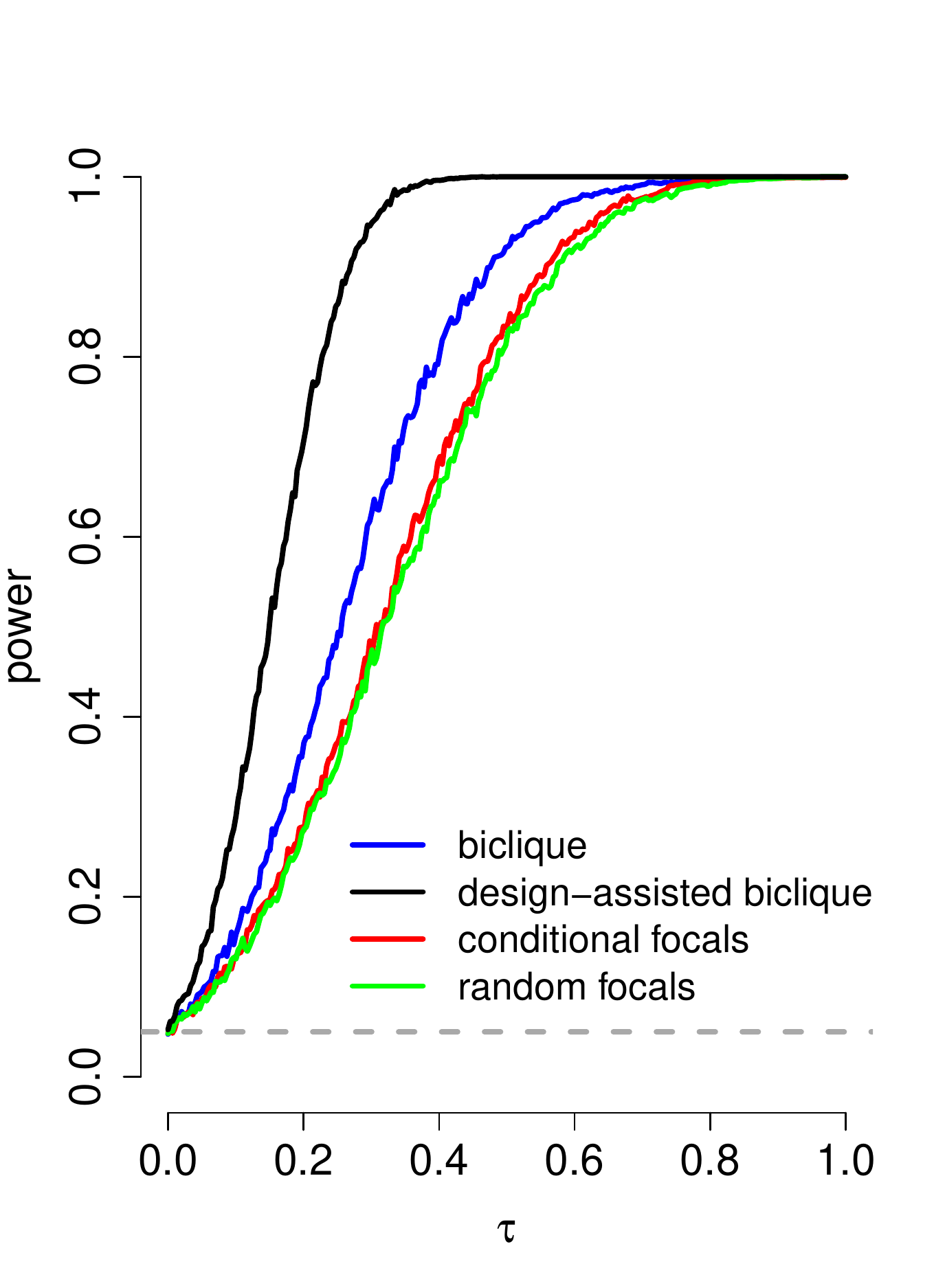}
\includegraphics[scale=0.41]{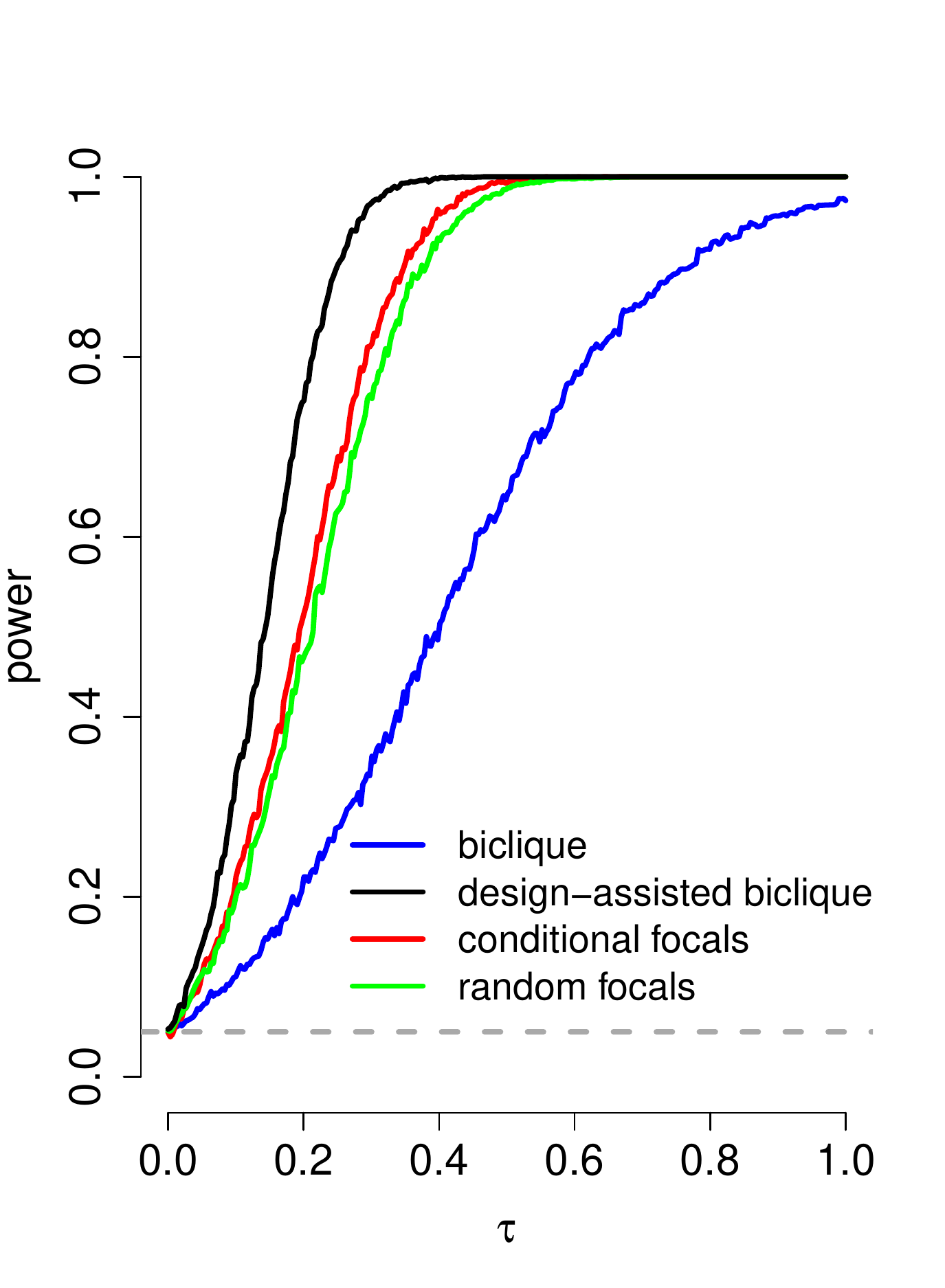}
}
\caption{Power plots for three data configurations; the design-assisted biclique procedure is described in Section \ref{sec:design_cliques}.
The results include the results shown in Figure 3;
{\em left:} $N=300,K=20$; {\em middle:} $N=300,K=30$;
and 
{\em right:} $N=300,K=75$.}
\label{powerpanel}
\end{figure}

To further investigate the power of the biclique test, 
we gather data on the bicliques generated by the biclique decomposition.
We alter the optimization constraints in 
the biclique decomposition algorithm we employ~(see Section~\ref{sec:decomposition} for implementation details), and fix $N=300,K=20$ and $\tau=0.3$.
Figure~\ref{power_cliq1} displays the results.  
We see that the average number of units and assignments in the decomposition are negatively correlated, which 
suggests a power trade-off.
As the number of (focal) units per biclique increases, power also increases until some threshold. 
Beyond this threshold, more units in the biclique come at the expense of fewer treatment assignments, 
which on aggregate decreases power.
This highlights that the automated biclique test does not incorporate specific information about the clustered interference structure. In Section~\ref{sec:design_cliques}, we show that we can modify the standard procedure to improve the test's performance in this case.

Not evident in Figures~\ref{powerpanel}  and~\ref{power_cliq1} is the relative ease of implementation for each method.
The biclique method is straightforward and essentially automated for a well-defined biclique decomposition algorithm.
On the other hand, methods (ii) and (iii) are generally hard to implement.
For example,~\citet{basse2019} show that their test can be implemented as a permutation test
  only under the conditioning mechanism described in (ii). Selecting more units per cluster to increase power will generally 
  break this property. Furthermore, the test described in (iii) is a permutation test 
  only when all clusters have equal size. It is not clear how to implement a valid permutation test based on the method of~\citet{athey2018exact} with unequal cluster sizes.
  
Overall, these results suggest that our proposed biclique test 
can deliver reasonably powered randomization tests that are comparable 
to methods specifically designed for the interference structure at hand.
In settings where the null exposure graph is dense, the biclique 
test can even outperform such special-purpose methods, since it is able to  discover a more flexible conditioning.
  Most importantly, the biclique test operates automatically under any interference structure, whereas other methods require user input and specification.

\begin{figure}[t!]
\centerline{\includegraphics[scale=0.6]{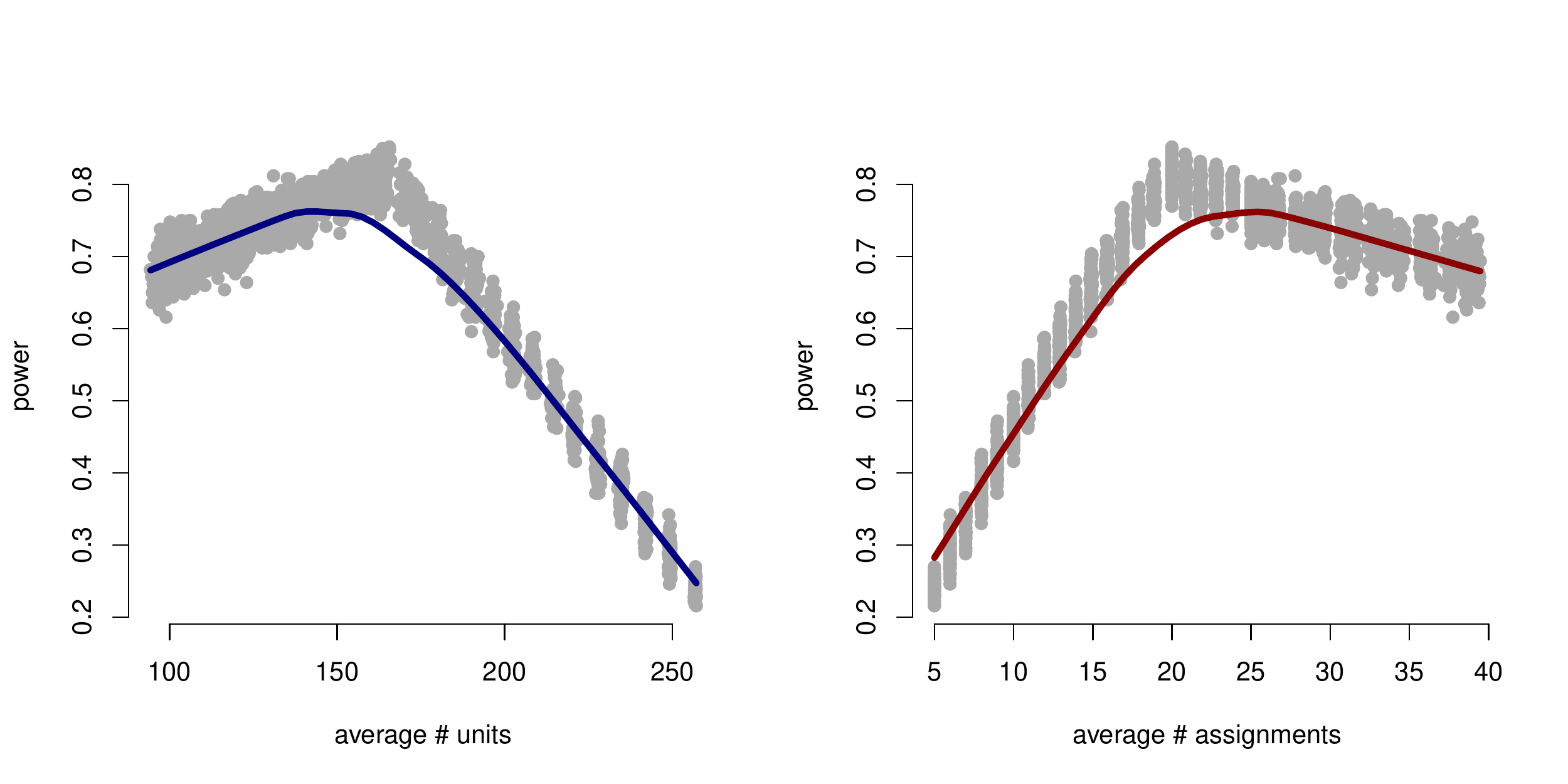}}
\caption{Power values for varying biclique characteristics. The data structure is fixed at $\{N=300,K=20\}$ and $\tau=0.3$.  Each individual gray dot corresponds to a biclique decomposition of the null exposure graph which we condition on for the biclique method. The left graph shows power as a function of the average number of units (focals), and the right graph shows power as a function of the average number of assignments. A figure combining this information into a single plot is shown in Appendix~\ref{appendix:clustered}.}
\label{power_cliq1}
\end{figure}

\subsection{Improving power: Design-assisted biclique tests}\label{sec:design_cliques}

We now discuss how to leverage knowledge of the experimental design to improve the power of the biclique test.
We describe the approach in detail and prove its validity in Appendix \ref{sec:design_test_appendix}.  In the following, we give a brief overview and illustrate the performance gains. 

The main idea relies on understanding the structure of the null exposure graph in the clustered interference setting. 
Recall that, within each cluster, every unit $i$ is connected to every cluster assignment, except for the assignment where $i$ is treated.
Therefore, the subgraph of the null exposure graph corresponding to cluster $k$ (i.e., containing only the units in cluster $k$ and all possible ``cluster subvectors'' of the population assignment vector) 
can be decomposed into two large bicliques.  These two bicliques are constructed to each contain half of cluster $k$'s units (see Appendix \ref{sec:design_test_appendix} for details).  This is useful because the biclique $C$ that the test conditions on 
can then be constructed by joining together the large bicliques from each individual cluster. As we will see, this modified approach largely addresses the power issues for large $K$ in the simulation of the previous section. 

Figure~\ref{powerpanel} 
shows the simulations with the ``design-assisted'' biclique test described above.
We see that the design-assisted test dominates all other tests. 
The main benefit of the design-assisted test is that it uses half of the units in the cluster as focals,  
whereas the tests of \citet{basse2019} and \citet{athey2018exact} only use one unit.
This also explains why the performance gap between these tests narrows for larger $K$.
For instance, when $K=75$, the design-assisted biclique test uses roughly two units per cluster, which still yields power improvement compared to using just one.


The caveat of this analysis is that it may not always be easy to leverage the design in constructing a powerful biclique test because it requires understanding the relation
between the design and the structure of the null exposure graph. 
However, the analysis and empirical results of this section suggest that when such combination is possible, 
the design-assisted biclique test dominates alternative approaches.


\newcommand{\pc}[1]{``\text{pure control}"}
\newcommand{\sr}[1]{``\text{spillover}_{#1}"}
\section{Application to spatial interference: Crime in Medell\'in}\label{sec:spatial}
In this section, we illustrate our method in a spatial interference setting in which interactions occur between ``neighboring'' units, but without the simpler structure of clustered interference.
%
%
We focus on re-analyzing a large-scale experiment
in Medell{\'i}n, Colombia studying the impact of ``hotspot policing'' on crime~\citep{collazos2019hot}.

\subsection{Problem setup and comparison of available methods}
Following \citet{collazos2019hot}, the units are $N=37,055$ street segments, 967 of which were identified as \emph{hotspots} using geo-located police data and further consultation with police. Of these hotspots, 384 were randomly assigned to treatment, a six-month increase in daily police presence, via a completely randomized design over a domain $\Zdom$ of roughly $10,000$ possible assignments.  
%
The outcome of interest is a \emph{crime index}, a weighted sum of the crime counts on each street segment.\footnote{As discussed in \citet{collazos2019hot}, the index weights are chosen based on the length of sentence for a given crime.  They are:  0.550 for homicides, 0.112 for assaults, 0.221 for car and motorbike theft, and 0.116 for personal robbery. Crime data is matched to street segment within 40-meter buffers.  In other words, if a crime happened in an alley, it will be matched to the closest street segment within a 40-meter radius. If there is overlap, it is matched to the street segment closest in terms of Euclidean distance.}
%

To define the spillover hypothesis, we need to define the exposure function, $f$. 
Following~\citet{collazos2019hot}, we use geographic distance as a measure for spillover exposure. Let $d(i, j)$ be the distance (in meters) between unit $i$ and unit $j$. 
We then define:
\begin{align}\label{eq:f}
f_i(z) = 
\begin{cases}
``\text{pure control}",~\text{if}~z_i=0~\text{and}~\sum_{j\neq i} 
\Ind{d(i, j) \le 500} z_j = 0;\\
\sr{r},~~~~~\text{if}~z_i=0~\text{and}~\sum_{j\neq i} 
\Ind{d(i, j) \le r} z_j > 0; \\
``\text{other}".
\end{cases}
\end{align}
This defines a ``pure control'' as a control unit 
that has no treated units closer than $500$ meters, ensuring significant spatial separation from treated streets. A control unit is assigned to ``spillover'' when there are treated units closer than the specified distance.

Our goal is to test contrast hypotheses of the form $H_0^{\aa, \bb_r}$, with 
$\aa = ``\text{pure control}"$ and $\bb_r = \sr{r}$, where $r$ is a free variable 
in the set:
$$
r = \{75, 100, 125, 150, 175, 225, 275, 325, 375, 425\}.
$$
Each hypothesis for a given $r$ will have its own null exposure graph and biclique decomposition.
Following~\citet{collazos2019hot}, we refer to ``short-range spillovers'' as the set 
of hypotheses $H_0^{\aa, \bb_r}$ with $r \le 125$m.

To illustrate, Figure~\ref{map2} shows the induced exposures under the  randomization realized in the experiment (left), and an alternative randomization that was not realized (right), for $r=125$m. The figure highlights short-range spillover units (among all available units) 
in light blue and light green for the observed treatment and randomization, respectively.
Figure~\ref{map2} also depicts important geographic features of Medell\'in, with many hotspots near the city center, and where dark areas or holes correspond to physical barriers (e.g., mountains) or major infrastructure (e.g., airports).
Even though only 967 street segments can receive the active treatment, every street segment can potentially 
receive spillovers, and so we use the entire street network in our analysis.
A key challenge is that street segments near the city center have a much higher probability of being exposed to crime spillovers than segments on the outskirts of the city.

\begin{figure}[!t]
\centerline{\includegraphics[scale=0.5]{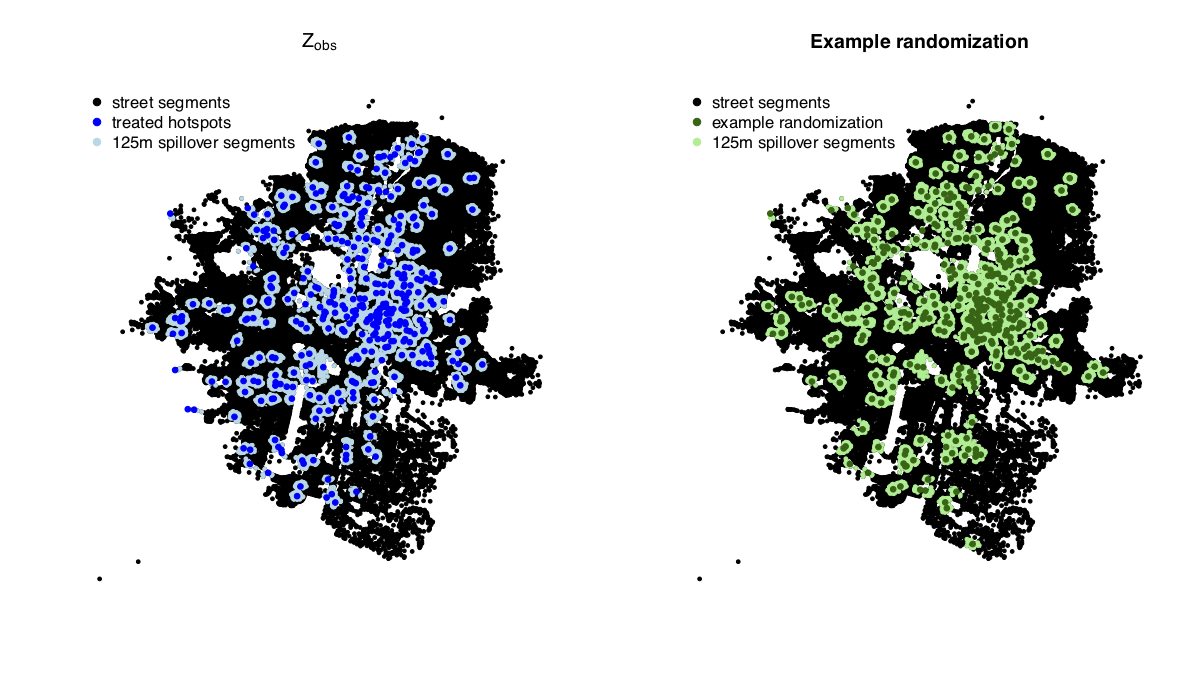}}
\caption{Street segments, hotspots, and treated hotspots for the data set.  The left figure is the observed assignment for the experiment.  The right figure is an example randomization of the assignment vector.  The dots cover different hotspots, but they are still within the 967 segments representing the hotspots.  Additionally, the light colored dots represent the 125m spillover units, i.e.: street segments that are within 125m of the treated hotspots for a given randomization.}
\label{map2}
\end{figure}

\textcolor{black}{Finally, unlike our analysis of clustered interference, we will only apply the biclique test in this case study. In principle, we could adapt the test of~\citet{athey2018exact} to this setting, but the implementation of their procedure would be underpowered due to the spatial structure.
Specifically, we will see that (under the short-range spillover hypothesis) focal units are concentrated either at the center or outskirts of the city; see Section~\ref{sec:FRT_med} and Figure~\ref{nulls2}. It is unlikely that this ``center-outskirts'' pattern could be generated through a random selection of focals.  Furthermore, the $\epsilon$-net method of focal selection of~\citet[Section 5.4.2]{athey2018exact} would not work well because it would generate patterns of focal units that are spatially uniform.
Similarly, it is not clear how to apply the approach of~\citet{basse2019}, which is more narrowly tailored to the clustered interference setting.}

\subsection{Spillover hypotheses: A simulation study}\label{sec:spillover_sim}
We now assess our proposed method via a simulation study calibrated to the actual Medell\'in street network. For these simulations, each biclique decomposition is constrained to include bicliques with at least 100 focal units and 1000 assignments.  
We explore the effect of radius $r$ on test power using the following simple model for the outcomes, $Y$:
\begin{align}\label{eq:Y}
		Y_i(``\text{pure control}") & \sim \text{Gamma}(\alpha, \beta),\nonumber\\
		Y_i(``\text{spillover}_r") & = Y_i(``\text{pure control}") + \tau_r.
\end{align}
The shape and rate parameters~($\alpha, \beta$, respectively) are selected to match the mean and variance of the observed outcome in the actual experiment, the crime index.
The parameter $\tau_r$ determines an additive spillover effect at radius $r$. 
We set $\tau_r \propto 1/r^2$ to allow for heterogeneity of the spillover effect with respect to radius.\footnote{This is in line with existing literature~\citep{thomas2013quantifying,barr1990crime,verbitsky2012causal},
which has pointed out that ``the likelihood that an offender will target an opportunity will be inversely related to the distance it is located from their routine activity spaces''~\citep{eck1993threat, johnson2014crime}.}
In the simulation, we sample assignments according to the true design for every value of $r$ and outcomes according to Equation~\eqref{eq:Y}. We then 
test   the null hypothesis $H_0^{\aa, \bb_r}$ on spillovers at distance $r$, defined at the beginning of this section. 
The test statistic is the simple difference in means between focal units exposed to $\aa$ and $\bb_r$.

The results are shown in Figure~\ref{powervradius}.
In the left subfigure, we fix the additive treatment effect at zero~($\tau_r=0$) to assess validity.  Our rejection level is 0.05, and so we confirm the validity of the biclique method since all power values gather around the 5\% rejection rate.  
In the right subfigure, we consider nonzero additive spillovers effects~($\tau_r > 0$) that are calibrated based on the spillover radius. 
We observe that the power curve is generally concave and nonmonotonic since it increases until some radius and then decreases.
At first, this may be counterintuitive since as $r$ increases the spillover effect, $\tau_r$, decreases (by definition), which should make it harder to detect.
However, as shown in Appendix~\ref{appendix:clustered}, the number of focal units 
also increases sharply with respect to $r$. 
The net effect is an increase of power.
At the same time,  as we discussed in Section~\ref{sec:clustered_sim},
an increase in the number of focals per biclique generally results in a decrease in the number of 
assignments per clique~(see also Figure~\ref{app:power_cliq2} in Appendix~\ref{appendix:clustered}), and, eventually, in decreased testing power.
Under our assumed outcome model, the maximum power is achieved approximately at a 275m spillover radius. 

This kind of analysis gives a useful estimate for the power profile~(Figure~\ref{powervradius}, right) of our proposed biclique test for a given biclique decomposition algorithm. In practice, we could compare between the 
power profiles of different biclique decomposition algorithms, and choose the most favorable algorithm to apply on the real data. See Section~\ref{sec:power}
for additional discussion on testing power.

\begin{figure}[t!]
\centerline{\includegraphics[scale=0.6]{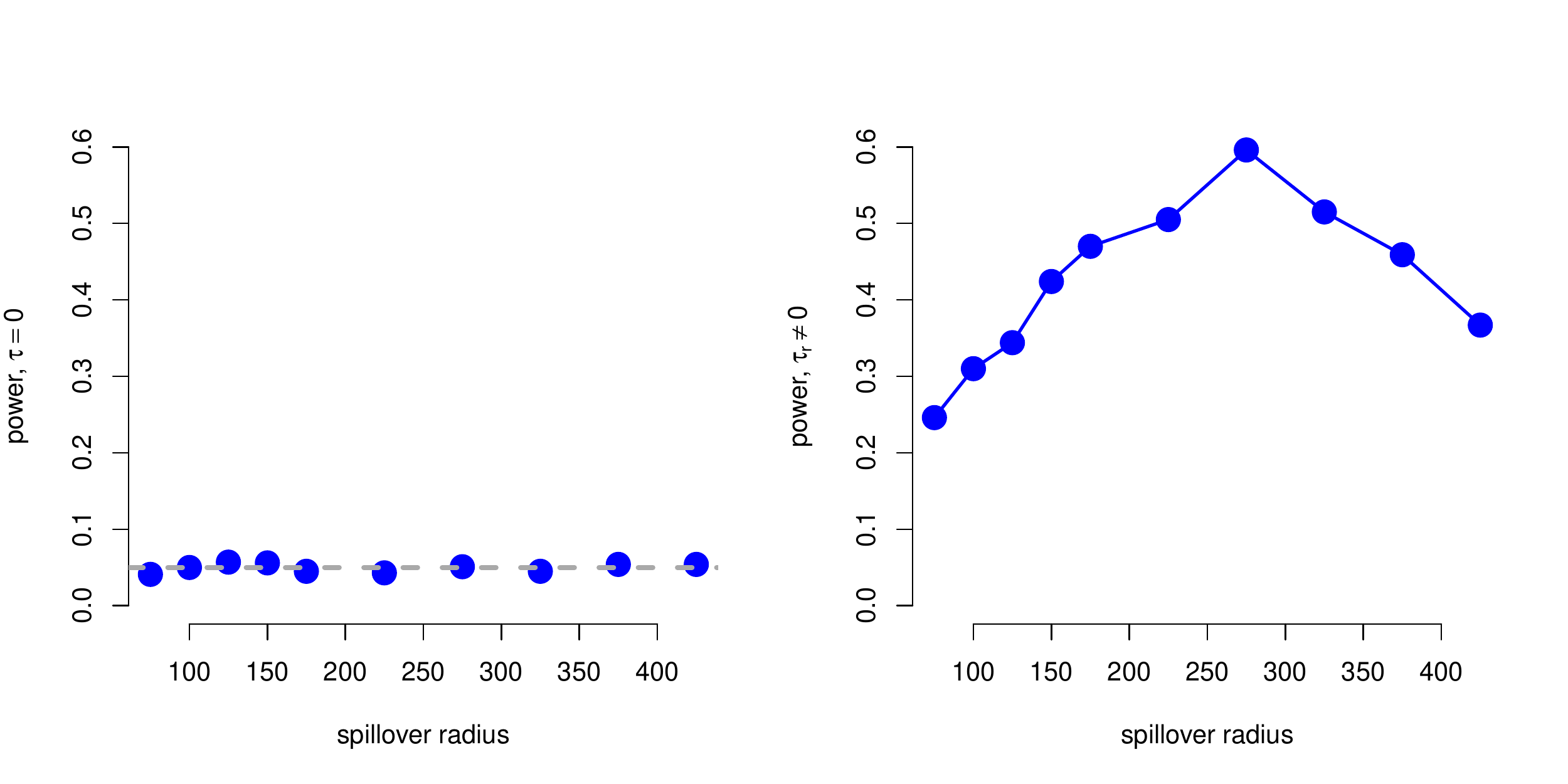}}
\caption{The left figure shows the power analysis across 1000 simulations when $\tau_r=0$.  We reject the null at the 0.05 level.  Therefore, the left figure confirms the validity of the biclique method.  The right figure displays the power analysis for nonzero $\tau_r \propto 1/r^2$.}
\label{powervradius}
\end{figure}

\subsection{Spillover hypotheses: biclique test on real data}\label{sec:FRT_med}

In this section, we demonstrate our proposed biclique test using the actual outcome data. We focus on the spillover hypothesis $H_0^{\{\aa, \bb_r\}}$ with radius $r=125$m, following \citet{collazos2019hot} who define this 
type of exposure as ``short-range spillover''. Results for all radius values 
in the previous simulation are included in Appendix~\ref{appendix:spatial}. We therefore test whether 
there is a difference between  outcomes of pure control units and units who 
receive short-range spillovers:
\begin{align}\label{eq:H0_med}
H_0: Y_i(z) = Y_i(z'),~\text{for all}~z,z'~\text{such that}~ f_i(z), f_i(z')\in
\{``\text{pure control}", ``\text{spillover}_{r}" \}, 
\end{align}
where $r=125$m, and the exposures are defined in Equation~\eqref{eq:f}.
As we discuss in Section \ref{sec:H0}, rejecting $H_0^{\{\aa, \bb_r\}}$ does not necessarily imply that the treatment exposures are different; instead it is possible that any
of the singleton hypotheses $H_0^{\{\aa\}}$ and $H_0^{\{\bb_r\}}$ does not hold.

The first step of the biclique test is to construct the null exposure graph.  This graph has 37,055 nodes on one side (number of streets/units), and 10,000 nodes on the other~(number of  assignments).
Figure~\ref{bicilique_visual2} visualizes this graph.
The left subfigure shows a block of the null exposure graph that is reasonably discernible, where a dot corresponds to a unit-assignment pair, say $(i, z)$. If the dot has white color, then there is no edge between 
unit $i$ and assignment $z$ in the graph. The light blue and navy blue colors indicate whether
$i$ receives short-range spillover or pure control exposure under $z$, respectively.
We note that the null hypothesis is not sharp for units exposed to ``white'', therefore we need to condition on a biclique where the white components are effectively removed. This is analogous to choosing a biclique where all units are either exposed to ``navy blue'' or ``light blue''.
The right side of Figure~\ref{bicilique_visual2} shows such a clique~(zoomed-in to have same size as the left side).\footnote{
Figure~\ref{bicilique_visual2} displays additional important information about the test.~For example, the colorings on the right side of Figure~\ref{bicilique_visual2} reveal that many units in the biclique are always ``pure control'' (navy blue columns), and a handful of units always receive ``short-range spillovers'' (light blue columns).  
Conceptually, more variation in exposures across units (i.e., more random colorings in 
right side of Figure~\ref{bicilique_visual2}) leads to more power.
Currently, our biclique decomposition algorithm cannot guarantee such variation; we leave this problem open for future work.
}

\begin{figure}[!t]
\centerline{\includegraphics[scale=0.64]{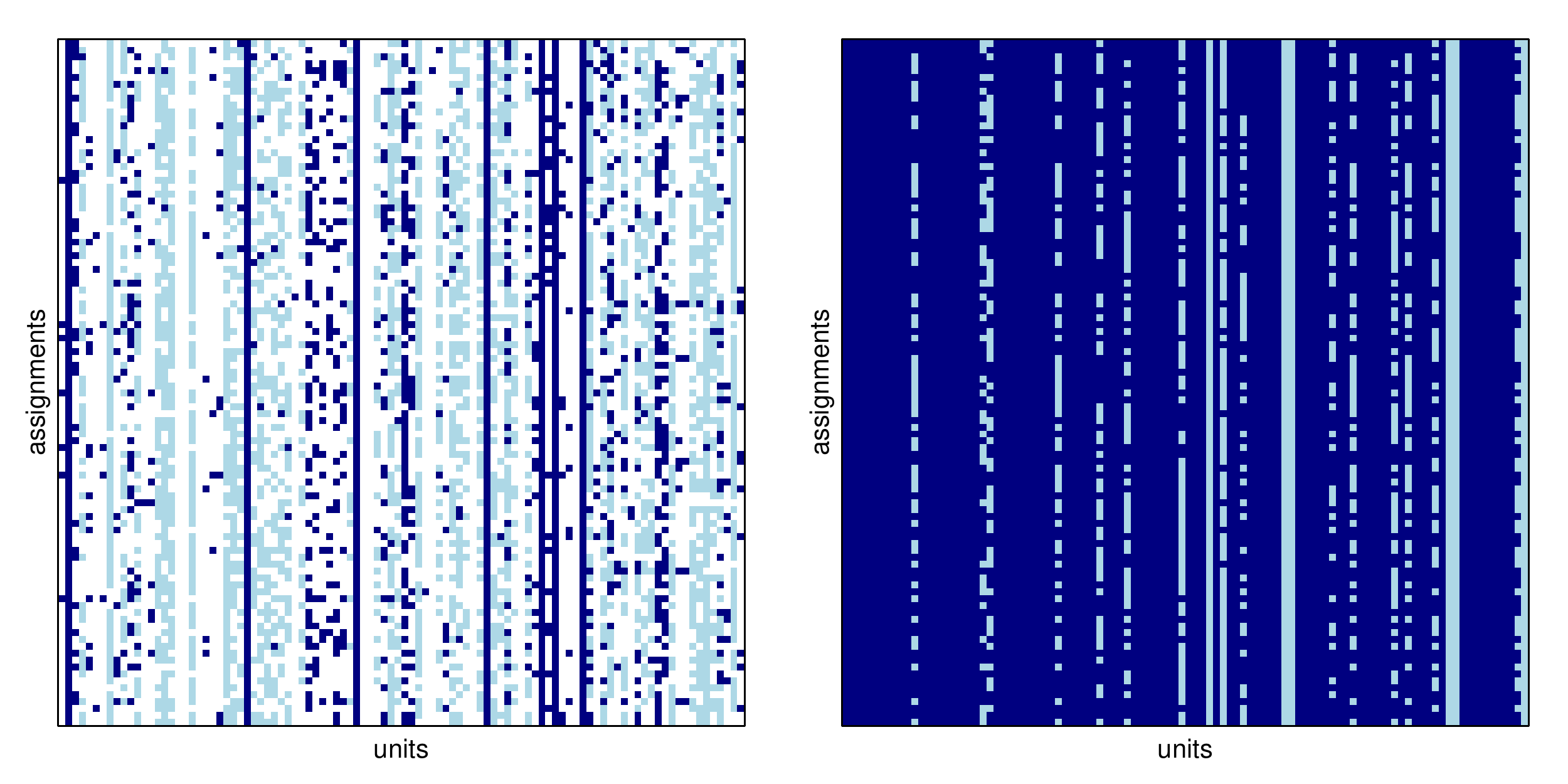}}
\caption{The left figure visually depicts the null exposure graph.  The vertical axis corresponds to the assignments from the randomization procedure, and the horizontal axis displays the units. Light blue denotes an untreated unit that is a spillover and close to a treated hotspot, and navy denotes pure control.  The right figure is a (zoomed-in) biclique of the null exposure graph containing the observed assignment.  There is no white in the biclique since the biclique only contains units exposed to spillover or pure control. To conserve space, we only display the first 100 assignments and units for both.}
\label{bicilique_visual2}
\end{figure}

We now apply our proposed test in Procedure~\ref{proc:test} to the spillover hypothesis of Equation~\eqref{eq:H0_med}. The left side of Figure~\ref{nulls2} displays hotspots, treated hotspots, and the focal units identified by the biclique decomposition.  As mentioned earlier, most focal units are at the center or outskirts of the city
due to the particular spatial structure of spillovers.
The right side of Figure~\ref{nulls2} displays the randomization distribution of the test statistic measuring the difference in means between crime index values on short-range spillover units and pure control units:
\begin{equation}\label{eq:Tc}
t(z, y; C) = \frac{1}{N_\bb} \sum_{i\in C} \Ind{f_i(z)=\bb} Y_i
-  \frac{1}{N_\aa} \sum_{i\in C} \Ind{f_i(z)=\aa} Y_i,
\end{equation}
where $C$ denotes the biclique we condition on; 
$i\in C$ denotes that unit $i$ is a node in the clique; 
$\aa=$ ``pure control'', $\bb=$ ``short-range spillover'';
$N_\aa = \sum_{i=1}^N \Ind{f_i(z)=\aa}$ and 
$N_\bb = \sum_{i=1}^N \Ind{f_i(z)=\bb}$ are the 
exposure counts.
Thus, positive values of the test statistic indicate that 
the average crime outcome for  units exposed to ``short-range spillovers'' is larger than the average outcome for units that are exposed to ``pure control''.

\begin{figure}[!t]
\centerline{
\includegraphics[scale=0.55]{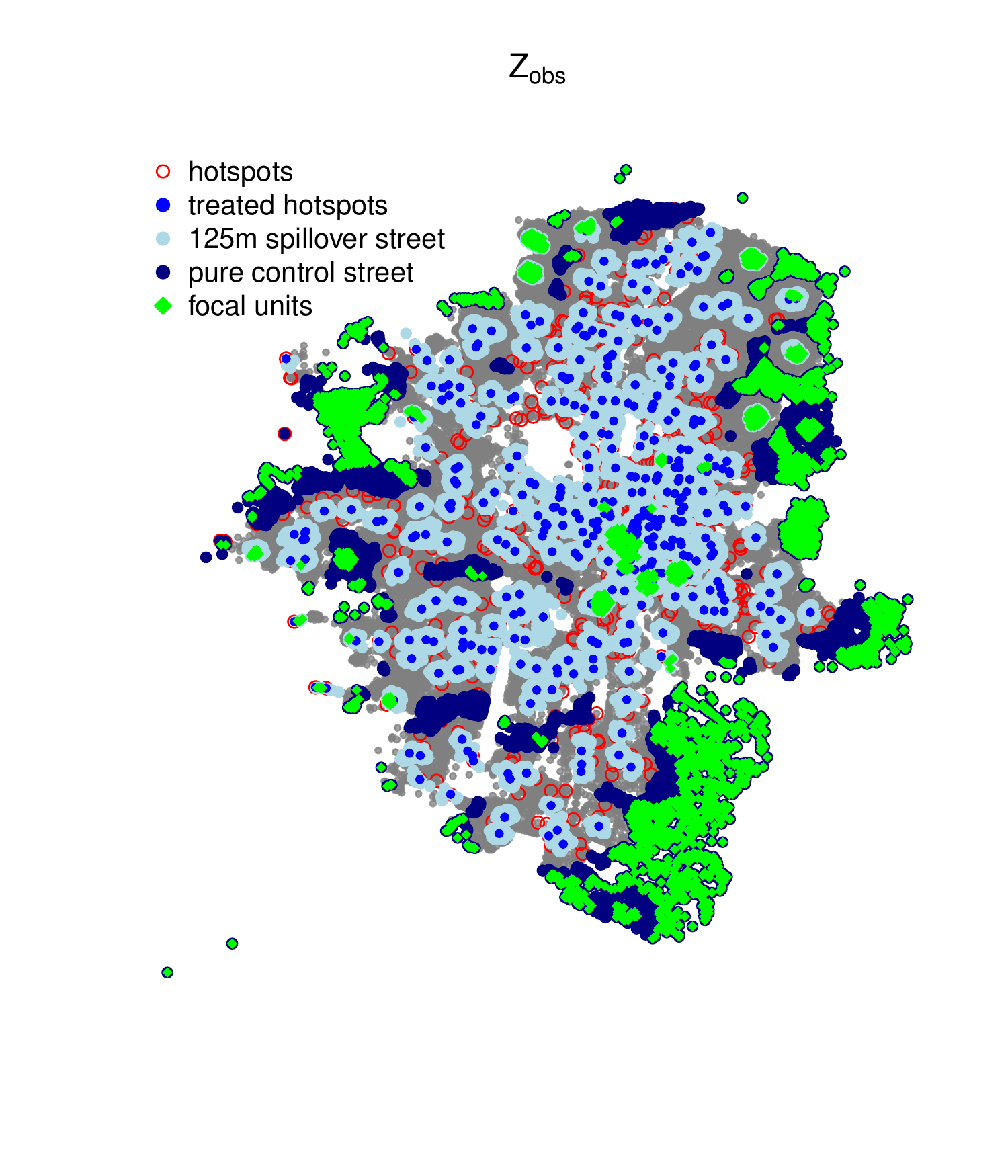} 
\includegraphics[scale=0.65]{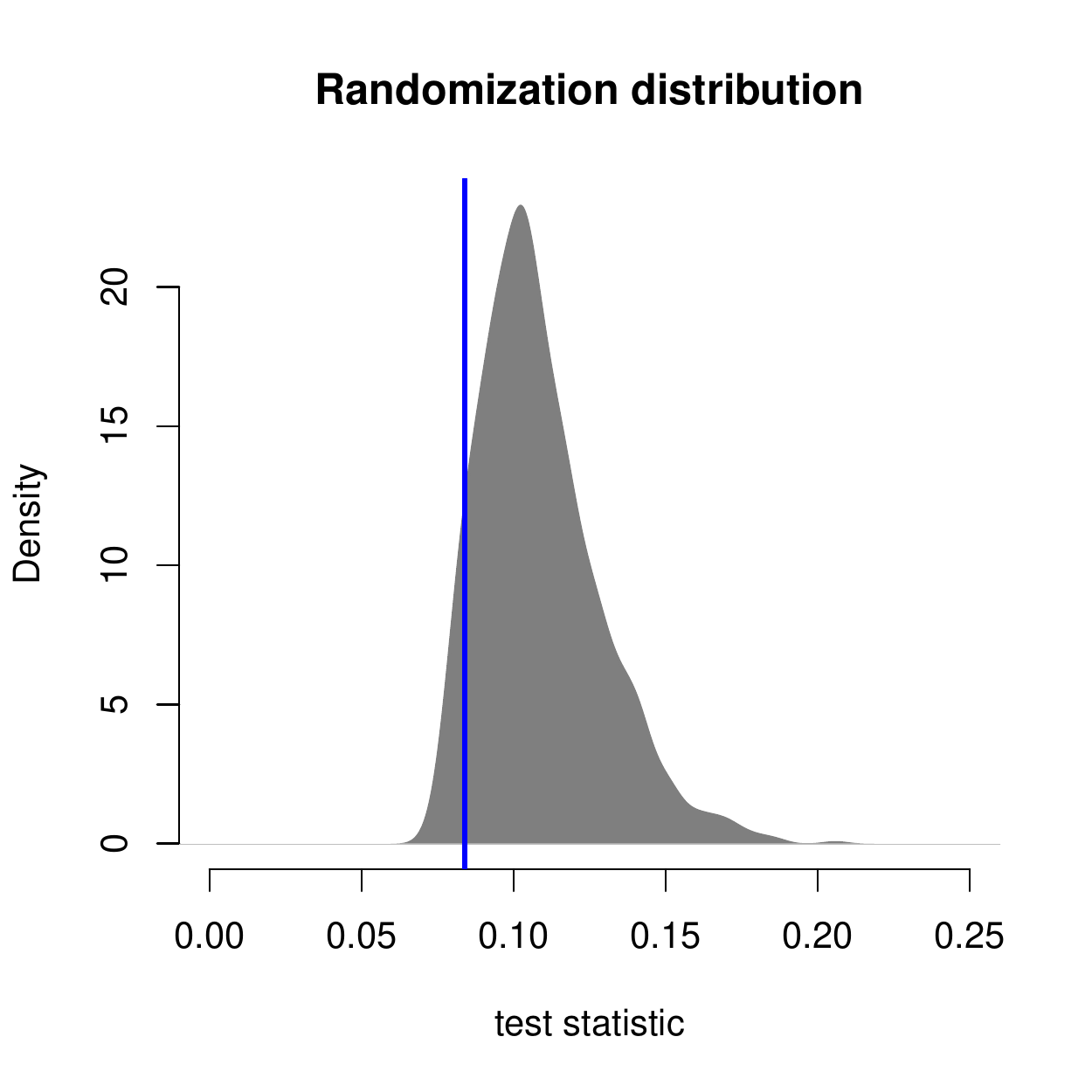}}
\caption{{\em Left:}~Representation of unit exposures for $\Zobs$.  Also shown in green are the focal units from the biclique represented on the right in Figure \ref{bicilique_visual2}. 
{\em Right:}~Test of the $r=125$m spillover radius hypothesis, where the test statistic is a difference in average outcomes between ``short-range spillover'' units and pure control units, defined in Equation~\eqref{eq:Tc}.  
Shown is the distribution of the test statistic under the null, and the blue line is the observed test statistic.  The p-value of the observed test statistic is approximately 0.077.}
\label{nulls2}
\end{figure}

\begin{figure}[h!]
\centering
	\includegraphics[scale=0.6]{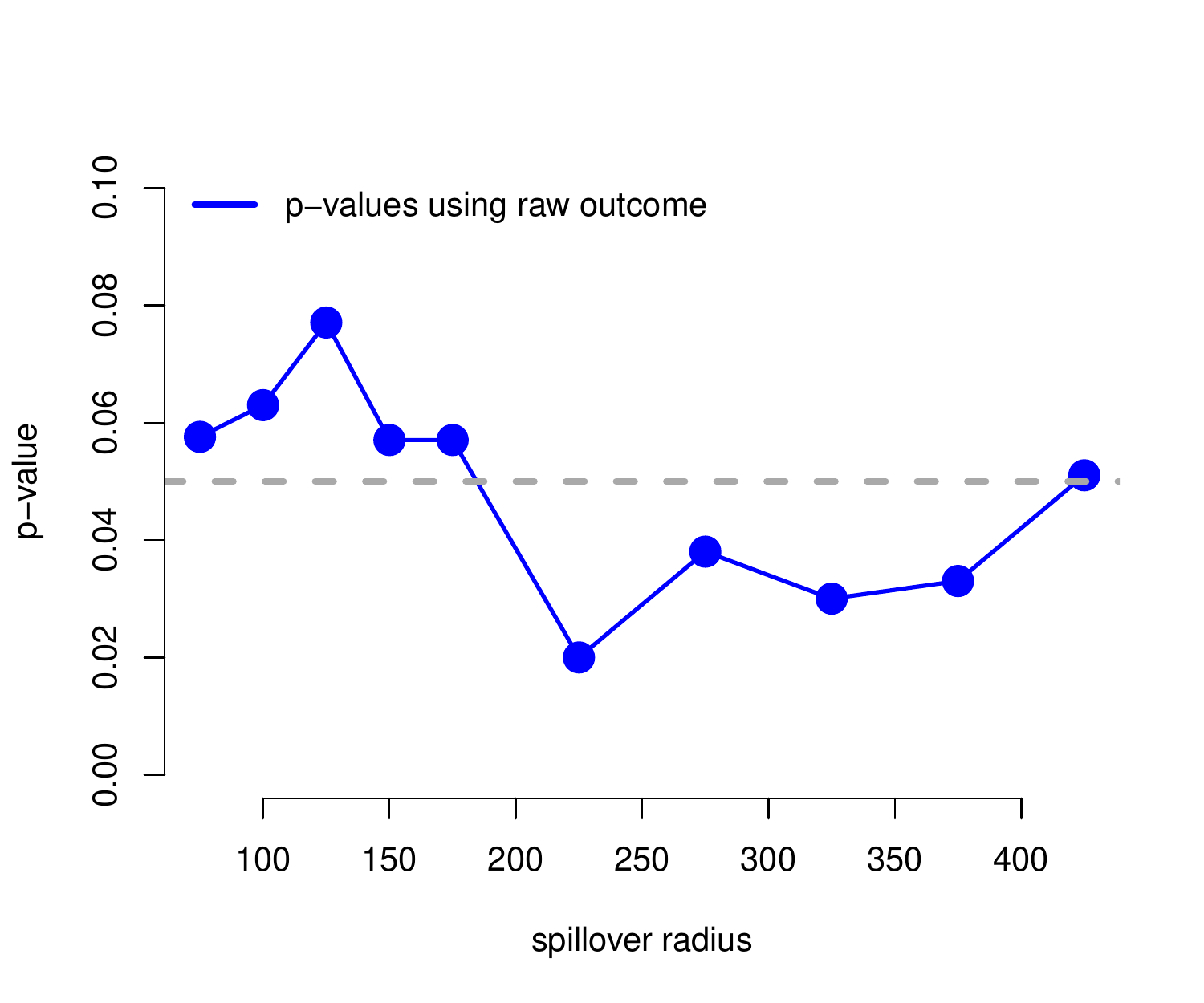}
	\caption{P-values (left vertical axis) for biclique tests with varying spillover radii (horizontal axis).  The blue line shows p-values for tests using the raw crime index.
	}
	\label{pval_fig1}
\end{figure}

%
The randomization values of the test statistic are computed conditional on the biclique depicted on the right of Figure~\ref{nulls2}.  
The observed test statistic is only larger than 7.7\% of all the randomization values, 
which is not significant at the 5\% level.
Assuming an additive short-range spillover effect, inversion of the randomization test
gives us $[-0.51, 0.03]$ as the 95\% confidence interval, indicating
that negative values for the spillover effect are more plausible under the additivity assumption. This observation suggests a decrease in crime of street segments surrounding an area with increased law enforcement/community policing, which is consistent with the literature~\citep{collazos2019hot, verbitsky2012causal}. 

Several caveats are in order, however. First, as we discuss above, it is possible that we reject the null hypothesis because the underlying exposures themselves are not correctly specified; that is, any of the singleton hypotheses $H_0^{\{\aa\}}$ and $H_0^{\{\bb_r\}}$ does not hold.
Another caveat is that, even though our randomization test is always valid, the power of the test may be affected by the inherent differences between 
the focal units. Specifically, due to the particular spatial arrangement in our application, the focal units that receive spillovers 
are mostly downtown streets, whereas units that are pure controls are mostly on 
the outskirts of the city~(see Figure~\ref{nulls2}, left). 
The observed differences in crime outcomes of these focal units
could depend, say, on differences in demographics between these two city areas---failing to account for such differences could reduce the power of the randomization test.
We discuss this issue in more detail, along with potential solutions using covariate adjustment, in Section~\ref{sec:discussion_het}.

Finally, we conduct biclique randomization tests, as described earlier, while varying the 
spillover distance, $r$.
The results are shown in Figure \ref{pval_fig1}, 
which also includes the results for the 125m-spillover 
presented in Figure~\ref{nulls2}.  For outcomes, we consider the raw crime index. 
We see that the p-values for the raw crime index are all small for varying radii; see the flat blue line in Figure \ref{pval_fig1}.
This suggests that some form of spillovers exists, where the distance 
does not seem to matter.
Alternatively, the results could indicate that the spillover effects may be 
heterogeneous with respect to distance. 

\subsection{Covariate adjustment for heterogeneous focals}\label{sec:discussion_het}



The difference between short-range spillovers and pure control affects the power of the 
spillover hypothesis tests. The concern becomes evident in Figure~\ref{nulls2}, where we see that the focal units that receive spillovers are mostly downtown streets, whereas units that are pure controls are mostly on the outskirts of the city.

One straightforward way to address such possible heterogeneity in the focal units is to adjust for known covariates. For example, we could regress outcomes 
on observed covariates,
and then perform our proposed biclique test on the residuals~\citep{rosenbaum2010design}.
To illustrate, we used this approach for the test of Section~\ref{sec:FRT_med} 
and adjusted the outcome by  distance from important societal center points, such as
school, police station, courthouse, church, park, as well as ``comuna'', which represent a neighborhood or district.
The results are shown in Figure~\ref{nulls3}. 
The new p-value is 0.13, and
suggests that outcomes under short-range spillovers are not statistically different from outcomes
under pure control. 
In Appendix~\ref{appendix:spatial}, however, we include a randomization analysis 
for many different radii,
which suggests that spillovers may exist at distances larger than 125m.
This result hints at the insufficiency of geographic distance to fully capture the intensity of spillovers. In future work, we could incorporate additional information in the distance function,  such as socioeconomic differences between street segments.  An advantage of the biclique-based testing methodology is the ability to arbitrarily define the distance function (and thus exposures) of interest.

More broadly, this analysis suggests that  adjusting for heterogeneity 
may be important in practice. The regression-based approach could be extended to adapt related randomization-based approaches that account for treatment effect heterogeneity~\citep{ding2016randomization}. Another approach could be 
to balance the focal units while incorporating covariates. 
We leave these ideas open for future work.

\begin{figure}[t!]
\centerline{
\includegraphics[scale=0.64]{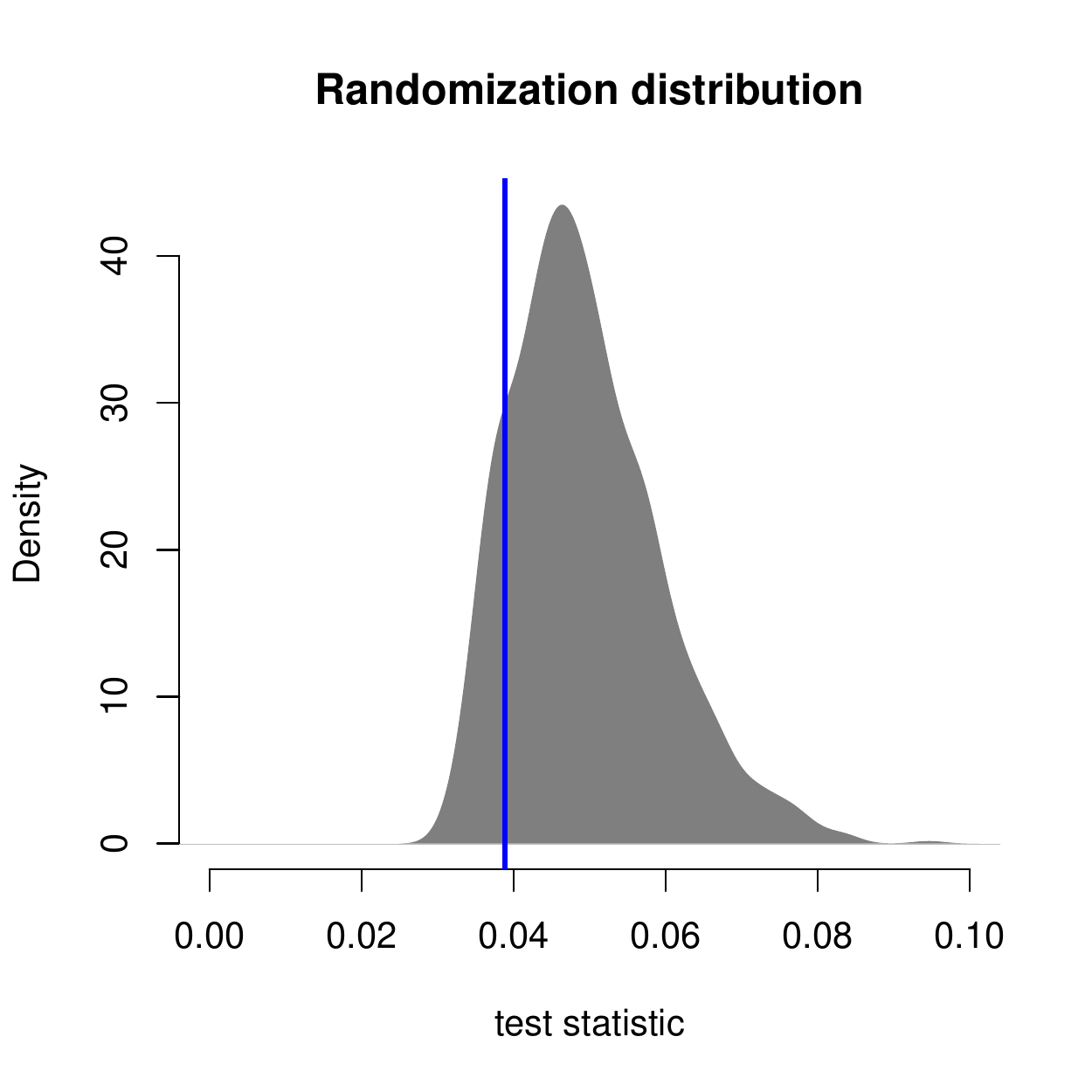}}
\caption{Randomization test for 125m spillover radius hypothesis, 
applied on the residuals of a regression of outcomes on known covariates.
%
 The p-value of the observed test statistic is approximately 0.13.}
\label{nulls3}
\end{figure}


%

%
%
\section{Extension to complex null hypotheses}
\label{sec:composite}

Thus far, we have focused on testing null hypotheses that contrast two exposures, as in Equation~\eqref{eq:H0}. 
We now turn to testing more complex null hypotheses that can be written as intersections of singleton hypotheses, such as testing null hypotheses that restrict interference between units. 
The main difficulty in testing such hypotheses through our framework is that they are based on multiple sets of exposures, rather than just one as above.
Thus, constructing the null exposure graph requires more care.


Here, we extend our biclique method to test composite intersection hypotheses. Let
 $\Inter = \{\Fset_j\subset \Edom, j=1, \ldots, J\}$ be a set of non-overlapping exposure sets, such that $\Fset_j \cap \Fset_{j'} = \emptyset$ when $j\neq j'$.
Our goal is to develop a biclique test for the  intersection hypothesis defined as follows:
\begin{equation}\label{eq:H0_inter}
\Hinter = \bigcap_{j=1}^J H_0^{\Fset_j}.
\end{equation}
As discussed in Section~\ref{sec:H0}, when $\Inter$ is a partition of $\Edom$ into singletons, such that $\bigcup_j \Fset_j = \Edom$ and $|\Fset_j| = 1$, then the intersection hypothesis in~\eqref{eq:H0_inter} is equivalent to the exclusion-restriction hypothesis, $\Hex$, of Equation~\eqref{eq:H0_ex}.
%
%
While $\Hex$ is a special case of $\Hinter$, we focus on it here because of its connection to standard hypotheses on the ``extent of interference''~(see Example~\ref{example_extent}).
We describe two methods to test $\Hex$. The first method ``patches together'' the null exposure graphs that correspond to the individual hypotheses $H_0^{\Fset_j}$.
The second method extends the approach of~\citet{athey2018exact} to leverage information from the multiple exposure graphs that correspond to $\Hex$.
Finally, in Appendix~\ref{appendix:Hex}, we extend both of these methods 
to test $\Hinter$ in its more general form~\eqref{eq:H0_inter}.

\subsection{Multi-null exposure graph}

Our first approach to test $\Hex$ relies on the key observation that, for any unit $i$,
$\Hex$ essentially splits $\Zdom$ into equivalence classes, such that any two assignments 
within the same equivalence class induce the same potential outcome for unit $i$.
In general, these equivalence classes are allowed to differ across units. In practice, however, these classes typically
 correspond to the same exposure level, and thus have the same interpretation across units; e.g., the exposure levels 
``pure control'', ``spillover$_r$'', and ``other'' in~\eqref{eq:f} for any unit $i$ can also be viewed as equivalence classes 
of $\Zdom$.
This differs from $\Hab$, which posits equal potential outcomes only within the two specified classes, $\aa$ and $\bb$, rather than within every class. 
To test $\Hex$, we will therefore need to expand the definition of the null exposure graph.

In particular, we define the {\em multi-null exposure graph} of $\Hex$ with respect to $z\in\Zdom, \Zset\subseteq \Zdom$ as follows:
\begin{equation}\label{eq:def_Gz}
G(z; \Zset) = (V, E),~\text{with}~V=\Udom~\text{and}~E = \{(i, z') : i\in\Udom, z'\in\Zset,~f_i(z') = f_i(z)\}.
\end{equation}
In words, $G(z; \Zset)$ encodes which potential outcomes would be imputable 
if $z$ is realized in the experiment, just like the simpler null exposure graph above. 
We need this new definition because $\Hex$ is comprised of many individual null hypotheses, $H_0^{\Fset_j}$.
As such, for any given pair $(i, z')$, the missing potential outcome $Y_i(z')$
could be imputed via a different $H_0^{\Fset_j}$ depending on which assignment $\Zobs$ is realized 
in the experiment.
 The idea for a test of $\Hex$ is then to partition $\Zdom$ through a ``patchwork'' of bicliques from multi-null exposure graphs. This idea is formalized below.
\begin{procedure}\label{proc:multi}
To generate a biclique decomposition for testing $\Hex$, we do the following:
\begin{enumerate}
\item Initialize $\Cset \gets \emptyset$, and $\Zdom_0 \gets \Zdom$. 
\item While $\Zdom_0 \neq \emptyset$:
\begin{enumerate}[(a)]
\item Sample $Z$ uniformly from $\Zdom_0$.
\item Obtain a non-empty biclique $C = (U, \mathcal{Z})$ within $G(Z; \Zdom_0)$; e.g., as in~Equation~\eqref{eq:max_clique}.
\item Update $\Cset \gets \Cset\cup \{C\}$ and $\Zdom_0 \gets \Zdom_0 \setminus \Zset$.
\end{enumerate}
\end{enumerate}
\end{procedure}
The main output of this procedure is a biclique decomposition, i.e., a collection $\Cset$ of bicliques that fully partitions the treatment assignment space~(see Definition~\ref{def:decomposition}).
However, it is different from the earlier definition since the decomposition here is not applied to a single 
null exposure graph, but is instead comprised of bicliques from various null exposure graphs, $G(z;\Zset)$.
To test $\Hex$ we then simply condition the test of Procedure~\ref{proc:test} on the biclique of $\Cset$ that contains $\Zobs$, as
in the original test of Procedure~\ref{proc:test}.
We prove the validity of this test in the following theorem.

\begin{theorem}\label{thm2}
\ThmTwo
\end{theorem}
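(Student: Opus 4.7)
The plan is to reduce Theorem~\ref{thm2} to Theorem~\ref{thm1} by verifying the two requirements of the latter within the construction of Procedure~\ref{proc:multi}: that any test statistic restricted to the conditioning biclique $C$ is imputable under $\Hinter$, and that the randomization distribution $r(Z)\propto \Ind{Z\in\Zset(C)}\,P(Z)$ used in Procedure~\ref{proc:test} is precisely the conditional design $P(Z\mid C)$ induced by the implicit conditioning mechanism $P(C\mid Z)=\Ind{C\in\Cset}\,\Ind{Z\in\Zset(C)}$.

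The central step is the imputability argument. Fix any biclique $C=(U,\Zset)\in\Cset$ produced by Procedure~\ref{proc:multi}; by construction, $C$ was extracted from a multi-null exposure graph $G(\tilde Z;\Zdom_0)$ for a seed $\tilde Z$ drawn in Step~2(a). By the edge rule in Equation~\eqref{eq:def_Gz}, every edge $(i,z')\in C$ satisfies $f_i(z')=f_i(\tilde Z)$, so for each $i\in U$ the exposure $f_i(z')$ is constant as $z'$ ranges over $\Zset$. Consequently, if $\Zobs\in\Zset$, then $f_i(\Zobs)=f_i(z')$ for every $i\in U$ and every $z'\in\Zset$. Under $\Hinter$---taking $\Hex$ as the canonical case when $\Inter$ is the singleton partition of $\Edom$---this common exposure lies in some $\Fset_j\in\Inter$, and the component hypothesis $H_0^{\Fset_j}$ forces $Y_i(z')=Y_i(\Zobs)$. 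Hence any test statistic restricted to $C$ and evaluated at $(z',Y(z'))$ can be recomputed from $(z',\Yobs)$, delivering imputability.

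It remains to verify that $\Cset$ is a valid biclique decomposition so that Theorem~\ref{thm1} applies. Procedure~\ref{proc:multi} iteratively removes the assignment side $\Zset$ of each selected biclique from $\Zdom_0$, so the assignment-sides of the bicliques in $\Cset$ partition $\Zdom$; in particular there is a unique $C\in\Cset$ containing $\Zobs$, and the conditioning mechanism $P(C\mid Z)=\Ind{C\in\Cset}\,\Ind{Z\in\Zset(C)}$ is well-defined. Bayes' rule then yields $P(Z\mid C)\propto \Ind{Z\in\Zset(C)}\,P(Z)$, which matches the randomization distribution in Step~3 of Procedure~\ref{proc:test}. The hypotheses of Theorem~\ref{thm1} are thus met, giving conditional validity of the test. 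The main obstacle I anticipate is bridging $\Hex$ and the more general $\Hinter$ when $\bigcup_j \Fset_j\subsetneq \Edom$: imputability in the argument above requires that the common exposure $f_i(\tilde Z)$ lies in some $\Fset_j$, which is automatic for $\Hex$ but for general $\Hinter$ needs either a restriction of the edge definition of $G(z;\Zset)$ to pairs with $f_i(z)\in\bigcup_j \Fset_j$, or a matching filter on the seed $\tilde Z$ in Step~2(a) of Procedure~\ref{proc:multi}---this is the subtlety the appendix generalization is designed to address.
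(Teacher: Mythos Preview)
Your proposal is correct and follows essentially the same route as the paper: verify imputability within the conditioning biclique and check that the randomization distribution in Procedure~\ref{proc:test} coincides with $P(Z\mid C)$ under the conditioning mechanism $P(C\mid Z)=\Ind{C\in\Cset}\,\Ind{Z\in\Zset(C)}$, then invoke Theorem~\ref{thm1}. If anything, you are slightly more careful than the paper's own argument: the paper writes the conditioning biclique as coming from $G(\Zobs;\Zdom_0)$, whereas Procedure~\ref{proc:multi} actually draws a seed $\tilde Z$ uniformly from $\Zdom_0$, and you correctly observe that the biclique property in $G(\tilde Z;\Zdom_0)$ forces $f_i(z')=f_i(\tilde Z)$ for all $z'\in\Zset$, so in particular $f_i(\Zobs)=f_i(z')$ whenever $\Zobs\in\Zset$. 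Your closing remark about the $\Hex$ versus general $\Hinter$ gap is also on point and matches exactly the adjustment the paper makes in Definition~\ref{def:multi}.
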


Theorem~\ref{thm2} shows that our proposed method can in fact test more complex hypotheses than the contrast hypothesis of Equation~\eqref{eq:H0}. 
As with contrast hypotheses, the power of this test mainly depends on the sizes of the conditioning bicliques in Step 2(b) of Procedure~\ref{proc:multi}; 
see also the power analysis in Section~\ref{sec:power}. 
However, Procedure~\ref{proc:multi} offers no guarantees for power because, like the decomposition algorithm of Section~\ref{sec:decomposition}, 
it also employs a greedy approach in constructing bicliques. We leave for future work a more sophisticated procedure that could overcome this.

\subsection{Approach based on Athey et.~al.~(2018)}\label{sec:multi_athey}
Hypothesis $\Hex$ can also be tested using the approach of~\citet{athey2018exact}. 
As described in Section~\ref{sec:comparison}, this approach is equivalent to conditioning on a biclique $C = (U, \Zset)$, where 
we first sample $U$ using an arbitrary distribution $g(U)$ on $\Udom$, and then set
$\Zset = \{ z\in\Zdom: f_i(z) = f_i(\Zobs),~i\in\Uset \}$.
However, as also argued earlier, this approach may lack power because an arbitrary sampling of focal units $U$ 
may lead to small bicliques, which, as we discuss in Section~\ref{sec:power}, can be detrimental to power.

We therefore propose to adjust $g(U)$ to give more weight to focal units that allow for larger bicliques.
Specifically, 
we first estimate the average degree of every unit in $\Udom$ over a large sample  of multi-null exposure graphs.
We then sample units with larger weight to higher-degree units; these units are on average connected to more assignments, and so will lead to larger conditioning  bicliques.
The concrete procedure can be described as follows:
\begin{enumerate}
\item Initialize, $d_i \gets 0$ for all $i\in\Udom$.
\item For $r=1, \ldots, R$ replications:
\begin{enumerate}[(a)]
\item Sample $Z \sim \pr(Z)$ according to the design.
\item Set $d_i \gets d_i + \mathrm{deg}_i(G(Z; \Zdom))$, where $\mathrm{deg}_i(G)$ denotes the degree of node $i$ in graph $G$.
\end{enumerate}
\item Define $g(U)$ as sampling over $\Udom$ weighted by $(d_1,\ldots, d_n)$, 
and apply the approach of~\citet{athey2018exact}.
\end{enumerate}
This test is valid using the same justification as for the original test of~\citet{athey2018exact} since we still randomly sample the focal units independently of $\Zobs$. However, our approach may increase power as the sampled focal units will generally 
lead to larger  bicliques.  We leave to future work a more detailed power comparison with the original and ``$\epsilon$-nets'' methods of~\citet{athey2018exact}.

\CB

\section{Concluding remarks}
In this paper,  we extend the classical Fisher randomization test to settings with general interference.
Our main contribution is the concept of the null exposure graph, which 
represents the null hypothesis as a bipartite graph between units and assignments.
Conditional on a  biclique in this graph, the null hypothesis is sharp, and the corresponding (conditional) randomization test is valid. 
We showed the benefits of this approach in both clustered and spatial interference settings. Furthermore, our approach inherits the properties and theory 
of conditional randomization tests. This includes covariate adjustment methods 
\citep{rosenbaum2002covariance}, as well as recent developments on studentization 
of test statistics \citep{wu2018randomization}.
%


There are a number of promising directions for future work. 
First, we can explore how to combine information across biclique tests that condition on different bicliques. For instance, we could follow \citet{geyer2005fuzzy} and leverage the distribution of $p$-values across tests to improve power, as discussed in \citet{basse2019randomization}. We could also adapt recent proposals on multiple randomization tests from \citet{zhang2021multiple}.
Second,
 we could investigate how much data ``is thrown away'' 
 by conditioning, and suggest biclique decompositions of the null exposure graph that minimize data loss. 
 Third, building on our results for power,
 we could further develop the problem of optimal design  for a given set of spillover hypotheses.
 Finally, it would be interesting 
 to know under which conditions our proposed tests can be implemented more efficiently.

 \appendix

\setcounter{theorem}{0}
\setcounter{figure}{10}

\section{Null hypothesis $\Hf$}\label{appendix:H0}
An alternative way to view the null hypothesis, $\Hf$, in Equation (\ref{eq:H0}), is to think of it as a composite hypotheiss.
In particular, $\Hf$ is equivalent to testing a composite hypothesis, denoted as $\{H_{0, k}\}$, where $H_{0,k}$ is any simple hypothesis of the form $H_{0, k}:~ \mathbf{Y} = \mathbf{Y}_{0,k}$.  Here, $\mathbf{Y}_{0,k}$ is any $N\times 2^N$ potential outcomes matrix with the 
following constraints: (\textit{i}) it has $Y^{obs}$ at the column corresponding to $Z^{obs}$, (\textit{ii}) when  unit $i$ under $z$ is exposed, i.e., $f_i(z)\in\{a, b\}$, its outcome in $\mathbf{Y}_{0,k}$ is fixed 
to $Y^{obs}_i$. 
This reveals that testing under interference entails a problem of identification.
Specifically, the challenges here are that the simple hypotheses are not testable since not all units are exposed under all assignments, and 
the number of $\mathbf{Y}_{0,k}$ is intractably large.  

%

\section{Proofs}\label{appendix:proofs}

\subsection{Main results}

\addtocounter{theorem}{1}
\begin{theorem}
\ThmOne
\end{theorem}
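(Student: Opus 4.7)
The plan is to deduce this theorem as a direct specialization of Theorem~\ref{thm1} (from \citet{basse2019}) by identifying Procedure~\ref{proc:test} as a particular instance of the conditional randomization test framework described there. The key observation is that, given a fixed biclique decomposition $\Cset$, the choice of conditioning event $C$ is deterministic given $\Zobs$: we take $C$ to be the unique biclique in $\Cset$ whose assignment set $\Zset(C)$ contains $\Zobs$, which is well-defined because $\Cset$ partitions $\Zdom_\Fset$. This corresponds to the conditioning mechanism
$$
P(C \mid Z) \;=\; \Ind{C \in \Cset}\cdot \Ind{Z \in \Zset(C)}.
$$
Once this identification is in place, verifying the hypotheses of Theorem~\ref{thm1} reduces to two checks.

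First, I would verify that the test statistic $t(Z, \Yobs; C)$ is conditionally imputable under $\Hf$ given $C$. This is exactly the content of Proposition~\ref{prop:clique}: for any $z \in \Zset(C)$ we have $Y_i(z) = Y_i(\Zobs)$ for every $i \in U(C)$ under $\Hf$. Since the test statistic is restricted on $C$, it depends on $z$ and $y$ only through the subvectors indexed by $U(C)$, so the per-unit imputability on $U(C)$ provided by the biclique structure suffices to evaluate $t(z, Y(z); C) = t(z, \Yobs; C)$ at every $z$ in the randomization support. This is precisely where the completeness of the biclique (every unit on one side is connected to every assignment on the other) is used.

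Second, I would verify that the randomization distribution $r(Z) \propto \Ind{Z \in \Zset(C)}\,\pr(Z)$ defined in Step~3 of Procedure~\ref{proc:test} agrees with the conditional law $P(Z \mid C)$ demanded by Theorem~\ref{thm1}. Applying Bayes' rule to the conditioning mechanism above,
$$
P(Z \mid C) \;\propto\; P(C \mid Z)\,\pr(Z) \;=\; \Ind{C \in \Cset}\,\Ind{Z \in \Zset(C)}\,\pr(Z),
$$
which coincides with $r(Z)$ on the event $\{C \in \Cset\}$ that always holds by construction. Combining these two checks with Theorem~\ref{thm1} delivers conditional validity of the p-value.

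The main subtlety I expect, rather than a genuine obstacle, is the fact that the theorem states an exact equality $= \alpha$ rather than the usual ``$\le \alpha$'' one gets from discrete randomization tests. To obtain equality one needs either a continuous randomization distribution for the test statistic (e.g., breaking ties by an auxiliary uniform randomization) or an interpretation in which $\alpha$ ranges over achievable p-value levels. I would handle this with a short remark rather than a separate argument, since the core validity statement --- imputability plus the correct conditional distribution --- follows cleanly from Proposition~\ref{prop:clique} and the partitioning property of $\Cset$, and the tie-breaking convention is orthogonal to the biclique construction itself.
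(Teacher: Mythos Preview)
Your proposal is correct and follows essentially the same route as the paper's own proof: both verify the two hypotheses of Theorem~\ref{thm1} by (i) invoking Proposition~\ref{prop:clique} for imputability within the biclique, and (ii) checking via Bayes' rule that the conditioning mechanism $P(C\mid Z)=\Ind{C\in\Cset}\Ind{Z\in\Zset(C)}$ yields $P(Z\mid C)\propto\Ind{Z\in\Zset(C)}\pr(Z)=r(Z)$. Your remark on the ``$=\alpha$'' versus ``$\le\alpha$'' subtlety is a fair observation that the paper does not address explicitly.
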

\begin{proof} 
Let $\Zobs$ be the observed assignment and $C$ be the biclique that 
contains $\Zobs$. Let $\Zset(C)$ denote the set of assignments in $C$, 
In the formalism of~\citet{basse2019}, $C$ is the 
conditioning event of our test.
We will use \citep[Theorem 1]{basse2019} to prove the validity of our proposed test.
This requires to show that the following two  conditions hold.  For notational simplicity, below we implicitly condition on $\Cset$ and also assume 
that $\Ind{C\in\Cset}=1$ whenever we use ``$C$'' to denote a conditioning biclique. \\

\noindent\underline{1. Imputability of test statistic}:
The potential outcomes are imputable within the biclique $C$ under $\Hf$, since, by definition, 
the biclique units are exposed to some exposure in $\Fset$ for any assignment in the biclique.Our test statistic is using only outcomes from units and assignments within the biclique, and so the condition 
in Equation~(4) of \citet[Theorem 1]{basse2019} holds. \\

\noindent\underline{2. Correct randomization distribution.}:
It remains to show that $r(Z) = p( C| Z)$; i.e., that the randomization distribution, $r(Z)$, of our test 
coincides with the actual conditional distribution, $P(Z | C) \propto P(C | Z) \pr(Z)$, 
induced by the conditioning mechanism $p(C|Z)$~\citep[Section 3.2]{basse2019}, 
and the design $\pr(Z)$.
The conditioning mechanism of our procedure is equal to:
\begin{equation}\label{eq0}
p(C | \Zobs=z) = \Ind{z \in  \Zset(C)},
\end{equation}
since the test simply conditions on the biclique that contains the assignment.
The marginal probability of conditioning on biclique $C$ is therefore equal to:
\begin{equation}\label{eq1}
p(C) = \sum_{z'} p(C | z') \pr(z') =  \sum_{z'}  \Ind{z' \in  \Zset(C)} \pr(z').
\end{equation}
The randomization distribution defined in Step 3 of the testing procedure of Section~\ref{sec:test} is equal to:
\begin{align}\label{eq:rz}
r(z)  &= \Ind{z \in \Zset(C)} \frac{\pr(z)}{\sum_{z'} \Ind{z' \in \Zset(C)} \pr(z')} \nonumber&\commentEq{By definition, in Step 3 of the biclique test}\\
& = \Ind{z \in \Zset(C)} \frac{\pr(z)}{p(C)}
\nonumber&\commentEq{by Equation~\eqref{eq1}}\\ 
& = P(C | Z) \frac{\pr(z)}{p(C)}
\nonumber&\commentEq{by Equation~\eqref{eq0}}\\ 
& = P(Z | C).
\end{align}
The conditional validity of our test now follows from \citep[Theorem 1]{basse2019}.
\end{proof}

\addtocounter{theorem}{1}
\begin{theorem}
\ThmTwo
\end{theorem}
\begin{proof}
The main difference with Theorem~\ref{thm:main} is that the test now also conditions on $A = \{A_i : i=1, \ldots, N\}$, the exposure sets of each unit.
In this setting, there is no single biclique decomposition. 
Instead, for every possible value of $A$ there corresponds 
one biclique decomposition, say, $D(A)$.
Thus, Equation~\eqref{eq0} is updated to:
\begin{equation}\label{eq0_2}
p(C | \Zobs=z) = \Ind{C \in  D(A_z)} \Ind{z \in  \Zset(C)},
\end{equation}
where $A_z$ denotes the value of $A$, which is uniquely determined by the observed assignment $z$.
We will show that the equality
$$
\Ind{C \in  D(A_z)} \Ind{z \in  \Zset(C)} = \Ind{z \in  \Zset(C)}
$$
is guaranteed by construction of our null exposure graph. 
For that, it suffices to show that $\Ind{C \in  D(A_z)}=0$ implies that 
$ \Ind{z \in  \Zset(C)} = 0$.
We prove by contradiction. 
Suppose that $\Ind{C \in  D(A_z)}=0$ and $\Ind{z \in  \Zset(C)} = 1$
for some biclique $C$ and observed assignment $z$.
Note that the construction of the null exposure graph
implies that all units in the null exposure graph receive exposures contained in $A_z$.
Thus, $ \Ind{z \in  \Zset(C)} = 1$ implies that all units in $C$ receive exposures contained in $A_z$. However, since the exposures in $\Fset_{[i]}$ are all distinct, 
$\Ind{C \in  D(A_z)}=0$ implies that there exists at least one unit in $C$ 
that receives an exposure that is not in $A_z$. This is a contradiction.
We conclude that:
$$
p(C | \Zobs=z) = \Ind{z \in  \Zset(C)},
$$
and so the rest of the proof of Theorem~\ref{thm:main} follows.

\end{proof}


\subsection{Implementation for arbitrary designs}
\label{sec:arbitrary-designs}

We now address a practical implementation issue.  Although our method works for arbitrary designs, it can become
computationally intractable if the support, $\Zdom = \{z : \pr(z) > 0\}$, is too large~(on the order of hundreds of thousands of nodes), since biclique enumeration is NP-hard. Fortunately, a small modification of our test can address this issue. The idea is to add a step at the beginning of Procedure~\ref{proc:test} that subsamples assignments to limit the size of $\Zdom$. 
We now show that the following procedure is still valid:
\begin{enumerate}
\item Draw $\Zobs \sim P(\Zobs)$.
\item Draw $M-1$ assignments uniformly at random from 
$\Zdom \setminus \{\Zobs\}$ and let
  $\Zdom_M$ be the set of size $M$ formed as the union of $\{\Zobs\}$ and the set of size $M-1$ just constructed.
\item Run our biclique test in Procedure~\ref{proc:test}, using the null exposure graph
of the new support set, $\Zdom_M$.
\end{enumerate}

The key for the proof is to consider the conditioning event as the pair $(C, \Zdom_M)$, where $\Zdom_M$ is the sampled support set, 
and $C$ is the biclique we condition on. The original test in Procedure~\ref{proc:test} conditions only on a biclique, and $\Zdom_M = \Zdom$.
We obtain:
\begin{align}\label{eq:arb}
  P(Z | C, \Zdom_M)
  &\propto P(C, \Zdom_M | Z) P(Z) \nonumber\\
  &\propto P(C | \Zdom_M, Z) P(\Zdom_M\mid Z) P(Z) \nonumber\\
  &\propto \Ind{Z \in C} \Ind{C \in \Zdom_M} P(Z),
\end{align}
where we used the fact that $P(\Zdom_M | Z)$  is independent of $Z$ 
by construction of $\Zdom_M$; also, 
$ P(C | \Zdom_M, Z)  =  \Ind{Z \in C} \Ind{C \in \Zdom_M} $ because 
we simply condition on  the biclique in $\Zdom_M$ that assignment $Z$ is contained in.
Equation~\eqref{eq:arb} ensures validity of this test if we simply make sure to 
make a biclique decomposition on $\Zdom_M$, so that  $\Ind{C \in \Zdom_M} =1$. 
The remainder terms in the randomization distribution, 
$\Ind{Z \in C} P(Z)$, correspond to the sampling distribution of Procedure~\ref{proc:test}~(Step 3).

\section{More on clustered interference}\label{appendix:clustered}
Here, we continue our discussion on testing power in Section~\ref{sec:clustered_sim}. In Figure \ref{app:power_cliq2}, we reexamine how biclique characteristics affect the testing power.  The data shown are the same as in Figure \ref{power_cliq1}, now displayed on a single plot with color denoting power.  Each dot corresponds to a different biclique decomposition of the null exposure graph, and we compute the average number of assignments and units within each decomposition.  Requiring more assignments in a biclique will make including more units a challenge.  Intuitively,  this results from the graphical nature of bicliques -- 
they are complete subgraphs, and including more left nodes will dampen the size of the right node set.  This inverse (nonlinear) relationship can be seen on the plot.  Note also that there is a balancing of power for different sized cliques.  The highest powered tests come from bicliques with $\sim 150$ units and $\sim 25$ assignments.  In practice, this is a tradeoff that can be navigated.

\begin{figure}[H]
\centerline{\includegraphics[scale=0.7]{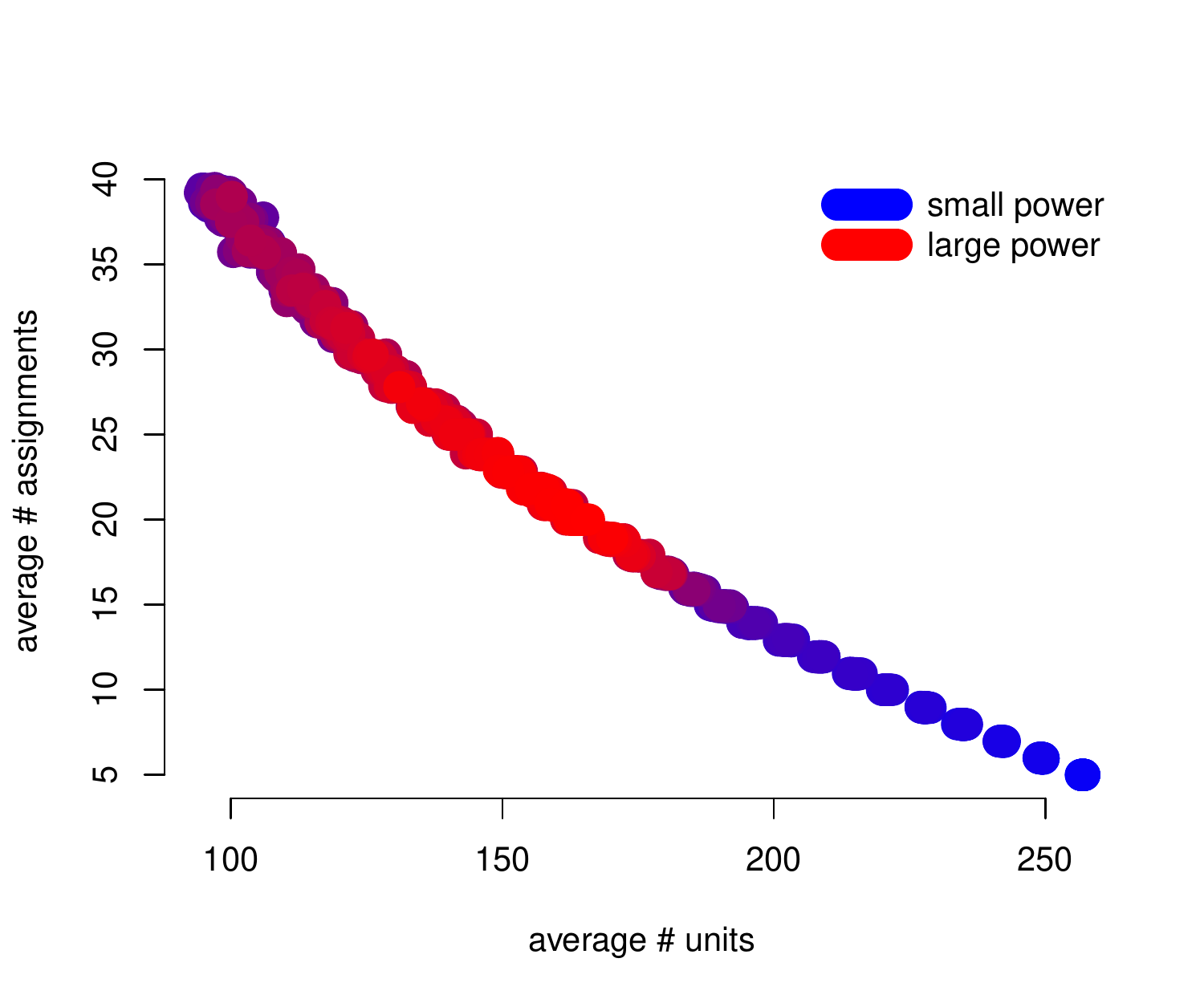}}
\caption{Average number of biclique assignments and units versus the power.  Each dot corresponds to biclique decomposition of the null exposure graph, and the color denotes the power value from the simulation.  Red (blue) correspond to large (small) power values.}
\label{app:power_cliq2}
\end{figure}

\section{Design-assisted biclique test}
\label{sec:design_test_appendix}

We now define the 
``design-assisted'' biclique test.
\begin{procedure}\label{proc:test2}
For the two-stage experimental design in the cluster interference setting of Section~\ref{sec:clustered_sim}, 
define the following procedure.
\begin{enumerate}
\item For each cluster $k=1, \ldots, K$,
\begin{enumerate}[(a)]
\item Let $U_k\subseteq \Udom$ denote the set of units in cluster $k$, and $\Zdom_k\subseteq \Zdom$ the set of 
possible treatment assignments of cluster $k$~($N/K$-length binary vector).

\item Take $U_{k,1}$ to be a random half of $U_k$ and $U_{k, 2} = U_k \setminus U_{k, 1}$. Define $\mathbb{Z}_{k, j} = \{ z \in \Zdom_k : z_i = 1~\&~i \in U_{k, j}\}$ for $j=1, 2$, as the set of assignments for which the cluster is treated, 
and the single unit that is treated is from $U_{k, j}$. 
\item Define the following two bicliques in cluster $k$: $C_{k, 1} = (U_{k, 1}, \mathbb{Z}_{k, 2})$ 
and $C_{k, 2} = (U_{k, 2}, \mathbb{Z}_{k, 1})$. 
\item Let $\mathbf{0}_k$ denote the ``all control'' assignment in cluster $k$. We attach this assignment 
to one of the two bicliques from step 1(c) at random.
That is, with probability $1/2$ we set $C_{k, 1} = (U_{k, 1}, \mathbb{Z}_{k, 2} \cup \{ \mathbf{0}_k \})$, 
otherwise we set $C_{k, 2} = (U_{k, 2}, \mathbb{Z}_{k, 1} \cup \{ \mathbf{0}_k \})$.
\end{enumerate}
\item Finally, let $Z$ be the global assignment vector and let $Z_{[k]}$ denote the 
subvector corresponding to cluster $k$. Let $C_k(Z)$ be the unique biclique from 
$\{C_{k, 1}, C_{k, 2}\}$ such that $Z_{[k]} \in c(k, Z)$.
We define the following conditioning mechanism:
$$
P(C | Z) \propto \Ind{ C = \bigcup_{k=1}^K  C_k(Z)}.
$$
\item Run the randomization test of Procedure~\ref{proc:test} conditional on $C \sim p(C | \Zobs)$ as usual.
\end{enumerate}
\end{procedure}

The above procedure is using the structure of the two-stage experimental design to define 
a better conditioning mechanism, $P(C|Z)$.
The structure of the design is leveraged mainly in step 1(c).
Specifically, note that only one unit is being treated within each cluster, and that we are testing the null hypothesis in~\eqref{eq:H0cluster}, which allows imputation of potential outcomes for some unit $i$ for any cluster assignment for which $i$ is not treated.
Thus, $C_{k, 1}, C_{k, 2}$, as defined, are indeed bicliques. 
The rest of the Procedure~\ref{proc:test2} makes sure to utilize these two large bicliques per cluster in the joint conditioning mechanism, which leads to conditioning the test on larger bicliques. Theorem~\ref{thm:power} implies that this can yield more powerful tests. 
The following theorem shows that this procedure is valid. 
\begin{theorem}\label{thm:design_aid}
\ThmDesignAid
\end{theorem}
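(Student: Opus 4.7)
The plan is to reduce Theorem~\ref{thm:design_aid} to Theorem~\ref{thm:main} by showing that Procedure~\ref{proc:test2} implicitly constructs a (randomized) biclique decomposition $\Cset$ of the null exposure graph $G^{\{0,1\}}_f$ for $H_0$ in Equation~\eqref{eq:H0cluster}, and that the conditioning mechanism in Step~2 coincides with the mechanism ``send $Z$ to the unique biclique of $\Cset$ containing it.'' Once this identification is made, conditional validity follows from Theorem~\ref{thm:main} applied to the realized decomposition, and the overall nominal-level validity from the law of total expectation over the auxiliary random choices in Steps~1(b) and 1(d).

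The first technical step I would carry out is to verify the biclique property of each $C_{k,j}$. Fix a cluster $k$ and consider $C_{k,1}=(U_{k,1},\Zset_{k,2})$. Every unit $i\in U_{k,1}$ is untreated under every $z\in \Zset_{k,2}$, since the treated unit of such a $z$ lies in $U_{k,2}$; hence $f_i(z)=1\in\{0,1\}$, producing the required edge in $G^{\{0,1\}}_f$. If the all-control assignment $\mathbf{0}_k$ is attached to this side by Step~1(d), then $f_i(\mathbf{0}_k)=0\in\{0,1\}$ and edges are still present; the symmetric argument handles $C_{k,2}$. Next I would verify the partition property on the assignment side: since $\Zset_{k,1}$, $\Zset_{k,2}$, and $\{\mathbf{0}_k\}$ are pairwise disjoint with union equal to $\Zdom_k$, attaching $\mathbf{0}_k$ to exactly one side makes $\{C_{k,1},C_{k,2}\}$ a partition of $\Zdom_k$ on the assignment side. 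Because the two-stage design assigns clusters independently, the union $C=\bigcup_k C_k(Z)$ is a biclique in the global null exposure graph, and the resulting collection $\Cset$ satisfies Definition~\ref{def:decomposition}.

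At this point the conditioning mechanism of Step~2 reduces to $P(C\mid Z)=\Ind{Z\in\Zset(C),\,C\in\Cset}$, which is precisely the mechanism used in Procedure~\ref{proc:test}. The derivation in the proof of Theorem~\ref{thm:main} then applies verbatim: the test statistic is imputable by Proposition~\ref{prop:clique}, and the randomization distribution $r(Z)\propto\Ind{Z\in\Zset(C)}P(Z)$ agrees with the correct conditional distribution $P(Z\mid C)$.

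The main obstacle is that Theorem~\ref{thm:main} presupposes a fixed biclique decomposition, whereas $\Cset$ here is generated in part by the auxiliary randomness of Steps~1(b) and 1(d). The cleanest way around this is to absorb the auxiliary variables into the conditioning event, namely to augment $C$ with the realized halves $\{U_{k,1}\}_k$ and the attachment indicators. Because these draws are independent of both $Z$ and the potential outcomes, the design $P(Z)$ is unaffected by this augmentation, so Theorem~\ref{thm:main} applies conditionally on the auxiliary variables; taking expectation over them then yields the claimed nominal-level validity marginally.
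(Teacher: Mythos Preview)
Your argument is correct and reaches the same conclusion as the paper, but the route differs in one meaningful respect. The paper does not condition on the auxiliary randomness of Steps~1(b) and 1(d); instead it computes the conditioning mechanism $P(C\mid Z)$ marginally over the coin flips and shows it factors as $2^{-(1-p)K}\,\Ind{Z\in C}$. The key observation there is that the two-stage design always places exactly $(1-p)K$ clusters in control, so this prefactor is the \emph{same} for every $Z$ and cancels in the normalization of $r(Z)$; the rest of the proof of Theorem~\ref{thm:main} then carries over verbatim. Your approach sidesteps this computation by absorbing the splits and attachment indicators into the conditioning event, so that $\Cset$ becomes fixed and Theorem~\ref{thm:main} applies directly; you then average over the auxiliary draws. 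This is arguably cleaner and does not rely on the fixed-fraction property of the cluster-level design, while the paper's calculation makes explicit the symmetry of the randomized mechanism itself.

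One small slip: the sentence ``the two-stage design assigns clusters independently'' is inaccurate---exactly $\lfloor K/2\rfloor$ clusters are treated via complete randomization---but you do not actually need independence for the claim that $C=\bigcup_k C_k(Z)$ is a biclique in the global null exposure graph. What you need is the clustered interference structure, i.e., that $f_i(Z)$ depends only on $Z_{[k]}$ when $i$ lies in cluster $k$; this is what guarantees every required edge, and it holds regardless of how clusters are jointly randomized.
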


The proof of Theorem~\ref{thm:design_aid} is straightforward and relies on the fact that, by construction,
the conditioning mechanism relies on a partitioning of the assignment space, such that 
$P(C | Z) \propto \Ind{Z \in C}$ as in Procedure~\ref{proc:test}.


\begin{proof}
By definition of the two-stage design, there is a constant proportion, say $p$, of clusters being treated out of $K$ in total. 
As defined in the main text, let $C_k(Z)$ denote the unique cluster from set $\{C_{k,1}, C_{k, 2}\}$ in cluster. By construction, when $Z_{[k]}=\mathbf{0}_k$ --- i.e., when the cluster is in control ---then 
$P(C_k(Z) = C_{k, j}) = 1/2$ for $j=1, 2$. And when $Z_{[k]}\neq \mathbf{0}_k$ --- i.e., when the cluster is treated ---then 
$P(C_k(Z) = C_{k, j}) = \Ind{Z_{[k]} \in C_{k, j}}$, for $j=1, 2$.

Now, for some biclique $C$ let $C = \bigcup_{k=1}^K C_k$ be its unique decomposition into cluster bicliques. Following the definition of the conditioning mechanism in Procedure~\ref{proc:test2}.
\begin{align}
P(C | Z) & = \prod_{k=1: Z_{[k]} = \mathbf{0}_k}^K P(\mathbf{0}_k \in C_k) 
 \prod_{k=1: Z_{[k]} \neq \mathbf{0}_k}^K P(Z_{[k]} \in C_k) 
 & \comm{By definition of conditioning mechanism}\nonumber\\
& = \prod_{k=1: Z_{[k]} = \mathbf{0}_k}^K \Ind{\mathbf{0}_k \in C_k} (1/2)
\prod_{k=1 : Z_{[k]} \neq \mathbf{0}_k}^K  \Ind{Z_{[k]} \in C_k} 
& \comm{From analysis above} \nonumber\\
& = 2^{-(1-p)K} \prod_{k=1}^K  \Ind{Z_{[k]} \in C_k}
=  2^{-(1-p)K} \cdot \Ind{Z \in C}. & \comm{Always $(1-p) K$ clusters in control}\nonumber
\end{align}

Recall that in our original test of Procedure~\ref{proc:test}, $p(C | Z) \propto \Ind{Z \in C}$. 
We can now follow the proof of Theorem~\ref{thm:main}. 
The additional term 
$2^{-(1-p)K}$ in the revised procedure does not affect the proof because it is fixed and drops from the calculations in the final derivations for $r(Z)$ in Equation~\eqref{eq:rz}.
\end{proof}

\section{Power analysis}\label{app:powertheory}

\subsection{Proof of Theorem~\ref{thm:power}}
Here, we prove the main theorem on power analysis of the biclique test.
\addtocounter{theorem}{-3}
\begin{theorem}
\ThmPowerApp
\end{theorem}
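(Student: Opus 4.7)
The plan is to reduce the power calculation to an analysis of the true marginal distributions $F_{0,n}, F_{1,n}$ via the Dvoretzky--Kiefer--Wolfowitz (DKW) inequality, and then plug in the scaling (A2), additivity (A3), and logistic approximation to obtain the stated bound.

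First, I would write the power as
$\phi_{n,m} = P\!\left(t(Z, Y(Z); C) > \hat q_{1-\alpha, m}\right)$,
where $\hat q_{1-\alpha, m}$ is the $(1-\alpha)$-quantile of $\hat F_{0,n,m}$, and all probabilities are with respect to the joint randomization distribution conditional on $|C| = (n,m)$. Under (A1), both $\hat F_{0,n,m}$ and $\hat F_{1,n,m}$ are empirical CDFs of $m$ i.i.d.\ samples from $F_{0,n}$ and $F_{1,n}$. Fix $\delta > 0$ and set $\epsilon_m = m^{-1/2+\delta}$. By DKW, each of the events
$\sup_t |\hat F_{j,n,m}(t) - F_{j,n}(t)| \le \epsilon_m$, $j \in \{0,1\}$,
holds with probability at least $1 - 2e^{-2 m^{2\delta}}$, which decays faster than any polynomial in $m$.

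Next, on the DKW event for $\hat F_{0,n,m}$, continuity of $F_{0,n}$ gives $\hat q_{1-\alpha, m} \le F_{0,n}^{-1}(1-\alpha + \epsilon_m)$. By (A2) this equals $\sigma_n F^{-1}(1 - \alpha + \epsilon_m)$, and local Lipschitz continuity of $F^{-1}$ near $1-\alpha$ (guaranteed by positive density of $F$ in a neighborhood of $F^{-1}(1-\alpha)$) gives
$F_{0,n}^{-1}(1-\alpha+\epsilon_m) = \sigma_n F^{-1}(1-\alpha) + O(\sigma_n \epsilon_m)$.
Conditional on the DKW event for $\hat F_{1,n,m}$,
$P(t(Z,Y(Z); C) > \hat q_{1-\alpha, m}) \ge 1 - \hat F_{1,n,m}(\hat q_{1-\alpha, m}) \ge 1 - F_{1,n}(\hat q_{1-\alpha, m}) - \epsilon_m$.
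Combining with (A3) and the Lipschitz expansion,
$F_{1,n}(\hat q_{1-\alpha, m}) = F\!\left(F^{-1}(1-\alpha) - \tau/\sigma_n\right) + O(\epsilon_m)$.
A union bound over the two DKW events absorbs the failure probabilities into the $O(m^{-1/2+\delta})$ term, giving
$\phi_{n,m} \ge 1 - F\!\left(F^{-1}(1-\alpha) - \tau/\sigma_n\right) - O(m^{-1/2+\delta})$,
which is the first conclusion.

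Finally, under the logistic approximation $|F(x) - (1+e^{-bx})^{-1}| \le \epsilon$, replacing $F$ by the logistic introduces an additive $\epsilon$ error, and a direct algebraic rearrangement gives
$1 - F\!\left(F^{-1}(1-\alpha) - \tau/\sigma_n\right) \ge \frac{1}{1 + e^{bF^{-1}(1-\alpha)} \cdot e^{-b\tau/\sigma_n}} - \epsilon$.
Using $\sigma_n = O(1/\sqrt{n})$, i.e.\ $1/\sigma_n \ge c\sqrt{n}$ for some $c > 0$ and all large $n$, we obtain $e^{-b\tau/\sigma_n} \le e^{-a\tau\sqrt{n}}$ with $a := bc$, and setting $A := e^{bF^{-1}(1-\alpha)}$ delivers the stated lower bound
$\phi_{n,m} \ge \frac{1}{1 + A e^{-a\tau\sqrt{n}}} - O(m^{-1/2+\delta}) - \epsilon$.

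The main obstacle is the tight bookkeeping in the second paragraph: controlling the quantile deviation $\hat q_{1-\alpha,m} - q_{1-\alpha}$ in terms of $\epsilon_m$ requires the density of $F$ to be bounded away from zero near $F^{-1}(1-\alpha)$, and then propagating a Lipschitz constant through the composition $F \circ (F^{-1}(\cdot) - \tau/\sigma_n)$ has to be done carefully to ensure all remainder terms collapse into the single $O(m^{-1/2+\delta})$ factor. The rest of the argument is routine DKW plus algebra.
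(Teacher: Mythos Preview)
Your proposal is correct and follows essentially the same route as the paper: both arguments decompose the error into a quantile-estimation piece and an empirical-CDF-approximation piece, control each via DKW-type concentration plus Lipschitz continuity, and then substitute (A2)--(A3) and the logistic approximation exactly as you do (the paper's constants $A = e^{bF^{-1}(1-\alpha)}$, $a = bc$ match yours). The only technical differences are cosmetic: the paper works in expectation throughout (splitting on indicator events rather than conditioning on high-probability DKW events), invokes a standalone quantile-concentration bound rather than deriving it from DKW inversion, and isolates the choice $\epsilon_m \asymp m^{-1/2+\delta}$ in a separate optimization lemma---your direct choice of $\epsilon_m = m^{-1/2+\delta}$ is cleaner and lands on the same rate.
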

\begin{proof}
\begin{lemma}\label{L2}
Under Assumptions (A2) and (A3), it holds:
$$
1 - \Fn\Fo^{-1}(1-\alpha) = 1 - F\big(F^{-1}(1-\alpha) - \tau/\sigma_n\big).
$$
\end{lemma}
\begin{proof}
First, note that $y = \Fo(x)$ implies that $y = F(x/\sigma_n)$ from (A2) and 
so $x = F^{-1}(y) \sigma_n$; thus, $\Fo^{-1}(y) = \sigma_n F^{-1}(y)$.
Now, for the first part:
\begin{align}
1 - \Fn\Fo^{-1}(1-\alpha) & =  1 - \Fo\big(\Fo^{-1}(1-\alpha) - \tau\big) & \comm{From (A3)}\nonumber\\
 & =  1 - F\big( \Fo^{-1}(1-\alpha)/\sigma_n - \tau/\sigma_n\big) & \comm{From (A2)}\nonumber\\
  & =  1 - F\big( F^{-1}(1-\alpha) - \tau/\sigma_n\big). & \comm{From result above}\nonumber
\end{align}
\end{proof}
\begin{lemma}\label{L3}
Let $\Fo^{-1}(1-\alpha)  \triangleq \qa$ and  
$\Fom^{-1}(1-\alpha)    \triangleq \Qa$.
Then, for small enough $\epsilon$ there exists $\lambda > 0$ such that
$$
P( | \Qa - \qa | > \epsilon) \le 2 \exp(- \lambda m \epsilon^2).
$$
\end{lemma}
\begin{proof}
Assumption~(A1) implies that $\Qa$ is the sample $(1-\alpha)$-quantile  
of $\qa$, and so
$$
P( | \Qa - \qa | > \epsilon) \le 2 \exp(- m d_\epsilon^2),
$$
where $d_\epsilon = \min\{ \Fo(\qa + \epsilon) - \Fo(\qa), \Fo(\qa) - \Fo(\qa - \epsilon)\}$.
See, for example, \citep[Theorem 2]{xia2019non}.
Since $\Fo$ is continuous  and bounded, we can write 
$\Fo(\qa + \epsilon) = \Fo(\qa) + f_{0, n}(\qa) \epsilon + O(\epsilon^2)$, 
and $\Fo(\qa - \epsilon) = \Fo(\qa) - f_{0, n}(\qa) \epsilon + O(\epsilon^2)$, where 
$f_{0, n}$ is the pdf that corresponds to $F_{0, n}$.
Therefore, $d_\epsilon/\epsilon = f_{0, n}(\qa) + O(\epsilon)$. 
So,  for small enough $\epsilon$,  there exists $\lambda > 0$ such that 
$d_\epsilon \ge \lambda \epsilon$.
\end{proof}

\begin{lemma}\label{L4}
Let $f_m(x) = \lambda m^{-x} + 2 e^{-\lambda m^{1-2x}}$.
Then, for large enough $m$ the minimum of $f_m(x)$ is in $(1/2, 1 + O(\log^{-1} m)]$. Specifically, for any large enough $m$,
$$
\min_{x\in\mathbb{R}^+} f_m(x)  = \min_{x\in(1/2,~1 +  O(\log^{-1} m)]} f_m(x).
$$
\end{lemma}
\begin{proof}
Let $g_m(x) = \lambda m^{-x} = \lambda e^{-x \log m}$, then we can write $f_m(x)$ as 
$f_m(x) = g_m(x) + 2 e^{-(m/\lambda) g_m^2(x)}$.
By differentiation, the minimum of $f_m(x)$ is at
\begin{align}\label{eq2829}
g_m'(x) - 4 e^{-(m/\lambda) g_m^2(x)}(m/\lambda) g_m'(x) g_m(x) & = 0 \nonumber\\
 e^{(m/\lambda) g_m^2(x)} & = 4(m/\lambda) g_m(x) \nonumber\\
(m/\lambda) g_m^2(x) & = \log \big(  4(m/\lambda) g_m(x) \big) \nonumber\\
(m \lambda ) e^{-2x \log m} & = \log(4 m) + \log e^{-x\log m} \nonumber\\
 (m \lambda) e^{-2y} & = \log(4  m) - y~\quad~\comm{we set $y = x \log m$} \nonumber\\
y+  m \lambda e^{-2y} & = \log(4  m).
\end{align}
It is straightforward to see from~\eqref{eq2829}  that for large enough $m$ the solution to this equation satisfies
$y = \theta \log(4 m)$ for some $\theta\in[1/2, 1]$. 
Indeed, when $\theta=0.5$, the RHS in~\eqref{eq2829} is larger than the LHS  for large enough $m$; and when $\theta=1.0$,  the LHS is larger.
Thus, $x = \theta \log(4 m)/\log m
= \theta (1 + O(1/\log m)) \in (1/2, 1+O(\log^{-1} m))$.
\end{proof}

\begin{lemma}\label{L5}
Suppose that (A1) holds, and fix any arbitrarily small constant $\delta>0$, with $\delta<0.5$. Then, for large enough $m$,
\begin{align}
\EX(\hat F_{1, n, m}(z) - F_{1, n}(z) )= O(m^{-0.5 + \delta}),~\text{for any $z\in\mathbb{R}$}.
\end{align}
\end{lemma}
\begin{proof}
Let $\Delta = \hat F_{1, n, m}(z) - F_{1, n}(z)$. Then,
\begin{align}
\hat F_{1, n, m}(z) - F_{1, n}(z) & \le 
| \Delta | \cdot \mathbb{I}\{ |\Delta | \le \epsilon\} + |\Delta| \cdot \mathbb{I}\{ |\Delta | > \epsilon\}
\quad~\commentEq{for any $\epsilon>0$}
\nn\\
& \le | \Delta | \cdot \mathbb{I}\{ |\Delta | \le \epsilon\} + \mathbb{I}\{ |\Delta | > \epsilon\}
\quad~\commentEq{since $|\Delta| \le 1$, by definition} 
\nn\\
& \le \epsilon + \mathbb{I}\{ |\Delta | > \epsilon\} \nn\\
\EX(\hat F_{1, n, m}(z) - F_{1, n}(z) ) & \le \epsilon + P(|\Delta| > \epsilon)\nn\\
& \le 
\epsilon + P(\sup_z |\hat F_{1, n, m}(z) - F_{1, n}(z)| > \epsilon) \nn\\
& \le \epsilon + 2 e^{-2m\epsilon^2}
\quad\commentEq{from (A1) and the DKW inequality}
\nn\\
& \le m^{-0.5+\delta} + e^{-2m^{2\delta}}
\quad\commentEq{set $\epsilon=m^{-0.5+\delta}$}
\nn\\
& = O(m^{-0.5+\delta}).\commentEq{for large enough $m$}.\nn
\end{align}
\end{proof}

\begin{lemma}\label{L6}
Suppose that Assumptions (A2) and (A3) hold. Then, 
for some $r\in(0.5, 1+O(\log^{-1} m))$,
$$
\EX(\Fn\big(\qa\big) - F_{1, n}\big(\Qa\big)) \ge - O(m^{-r}).
$$
\end{lemma}
\begin{proof}
Let $\Delta_{n,m} = \Fn\big(\qa\big) - F_{1, n}\big(\Qa\big)$. Then,
\begin{align}
\Delta_{n,m}  & \ge - \mathbb{I}\{|\Qa - \qa| \le \epsilon_m\} |\Fn\big(\qa\big) - \Fn\big(\Qa\big)| -  \mathbb{I}\{|\Qa - \qa| > \epsilon_m\}  & \comm{for any $\epsilon_m>0$ } \nonumber\\
& \ge- \mu \mathbb{I}\{|\Qa - \qa| \le \epsilon_m\} | \Qa - \qa| -  \mathbb{I}\{|\Qa - \qa| > \epsilon_m\} & \hspace{-50px}\comm{$\Fn$ is $\mu$-Lipschitz}
\nonumber\\
& \ge - \mu \epsilon_m -  \mathbb{I}\{|\Qa - \qa| > \epsilon_m\} & 
\nonumber\\
\EX(\Delta_{n,m})& \ge - \mu \epsilon_m - P(|\Qa - \qa| > \epsilon_m) &   \nonumber\\
& \ge  - \mu  \epsilon_m - 2 e^{-m \mu \epsilon_m^2} &  
\comm{From Lemma~\ref{L3}} \nonumber\\
& \ge  - \min_{x\in\mathbb{R}^+} \{ \mu m^{-x} + 2 e^{-\mu m^{1-2x}}\} &  
\comm{Tight bound along $\epsilon_m=m^{-x}, x\in\mathbb{R}^+$} \nonumber\\
& \ge  - O(m^{-r}). &  \comm{For some $r\in(1/2, 1)$ by Lemma~\ref{L4}}.\nn
%
\end{align}
\end{proof}

Now, we put all the pieces together. 
%
Let
$$
\Phi_{n, m} =: \EX \big(\Ind{\mathrm{pval}(Z, \Yobs ;  C) \le \alpha} \mid |C|=(n, m), \Qa\big).
$$
From the law of iterated expectation:
\begin{equation}\label{eq:Phi1}
\EX(\Phi_{n,m} \mid  |C|=(n, m)) = \phi_{n,m}.
\end{equation}
Moreover, by Assumption~(A1) we have:
\begin{equation}\label{eq:Phi2}
\Phi_{n, m} = 1 - \hat F_{1, n, m}(\Qa).
\end{equation}
It follows that
\begin{align}
\Phi_{n,m} - (1-\Fn(\qa)) & = \left(\Fn\big(\qa\big) - F_{1, n}\big(\Qa\big)\right)
+ \left(F_{1, n}\big(\Qa\big) - \hat F_{1, n, m}\big(\Qa\big) \right)  &  \commentEq{by Eq.~\eqref{eq:Phi2}}
\nn\\
\EX(\Phi_{n,m}) & \ge 1-\Fn(\qa)  -O(m^{-r})  -O(m^{-0.5+\delta})
& \commentEq{From Lemma~\ref{L5} and Lemma~\ref{L6}}
\nn\\
\phi_{n, m} & \ge 1-\Fn(\qa) -O(m^{-0.5+\delta}) &  
\commentEq{Since $r > 0.5$; and Eq.~\eqref{eq:Phi1}} \nonumber\\
& =  1 - F\big(F^{-1}(1-\alpha) - \tau/\sigma_n\big) - O(m^{-0.5+\delta}). & \comm{From Lemma~\ref{L2}}\nonumber
%
\end{align}
If we also assume that $\sup_{x\in\mathbb{R}} |F(x) - 1/(1+e^{-b x})| \le \epsilon$ for fixed $b, \epsilon>0$, and 
$\sigma_n = O(1/\sqrt{n})$, then
\begin{align}
\phi_{n, m} & \ge  1- \frac{e^{b(F^{-1}(1-\alpha) - \tau c\sqrt{n})}}{1 + e^{b(F^{-1}(1-\alpha) - \tau c \sqrt{n})}} - O(m^{-0.5+\delta}) - \epsilon & \comm{For some $c>0$}\nonumber \\
\phi_{n, m} & \ge  \frac{1}{1 + A e^{-a \tau \sqrt{n}}} - O(m^{-0.5+\delta}) - \epsilon.
& \comm{Where $A = e^{b F^{-1}(1-\alpha)}$ and $a = b c$}\nonumber \\
%
\end{align}

\end{proof}

\subsection{Empirical confirmation of Theorem~\ref{thm:power}}\label{appendix:confirm}
In this section, we perform a simple simulation study to illustrate the result of Theorem~\ref{thm:power}.
In the simulation we assume that the Assumptions (A1)-(A3) of Theorem~\ref{thm:power}.
Another simulation study where the assumptions do not hold is presented in Section~\ref{sec:clustered_sim}.

Thus, we only have to simulation the test statistic values. Let $T$ denote our test statistic. Then, 
we define $T | H_0 \sim N(0, 1/\sqrt{n})$ and $T | H_1 \sim N(\tau, 1/\sqrt{n})$, where 
$\tau$ is the treatment effect. 
This amounts to $\Fo(t) = \Phi(t \sqrt{n})$ and $\Fn = \Phi( (t-\tau)\sqrt{n})$. 
These definitions satisfy Assumptions (A1) and (A2) of Theorem~\ref{thm:power}.

In every simulation run, we repeat the following steps:
\begin{enumerate}
\item Sample $t_j \sim \Fo$, $j=1, \ldots, m$, i.i.d.
\item For a given $\tau$, sample $\tobs \sim \Fn$.
\item Define p-value: $\mathrm{pval} = \sum_{j=1}^m \Ind{t_j \ge \tobs}$.
\item Reject if $\mathrm{pval} \le 0.05$.
\end{enumerate}

In summary, under these assumptions, $n$ --- which is the number of focal units in the biclique test ---
controls the precision of the test statistic distributions.
On the hand, $m$ --- which is the number of focal assignments in the biclique test ---
controls the number of samples taken from $H_0$ in order to test the hypothesis; in other words, 
$m$ controls the support of the randomization distribution.

The results for various values of $(n, m)$ are shown in Figure~\ref{fig1} below.
We see that $n$ controls how fast the power function increases to its maximum (sensitivity).
On the other hand, $m$ controls the maximum power of the test, but not its sensitivity.
This shows that the power of the biclique test is affected by the biclique size in different ways.
\begin{figure}[h!]
\centering
\includegraphics[scale=0.38]{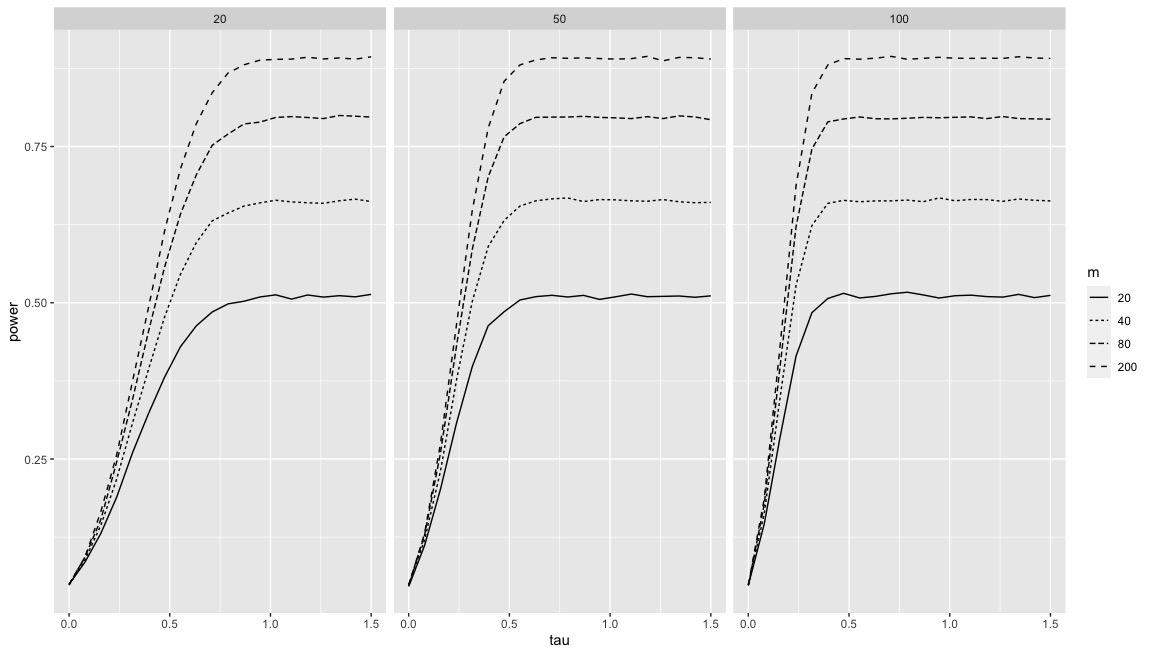}
\caption{\small Power simulation study of the biclique test. The three panels correspond to $n=20, 50, 100$. 
Different line types correspond to various $m$ (number of assignments). The x-axis corresponds to the 
strength of (spillover) treatment effect $\tau$.}
\label{fig1}
\end{figure}

\if0
\subsection{The classical FRT}\label{app:classicalFRT}

In this section, we consider an analysis of power for classical FRTs. The results closely follow the supplementary material of \cite{basse2019}.  The main idea is to analytically relate the number of units and balance of the test to statistical power.

Consider a classical Fisher randomization with complete randomization  over $N$ units where $N_1$ units are treated.  Define the proportion treated as $p=\frac{N_1}{N}$. We suppose that the true effect is additive and given by $\tau$, such that potential outcomes are generated from the following model for every unit $i$:
\begin{equation}
	\begin{split}
		Y_i(0) &\sim N(\mu,\sigma^2) \\
		Y_i(1) &= Y_i(0) + \tau.
	\end{split}
\end{equation}We consider testing the null of no effect -- $H_0: Y_i(1) = Y_i(0) \hspace{2mm} \forall i.$ Using the difference in means as a test statistic, \cite{basse2019} in section 4 of the supplement compute approximations for the variance of the observed and randomization test statistics (these are asymptotic in the sense that $N$ is large).  Then, one can compute an approximate p-value and power for a level-$\alpha$ test of $H_0$ where $\alpha\in (0,1)$.  The power function for this test has the following form:
\begin{equation}\label{eq:approxpower}
	\begin{split}
		\text{power}_{\alpha} = 1 - \Phi\left( \frac{\Phi^{-1}(1-\alpha) - \sqrt{N}\sqrt{\xi}}{\sqrt{1-\xi}} \right),
	\end{split}
\end{equation}where $\xi = \tau^2 / (\frac{\sigma^2}{p(1-p)} + \tau^2)$.  There are two important conclusions to be drawn from Equation (\ref{eq:approxpower}).  We compare two separate tests with number of units   $N_{1}$ and $N_{2}$ and proportion of units treated $p_{1}$ and $p_{2}$, respectively. Let the powers for these two tests be defined as $\text{power}_{\alpha}^1$ and $\text{power}_{\alpha}^2$.  The key (asymptotic) conclusions are the following:
\begin{enumerate}
	\item \textbf{A more powerful test will have a more balanced assignment.} Suppose we are comparing two tests with the same number of units, $N = N_1 = N_2$, but different fractions of treated units, $p_1 \neq p_2$.  Then asymptotically, we can show using (\ref{eq:approxpower}) that $\text{power}_{\alpha}^1 \geq \text{power}_{\alpha}^2 \iff \left| p_1 - 1/2\right| \leq \left|p_2 - 1/2\right|$.  This means that having a fraction of treated units closer to $1/2$ (more balance) will result in a more powerful test.
	\item \textbf{A more powerful test will have a larger number of units.} Let us now compare two tests with the same balance, $p_1 = p_2$, but different numbers of units $N_1 \neq N_2$.  We can again use (\ref{eq:approxpower}) to conclude that $\text{power}_{\alpha}^1 \geq \text{power}_{\alpha}^2 \iff N_1 \geq N_2$.  This intuitively states that the more units included in the test, the more powerful the test will be for a fixed level of balance in the treatment assignment.
\end{enumerate}
\fi

\subsection{Implications for biclique-based randomization tests}\label{app:implications}

The goal of this section is to connect graph-theoretic properties of the null exposure graph to bounds on the size of the biclique drawn from the conditioning mechanism.  Since the structure of the null exposure graph is fixed once the null hypothesis and design are specified, the final goal will be to understand how the test can be conditioned and relate that back to testing power (Theorem~\ref{thm:power}).

We will focus on density of the null exposure graph.  There is classical work in extremal graph theory that relates the number of graph edges to existence of certain-sized bicliques \citep{turan1941external,zarankiewicz1951problem}.  Others have built upon this work for bipartite graphs, including \cite{kovari1954problem} and \cite{hylten1958combinatorical}.  We rely on analysis in the latter two references (especially extensions of what is known as the Kovári-Sós-Turán Theorem), formulated here in the context of the null exposure graph.  See Theorem 2.2 in Chapter 6, Section 2 of \cite{bollobas2004extremal} for the precise bound stated below in the Theorem.
\begin{theorem*}[\cite{bollobas2004extremal}]\label{thmbound}
	Consider the null exposure graph $\GfF=(V, E)$.  Denote the total number of units as $N = |\Udom|$ and assignments as $H = |\Zdom|$. Define integers $0 < n < N$ and $0 < h < H$.  If the following bound holds:
	\begin{equation}\label{NEbound}
		\begin{split}
	|E| \hspace{2mm} \geq \hspace{1mm} (n-1)H + (h-1)^{1/n}H^{1-1/n}(N-n+1), 
		\end{split}
	\end{equation}
	then, there exists a biclique of $\GfF$, $C = (U, \Zset)$, with $U\subseteq \Udom$, $\Zset\subseteq \Zdom$ and number of focal units  $n = |U|$ and focal assignments $h = |\Zset|$.   
\end{theorem*}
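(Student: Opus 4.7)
The plan is to carry out the classical Kővári–Sós–Turán double-counting argument, adapted to the bipartite null exposure graph, and then to reconcile the resulting sufficient condition with the specific algebraic form of the edge bound in the theorem statement.

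First I would fix the side to count from. For each assignment $z \in \Zdom$, let $d(z) = |\{i \in \Udom : (i,z) \in E\}|$ denote its degree, so that $\sum_{z} d(z) = |E|$. The number of pairs $(z, S)$ with $S \subseteq \Udom$, $|S|=n$, and $S \subseteq N(z)$ is exactly $\sum_{z} \binom{d(z)}{n}$. If this quantity strictly exceeds $(h-1)\binom{N}{n}$, then by pigeonhole some $n$-subset $S$ of units is contained in $N(z)$ for at least $h$ distinct assignments $z$. Taking $U = S$ and $\Zset$ to be those $h$ assignments produces the desired biclique $C=(U,\Zset)$ of the required size.

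Second, I would bound the sum from below by convexity. Since $x \mapsto \binom{x}{n}$ is convex on the relevant range, Jensen's inequality yields
$$
\sum_{z} \binom{d(z)}{n} \;\geq\; H \binom{|E|/H}{n}.
$$
Hence a sufficient condition for the biclique to exist is $H \binom{|E|/H}{n} > (h-1) \binom{N}{n}$.

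Third, I would translate this sufficient condition into the edge bound in the theorem statement. Writing the ratio
$$
\frac{\binom{|E|/H}{n}}{\binom{N}{n}} \;=\; \prod_{i=0}^{n-1} \frac{|E|/H - i}{N - i},
$$
and setting $\lambda = \bigl((h-1)/H\bigr)^{1/n}$, I would show that each of these $n$ factors exceeds $\lambda$ provided the worst case at $i=n-1$ does, i.e. provided $|E|/H - (n-1) \geq (N-n+1)\lambda$. Rearranging gives precisely $|E| \geq (n-1)H + (h-1)^{1/n}H^{1-1/n}(N-n+1)$.

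The main obstacle is this last step: the crude estimate $\binom{x}{n} \geq (x-n+1)^n/n!$ paired with $\binom{N}{n} \leq N^n/n!$ would give the weaker bound with $N$ instead of $N-n+1$. Recovering the sharp $(N-n+1)$ factor in the theorem requires the factor-by-factor comparison above, together with the monotonicity check that $\lambda \leq 1$ (which holds since $h-1 < H$), so that dominance of the worst factor at $i=n-1$ propagates to all $i < n-1$. Once this algebraic bookkeeping is in place, the pigeonhole step from the first paragraph delivers the biclique and the proof is complete.
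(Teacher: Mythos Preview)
The paper does not prove this statement itself; it is quoted as Theorem~2.2 of Chapter~VI, \S2 in Bollob\'as's \emph{Extremal Graph Theory} and invoked without argument. Your proposal reproduces exactly the classical K\H{o}v\'ari--S\'os--Tur\'an double-counting proof found in that reference --- counting incidences between assignments and $n$-subsets of units, applying Jensen to the convex map $x\mapsto\binom{x}{n}$, and then the factor-by-factor comparison (using $\lambda\le 1$) to extract the sharp $(N-n+1)$ constant --- so your argument is the source's argument.
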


This result states that if there are ``enough edges'' in the null exposure graph (defined by the bound), then we are guaranteed existence of a certain-sized biclique. Moreover, we can translate this bound to one on the density of the null exposure graph.
\begin{definition}[null exposure graph density]\label{def:density}
	Given the null exposure graph $\GfF=(V, E)$, its density is defined as the number of edges divided by the total number of possible edges: $\text{d}(\GfF) = \frac{|E|}{NH}$.
\end{definition}
 Taking Definition \ref{def:density} and the above theorem together, we conclude that if the null exposure graph is ``dense enough,'' there exists bicliques of a desired size.  To give a concrete example of this bound, suppose we have an experiment like the clustered interference simulation of Section \ref{sec:clustered_sim} with 300 units.  We generate 100,000 treatment assignments and construct the null exposure graph as in Section \ref{sec:arbitrary-designs}.  In order to guarantee the existence of a biclique with 30 focal units and 500 focal assignments, we would need the null exposure graph density to be at least 85\% (calculated by the bound in (\ref{NEbound})).  If we desired a biclique with 100 focal units and 500 focal assignments, we would need the null exposure graph density to be at least 96\%.
 
The example above illustrates the important connection between density and existence of sufficiently large bicliques for the randomization test.  If the null exposure graph is not very dense, we are likely to condition on bicliques with a small number of focal units and be underpowered (Theorem~\ref{thm:power}).  On the other hand, if the null exposure graph has a large enough density, we will find bicliques with many focal units and the test will be powerful.  Indeed, for a fixed number of units $N$, assignments $H$, and focal assignments $h$, the bound in (\ref{NEbound}) is monotonically increasing in the number of focal units $n$.  In words, we need large density to find large bicliques and have large power.

We now return to the clustered interference setting of Section \ref{sec:clustered_sim}. The power results presented in Figure \ref{powerpanel} are better understood in the context of  the above 
theorem and the discussion presented in Section~\ref{sec:power}.  The cluster size defined by $N/K$ in the experiment determines the density of the null exposure graph.  As before, density is defined as the number of edges in the graph -- given by spillover and control exposures -- divided by the total number of possible edges.  The experiment in Section \ref{sec:clustered_sim} fixed the number of units $N$ and varied the cluster size $K$. The left panel of Figure \ref{app:densitybound} displays how density varies with cluster size.  We can see that the density is increasing in the number of units per cluster.

\begin{figure}[H]
\centerline{\includegraphics[scale=0.55]{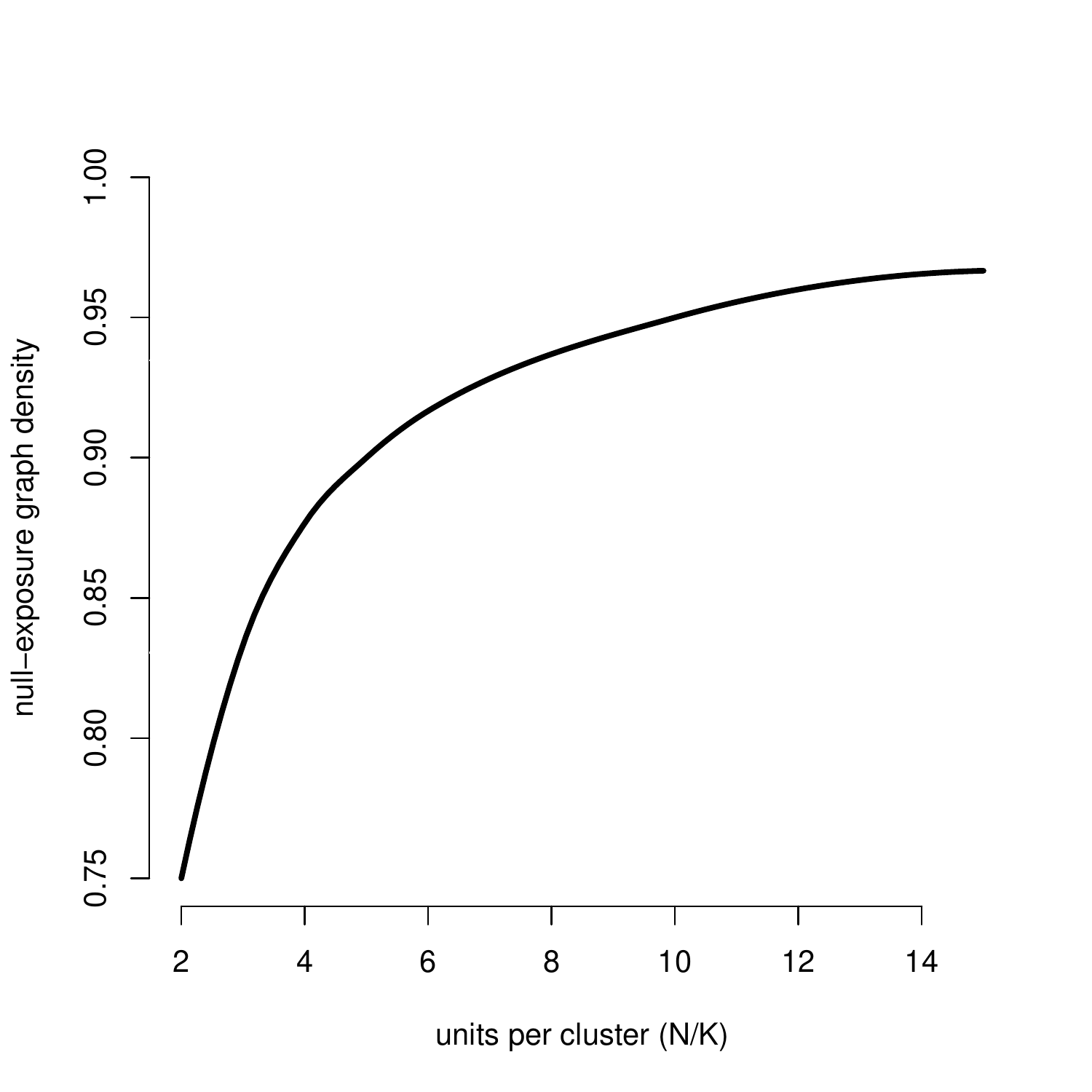}
\includegraphics[scale=0.55]{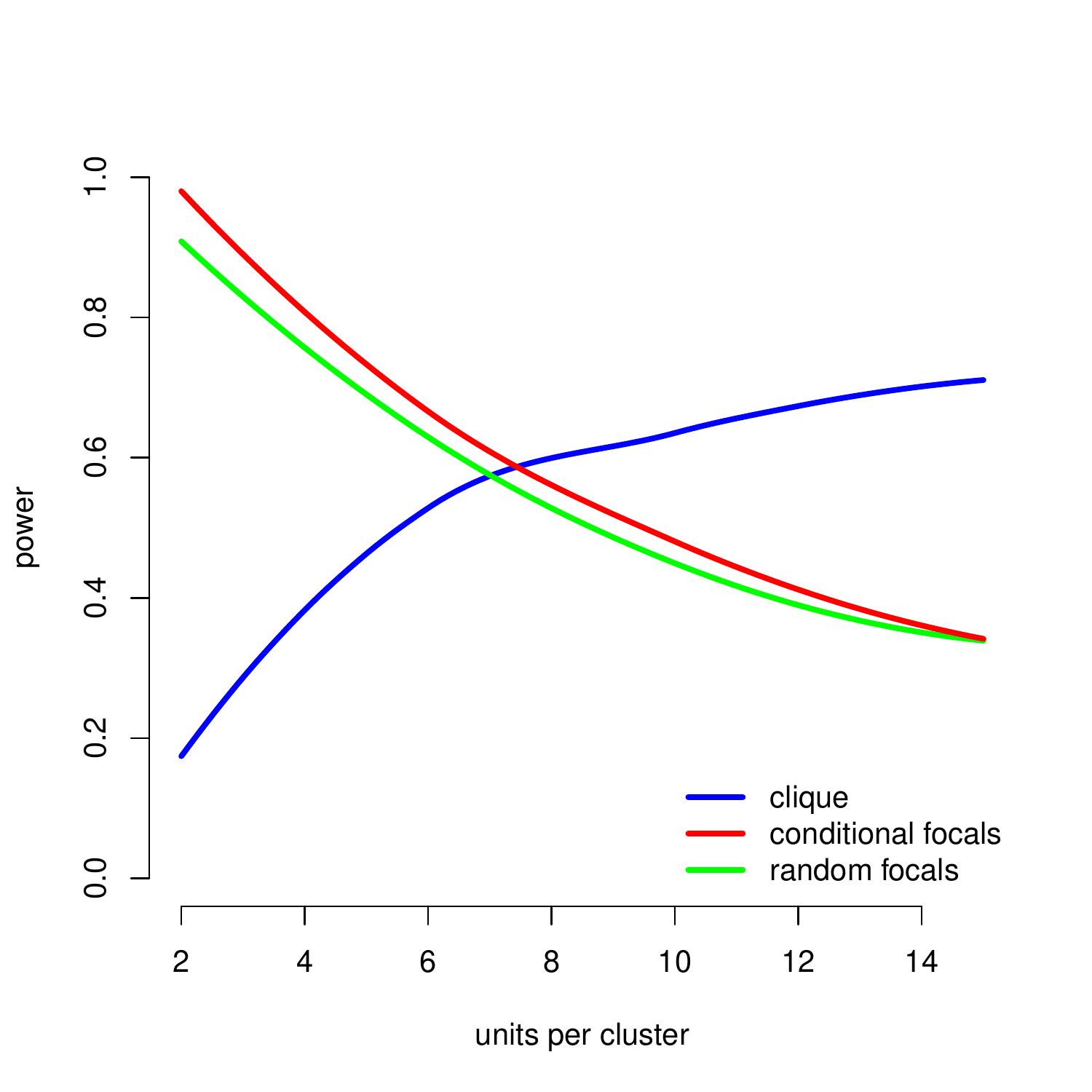}}
\caption{{\em Left:} The null exposure graph density as a function of cluster size for the clustered interference simulation study.~
{\em Right:}. The power of the biclique-based randomization test and competing methods for a fixed $\tau=0.3$ and varying cluster size. }
\label{app:densitybound}
\end{figure}

The right panel of Figure \ref{app:densitybound} shows how power varies with cluster size when $\tau=0.3$ for the biclique method and the methods of \cite{basse2019} (conditional focals) and \cite{athey2018exact} (random focals).  Notice how the density of the null exposure graph follows the same trend as the biclique method's power.  As explored theoretically above, increasing density will lead to increasing power. 

Finally, in light of this discussion, we are able to better understand the power tradeoffs between the three methods on the right in Figure \ref{app:densitybound}. The objective of the graph decomposition algorithm is biclique size, and \textit{not} explicitly power.  Therefore, if the null exposure graph is too sparse, the biclique-based randomization will be underpowered relative to permutation-based alternatives specifically designed for clustered interference.  This is shown on the right panel of Figure \ref{app:densitybound} where the blue line is below the red and green lines.  However, if the null exposure graph is dense enough, the biclique-based test is comparable to and can easily exceed the power of alternative methods.  For our clustered interference setting, the method starts outperforming at around 90\% null exposure graph density (Figure \ref{app:densitybound} -- left) and 6-8 units per cluster (Figure \ref{app:densitybound} -- right).

\subsection{Experimental design}\label{app:design}

In this section, we investigate our method's use for experimental design.  Since the biclique-based randomization test represents the first all-purpose approach for testing under general interference, including spatial intereference, we consider a spatial network like that of the Medell\'in example.  We generate 1000 points from a bivariate Gaussian with non-diagonal covariance to simulate the network (shown in Figure \ref{app:spatial_sim_map}).  Suppose we would like to design an experiment where the probability of treatment in the city center $p_0$ is different from the outskirts $p_1$.  As shown in Figure \ref{app:spatial_sim_map}, the city center is defined by the black circle.  There are 362 units within the city center and 638 units in the outskirts. The design question is: What are the optimal choices of $p_0,p_1$ for testing the spillover hypothesis defined in (\ref{eq:H0_med}) (with spillover radius equal to 0.1)?

\vspace{-8mm}
\begin{figure}[H]
\centerline{\includegraphics[scale=0.55]{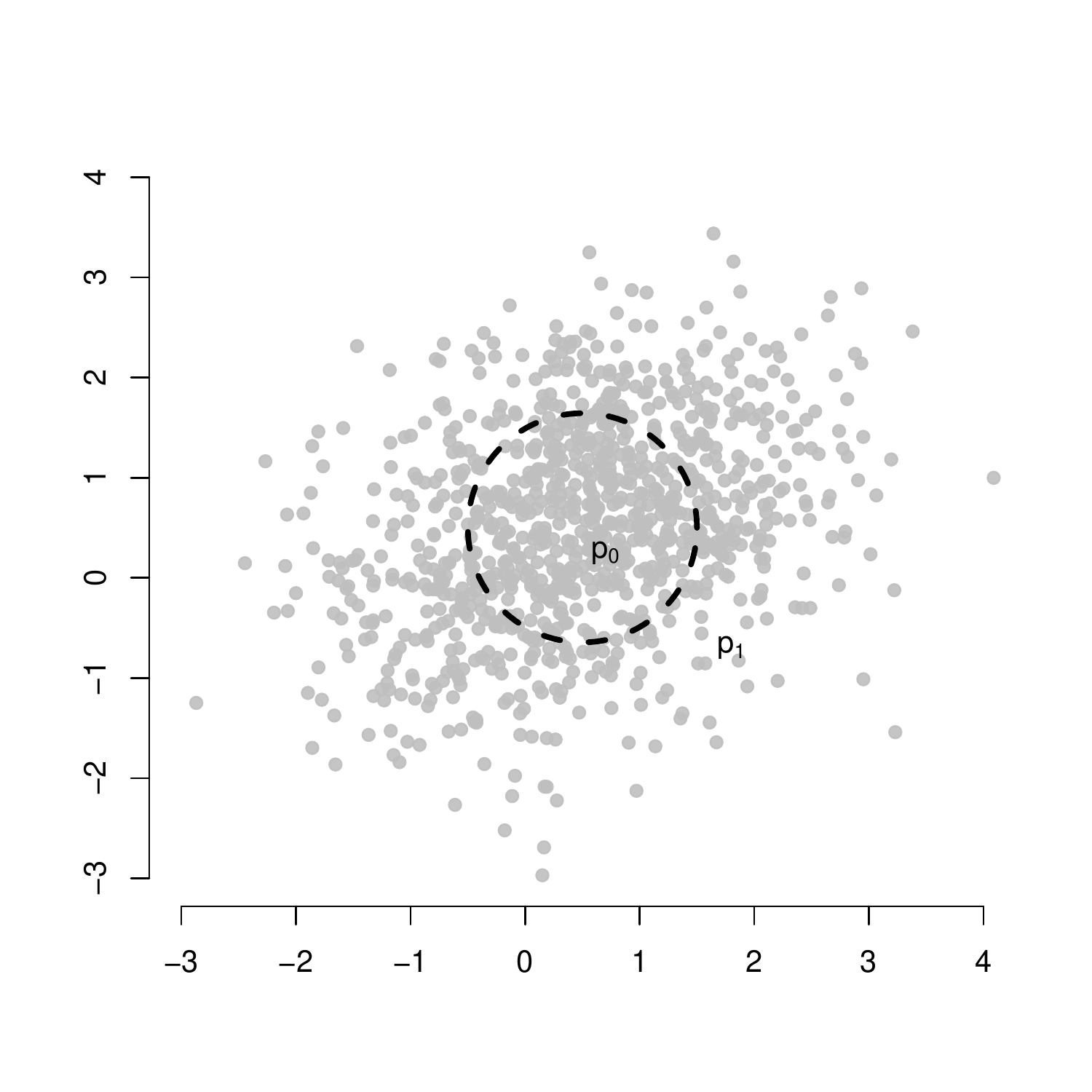}}
\caption{The simulated network for the experimental design study.  The black circle denotes the border between the city center and outskirts.  The treatment probabilities for the center and outskirts are given by $p_0$ and $p_1$, respectively.}
\label{app:spatial_sim_map}
\end{figure}

We answer this question by performing a power analysis for a grid of $p_0,p_1$.  Like the previous power analyses, we assume that the potential outcomes differ by an additive treatment effect $\tau=0.3$ and are normally distributed.  For 40,000 different combinations of $p_0,p_1$, we compute the power of the test and display the results in the left of Figure \ref{app:spatial_sim_metrics}.  The power values range from 0.25 to 0.73, with the largest values occurring for $p_1\sim0.068$ and $p_0\sim0.061$.  The power surface is relatively convex, suggesting that there is a specific range of probabilities lead to high power, while other combinations in the unit cube lead to underpowered tests.

The right side of Figure \ref{app:spatial_sim_metrics} shows the average number of focal units included in the biclique tests for a given design. Interestingly, this surface is nearly identical to the power surface \textit{except} for very small values of $p_0,p_1$.  As mentioned in the power discussion of Section~\ref{sec:power}, the number of focal units is positively related to power.  Therefore, the surfaces are close to each other.  In the bottom left corner of the surface, the treatment probabilities are very low.  Since the exposures in the hypothesis are defined as \textit{untreated} spillover and pure control units, treating a very small number of units will leave the majority as pure control units.  Moreover, this exposure status will rarely change among the different (sparse) treatment assignments.  This part of the design space illustrates the tradeoff between the number of focal units and exposure balance. Very small treatment probabilities lead to large bicliques with focal units mostly exposed to pure control.  In order to achieve optimal power, slightly smaller bicliques with a balance between spillover and pure control exposures are necessary.  The biclique method is able to navigate this tradeoff to find a strictly interior solution.

\begin{figure}[H]
\centerline{
\includegraphics[scale=0.55]{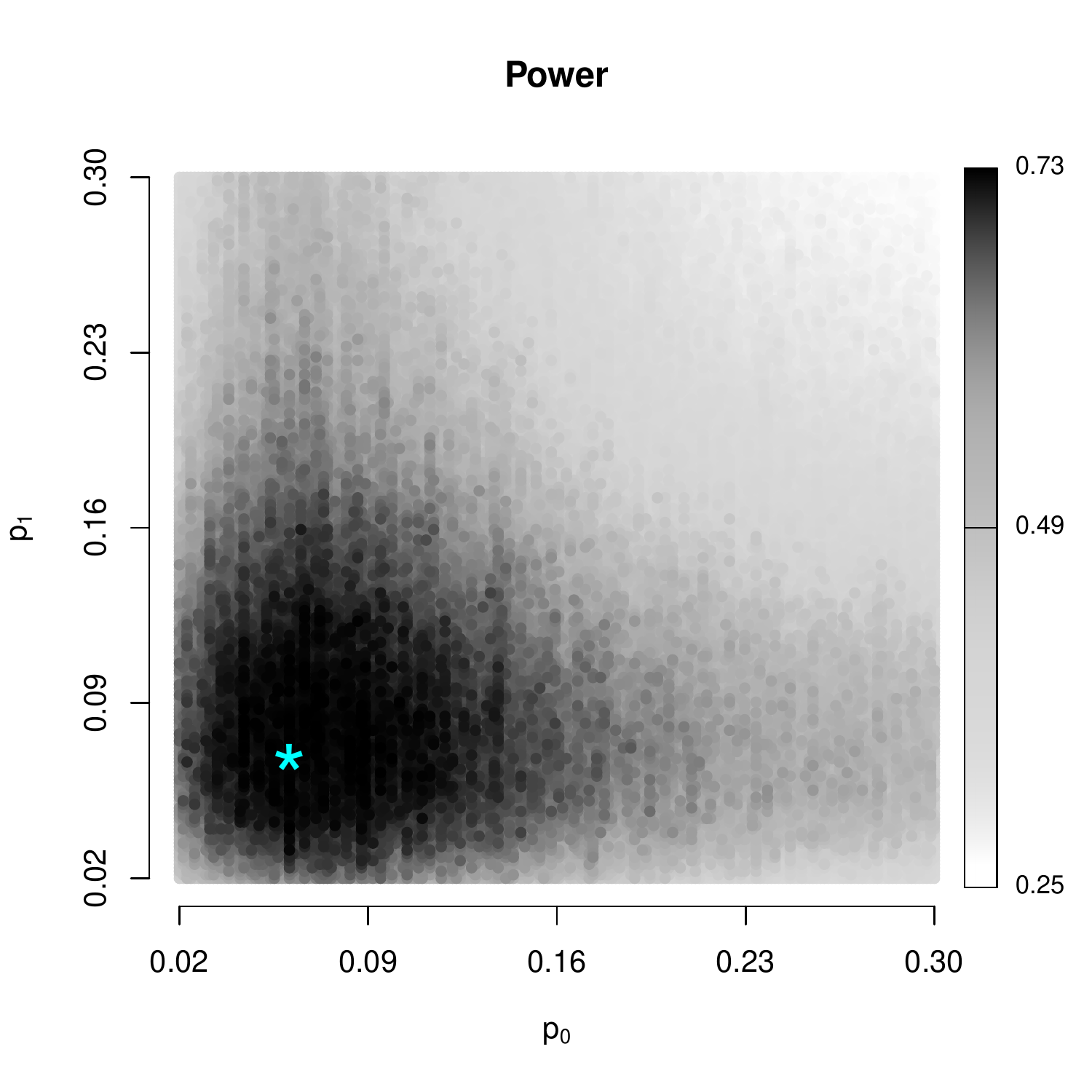} 
\includegraphics[scale=0.55]{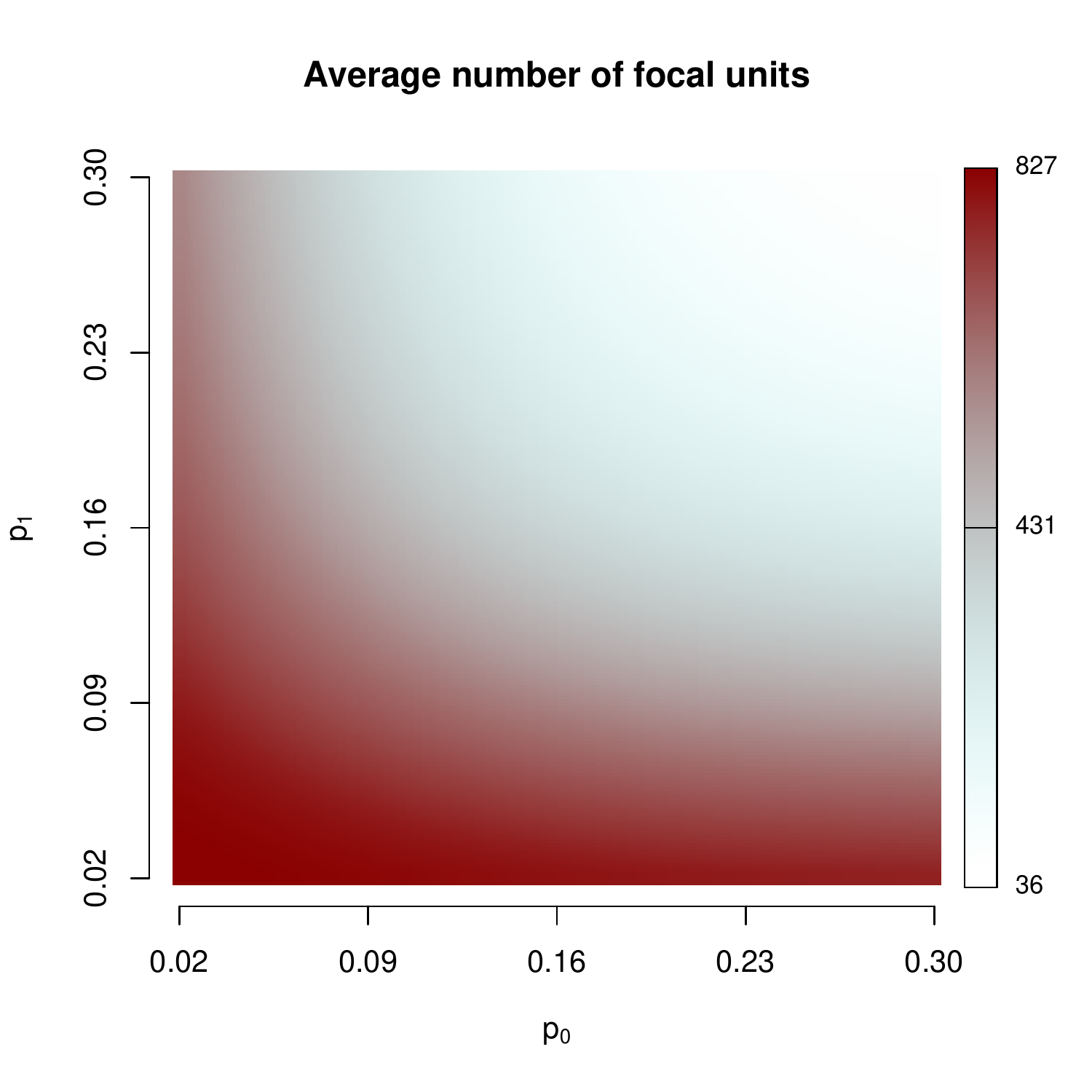}}
\caption{{\em Left:} The power of the test for different combinations of $p_0,p_1$.  Darker colors denote larger power values, while lighter colors denote smaller power values.~ 
{\em Right:} null exposure graph density for different combinations of $p_0,p_1$.  Darker colors denote larger density values, while lighter colors denote smaller density values.}
\label{app:spatial_sim_metrics}
\end{figure}

\section{More on spatial interference}\label{appendix:spatial}

Here, we show more information on how properties of the bicliques (such as biclique size) affect testing power in the context of spatial interference.
Figure \ref{app:numfocalvradius} displays the number of focals contained in the biclique for each hypothesis, $H_0^{\aa, \bb_r}$ as a function of $r$.  We see that for larger radii, there are more focals per biclique, on average, since more units are exposed to spillovers as the radius gets larger.  

\begin{figure}[H]
\centerline{\includegraphics[scale=0.7]{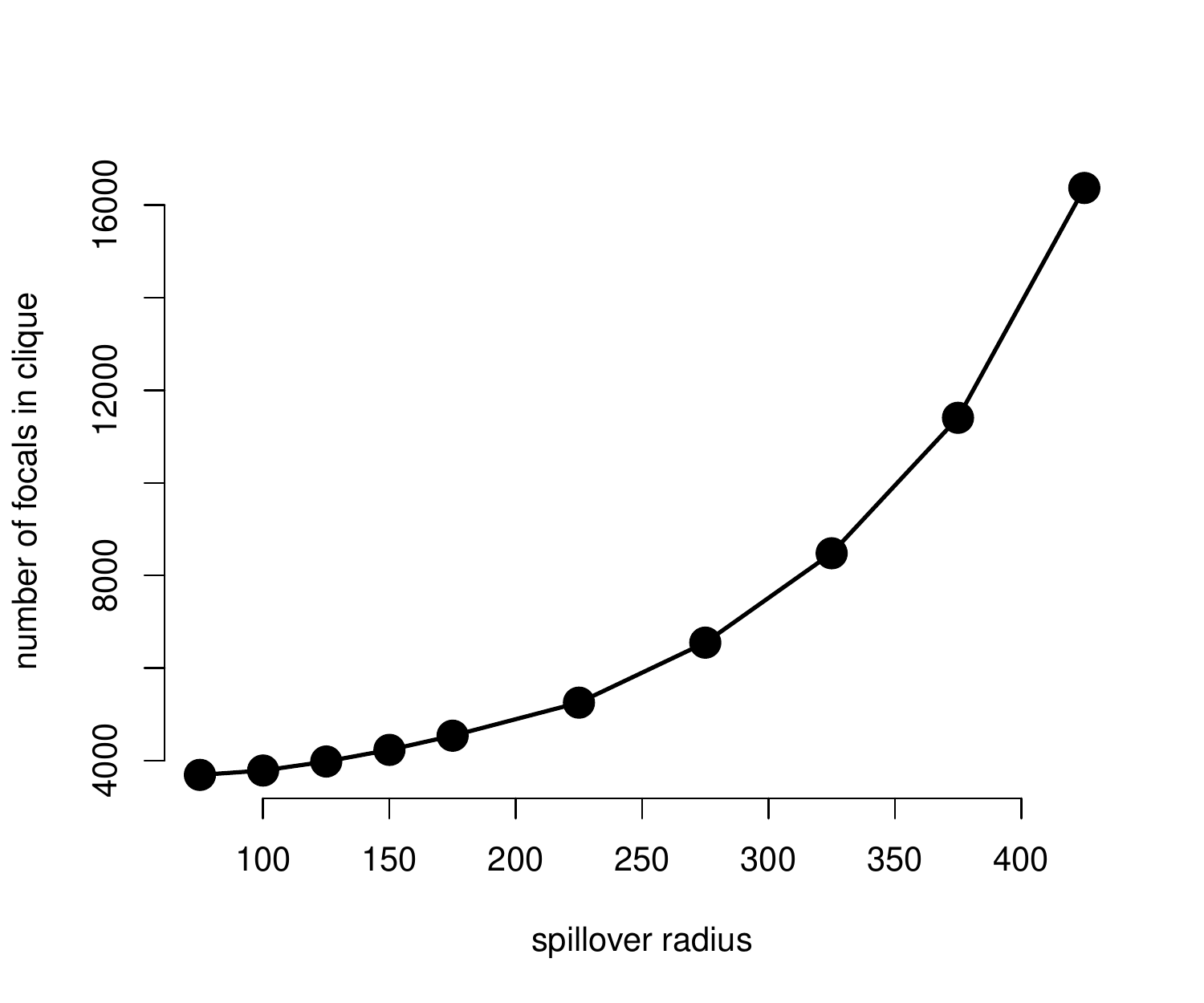}}	
\caption{Number of focals versus spillover radius for bicliques containing the observed assignment.  The radii considered are in the set $\{75,100,125,150,175,225,275,325,375,425\}$.  Notice that the number of focals increases nonlinearly as the spillover radius increases. }
\label{app:numfocalvradius}
\end{figure}

Next, we show an extended version of the randomization analysis of Figure~\ref{pval_fig1} shown in Figure~\ref{app:numfocalvradius}.
First, we show  the p-values of the biclique randomization test (left vertical axis) with respect to distance radius $r$, as described earlier~(``raw outcome" curve).
Second, we show a version of the test where we first regress the 
crime outcomes on known covariates,  including information about the neighborhood and social center points, and then perform the biclique test 
on the residuals~(``adjusted outcome" curve).
Finally, as a baseline, we also show regression coefficients from a 
simple OLS model that includes a binary variable indicating 
whether a unit receives spillovers at distance $r$ or not, and known covariates.

\begin{figure}[t!]
\centering
	\includegraphics[scale=0.7]{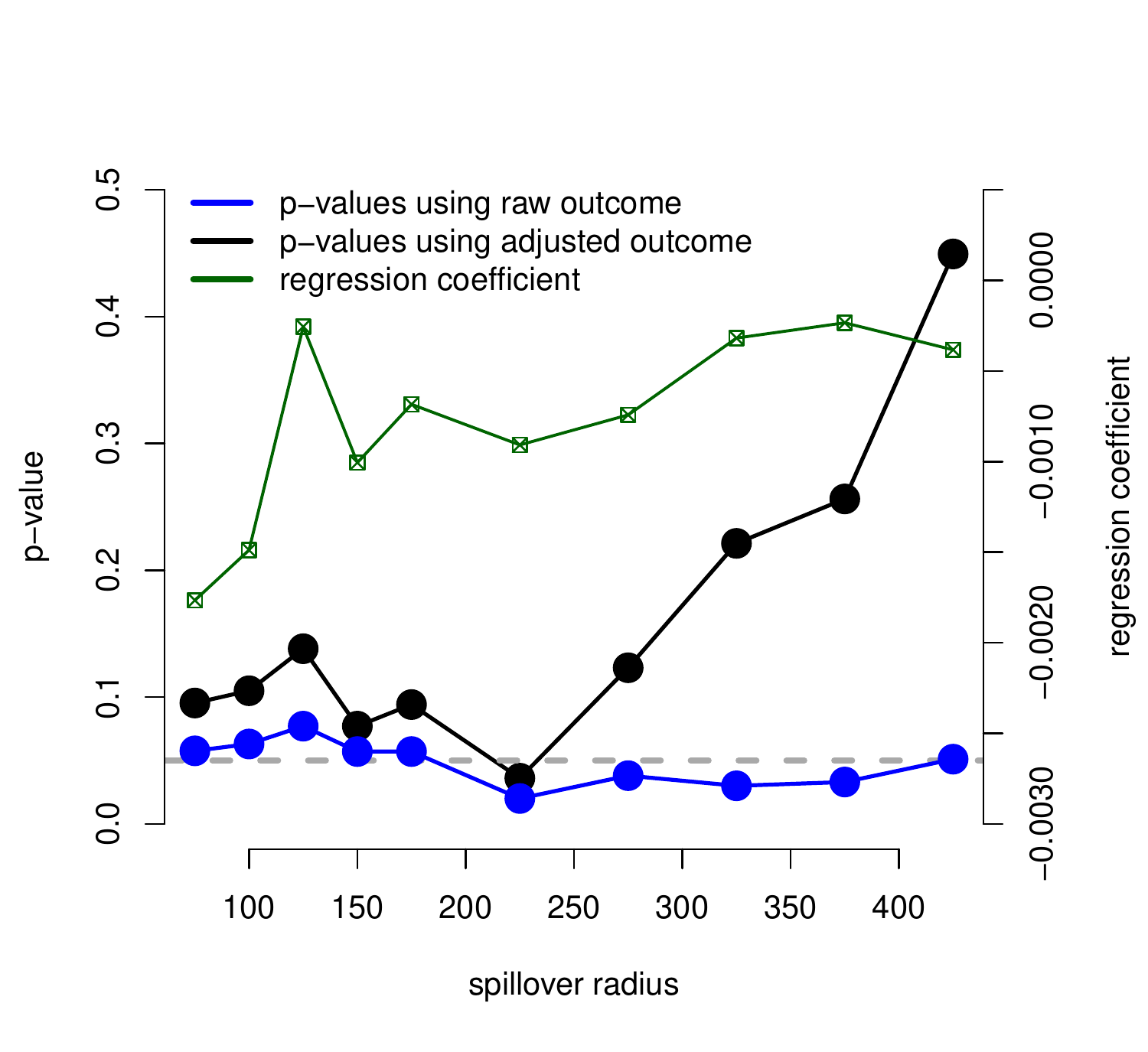}
	\caption{P-values (left vertical axis) for biclique tests with varying spillover radii (horizontal axis).  The blue line shows p-values for tests using the raw crime index, and the black line shows p-values for tests using the the crime index adjusted for known covariates.  The right vertical axis displays regression coefficients on the binary variable defined by spillover or pure control statuses. For each radii, we restrict the OLS estimation to observations such that they are either exposed to spillover or pure control.}
	\label{app:pval_reg_fig}
\end{figure}

\pagebreak

We see that the p-values for the raw outcome are all small for varying radii; see the flat blue line. This suggests that some form of spillovers exists, where the distance 
does not seem to matter.
However, the biclique test on the adjusted outcomes (black curve in Figure \ref{app:pval_reg_fig}) points to the other direction, as it does 
not indicate significance of spillover effects at any distance.
This result suggests  that the covariate distributions of 
``pure control" and ``spillover" units are very different, 
such that the significance of the raw outcome FRTs may be attributed to that difference.
The regression coefficient (green curve) agrees with the adjusted outcome results: no regression coefficient is significant at the 0.05 level, and there is a similar, though concave, trend for increasing radii.

Finally, we show here additional information on the Medell\'in policing experiment.  Figure~\ref{app:randdist_fig} shows the randomization distribution of the 
test statistics for various radii, $r$, 
and for both raw outcomes and adjusted outcomes.
We see that the tests with the adjusted outcomes are sharper than the 
tests with the raw outcomes, indicating heterogeneity in the 
spillover effects.

\begin{figure}[t!]
\centerline{
\includegraphics[scale=0.6]{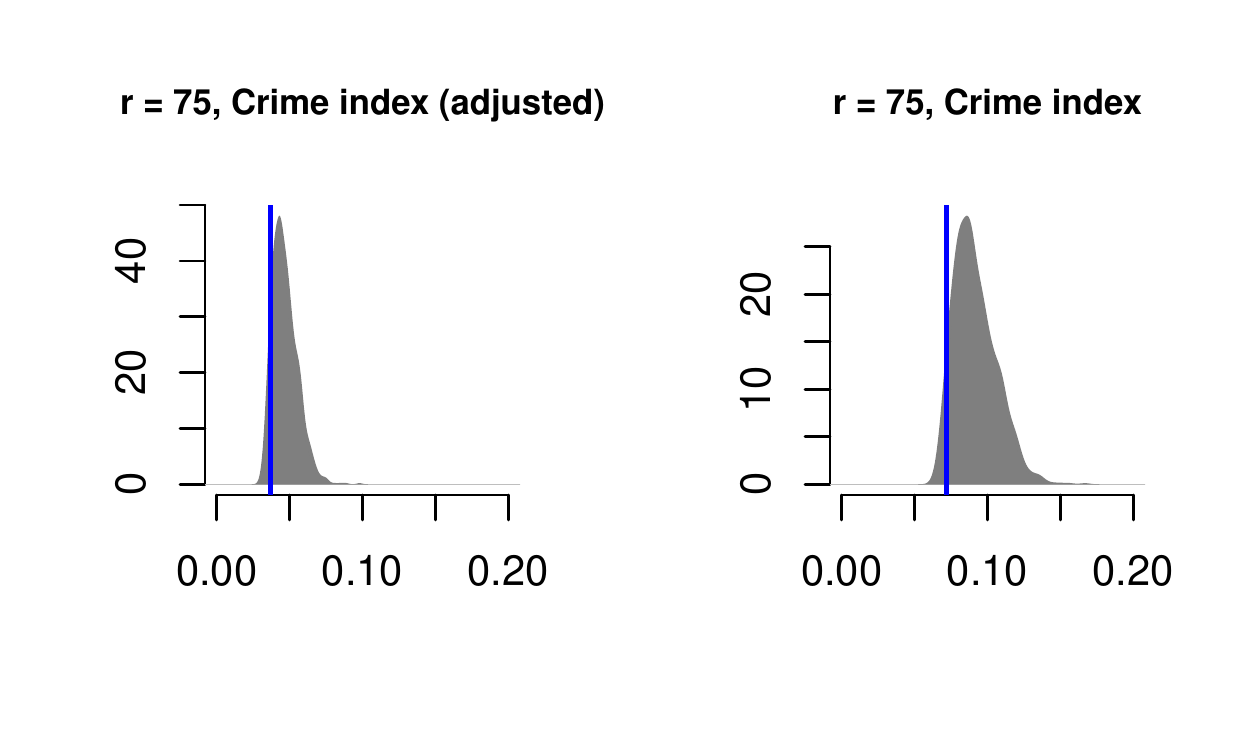} \hspace{5mm}\includegraphics[scale=0.6]{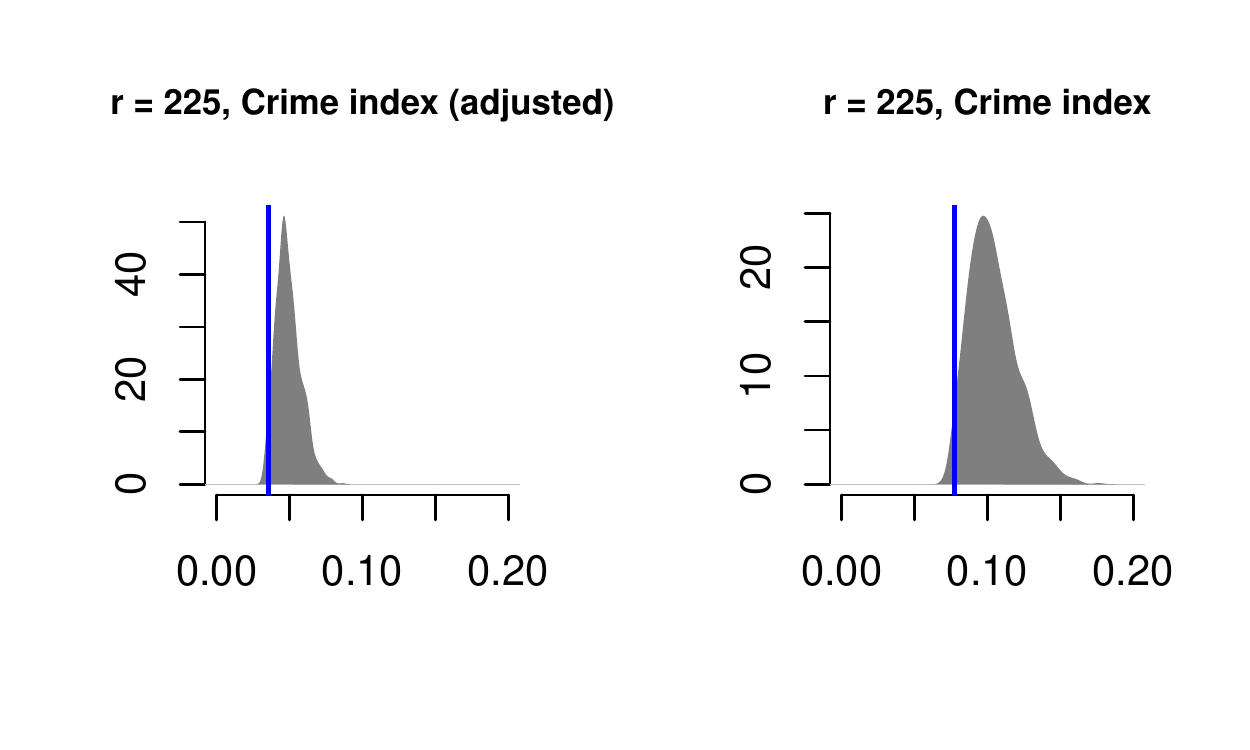}} \vspace{-10mm}
\centerline{	\includegraphics[scale=0.6]{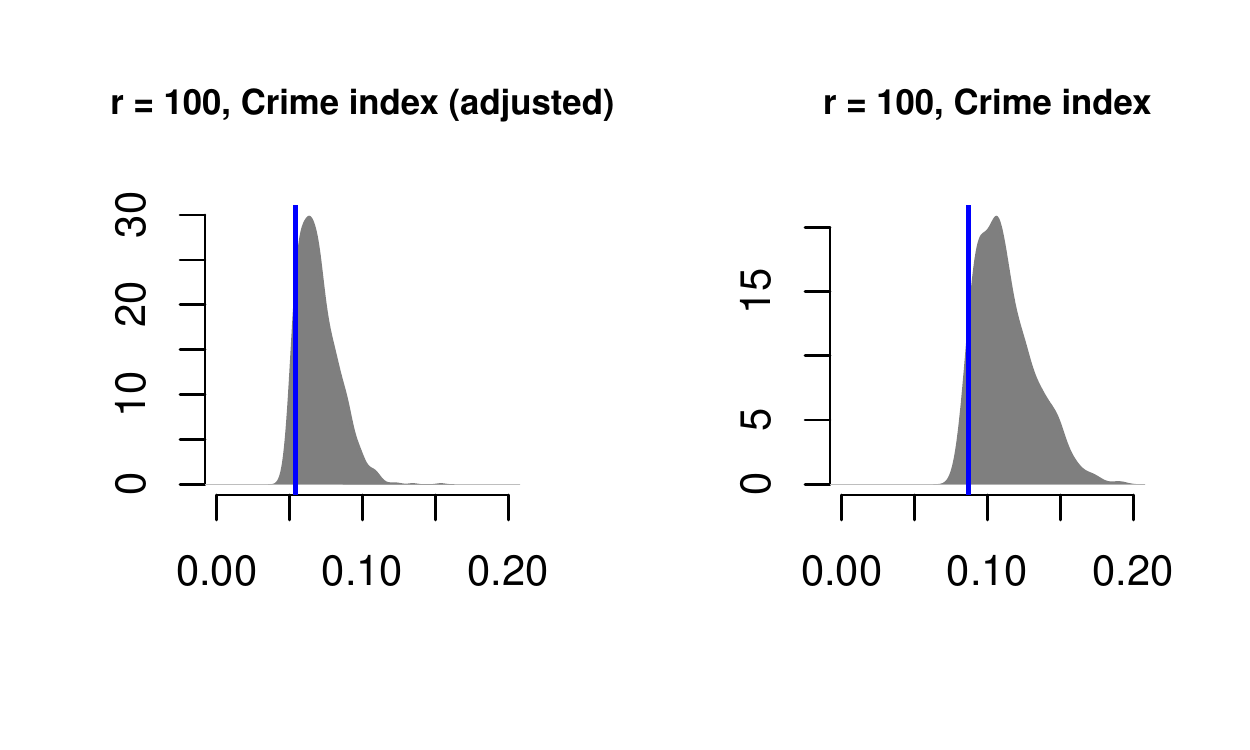} \hspace{5mm}\includegraphics[scale=0.6]{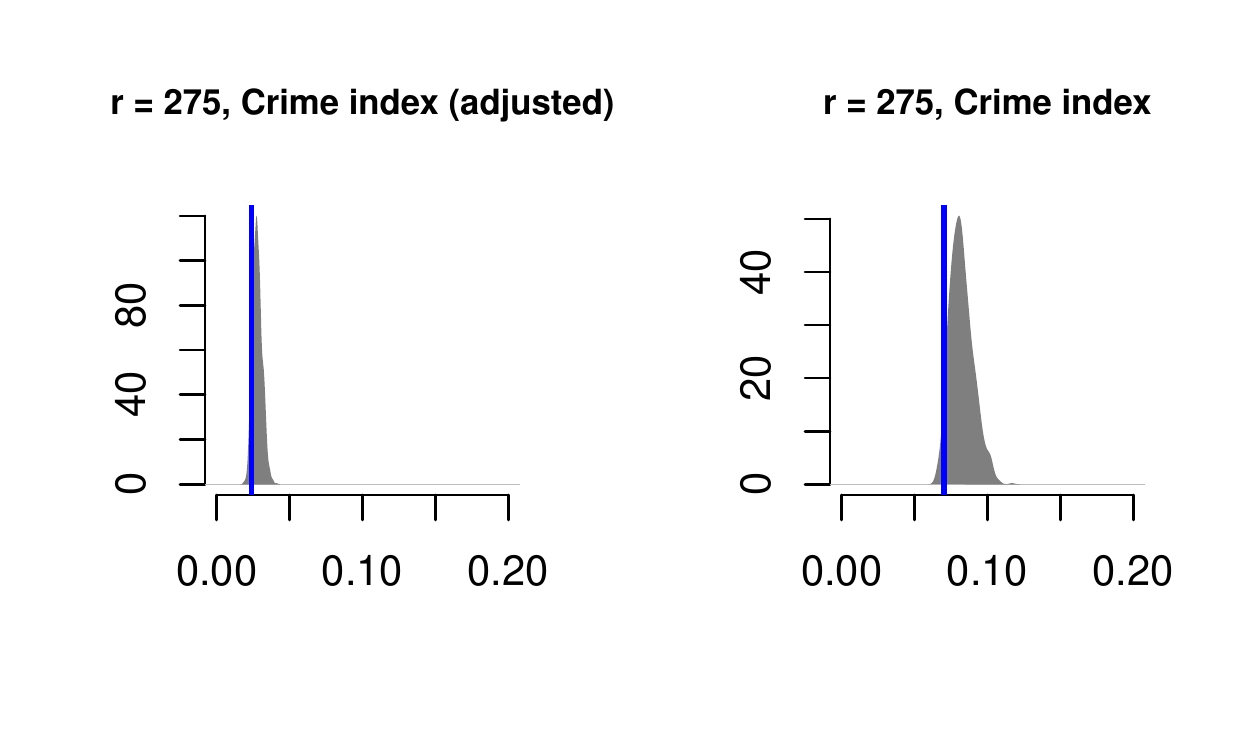}} \vspace{-10mm}
\centerline{\includegraphics[scale=0.6]{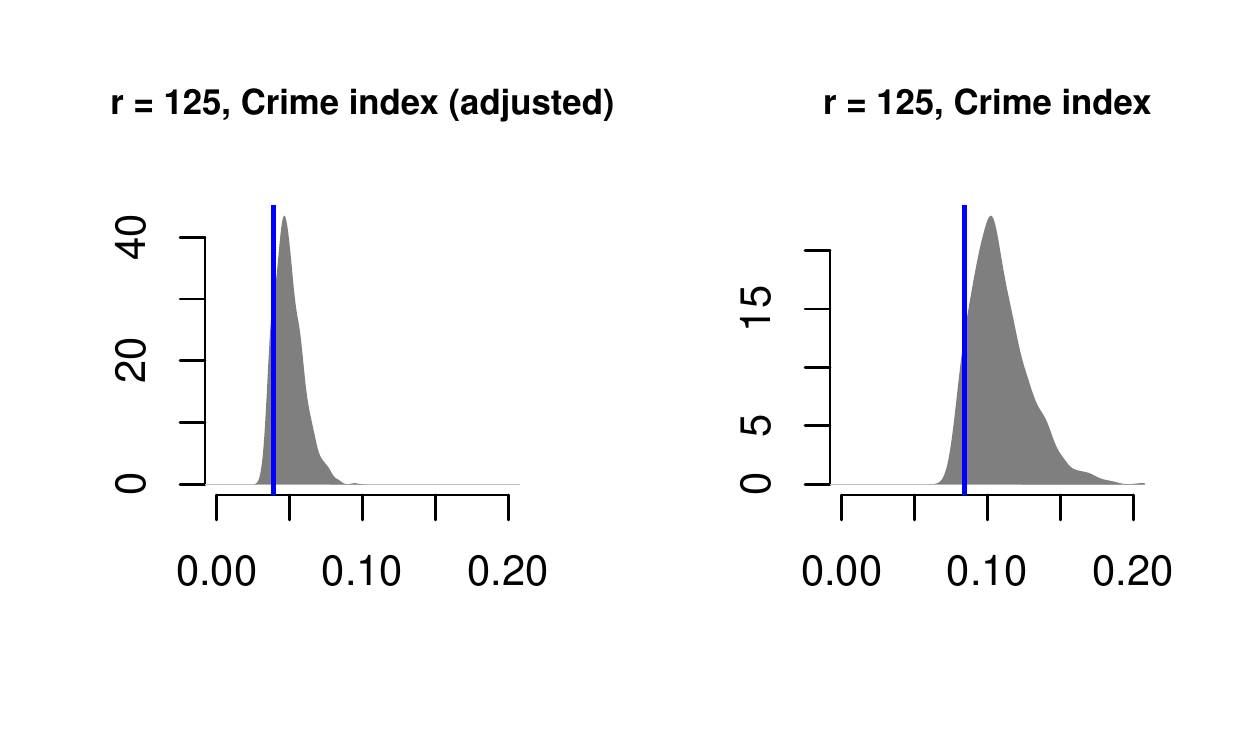} \hspace{5mm}\includegraphics[scale=0.6]{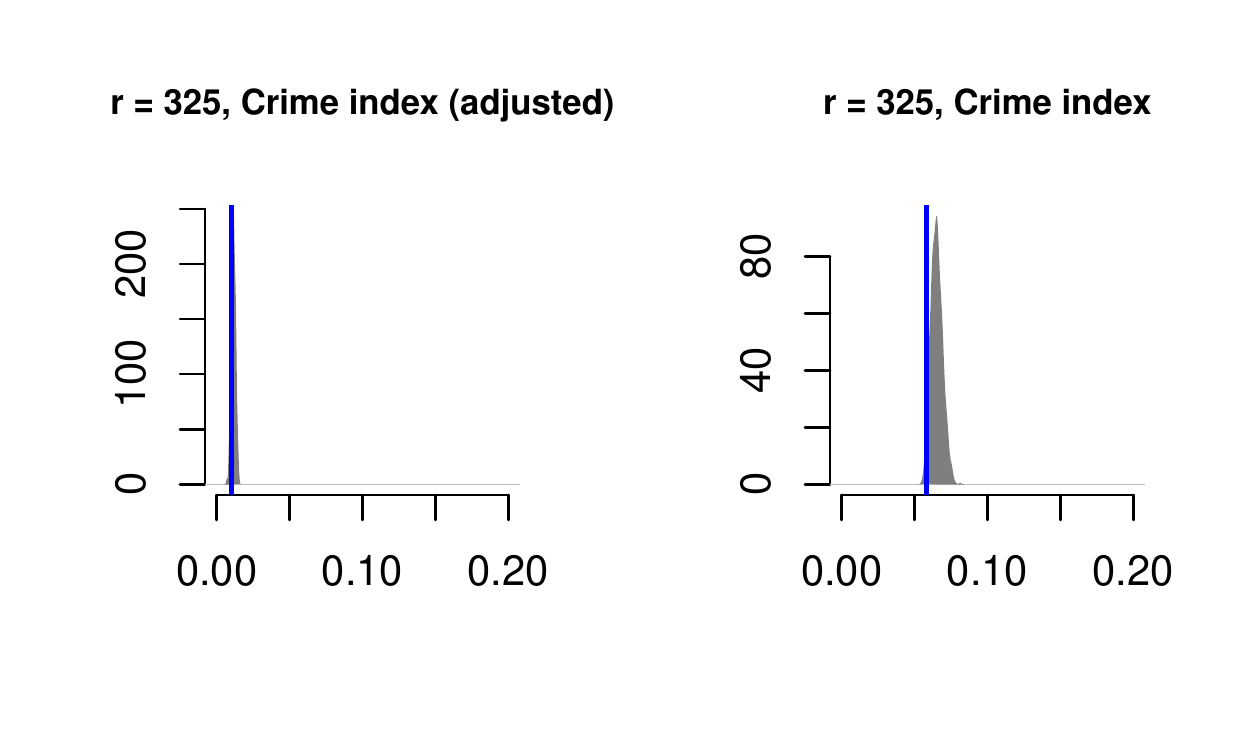} } \vspace{-10mm}
	\centerline{\includegraphics[scale=0.6]{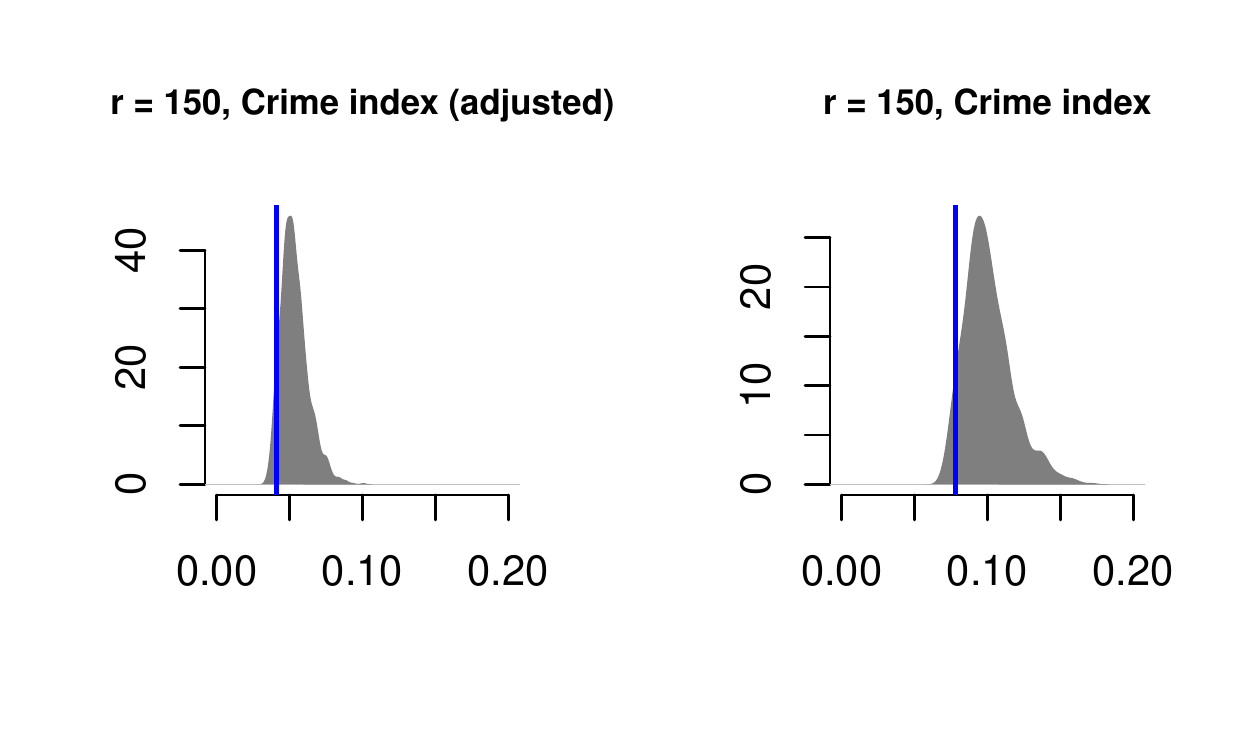} \hspace{5mm}\includegraphics[scale=0.6]{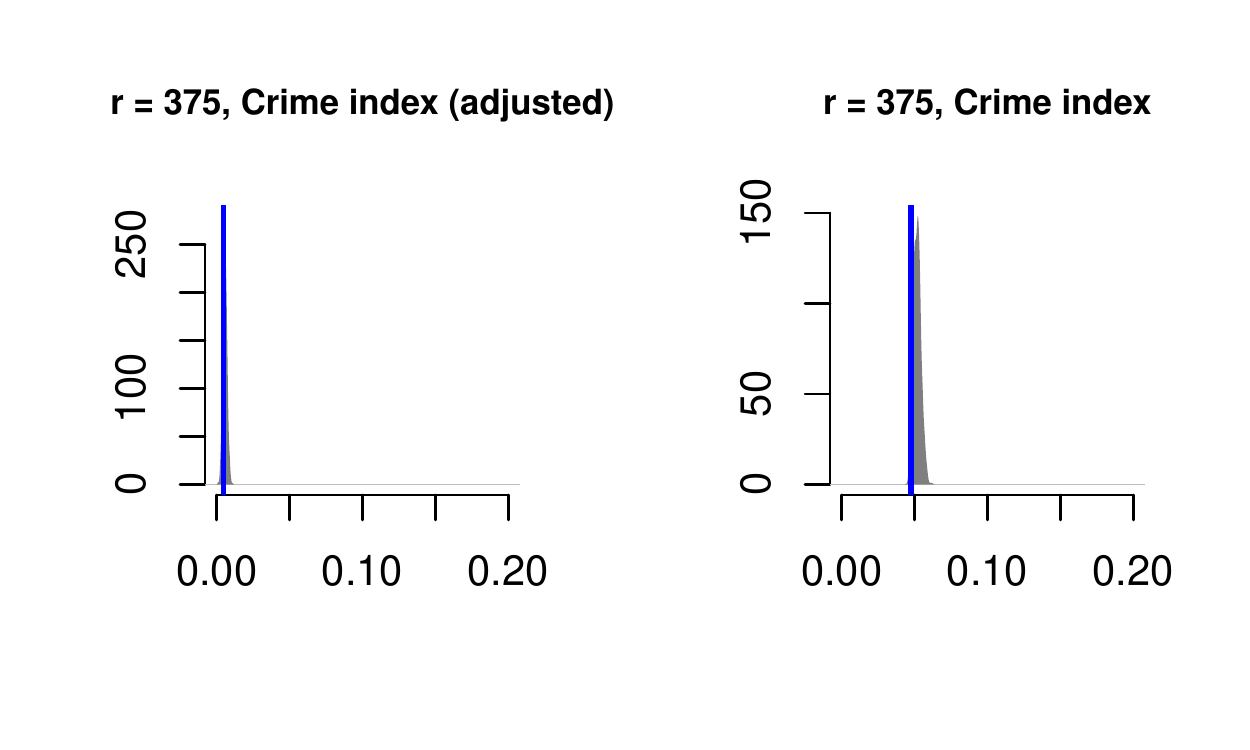}} \vspace{-10mm}
\centerline{\includegraphics[scale=0.6]{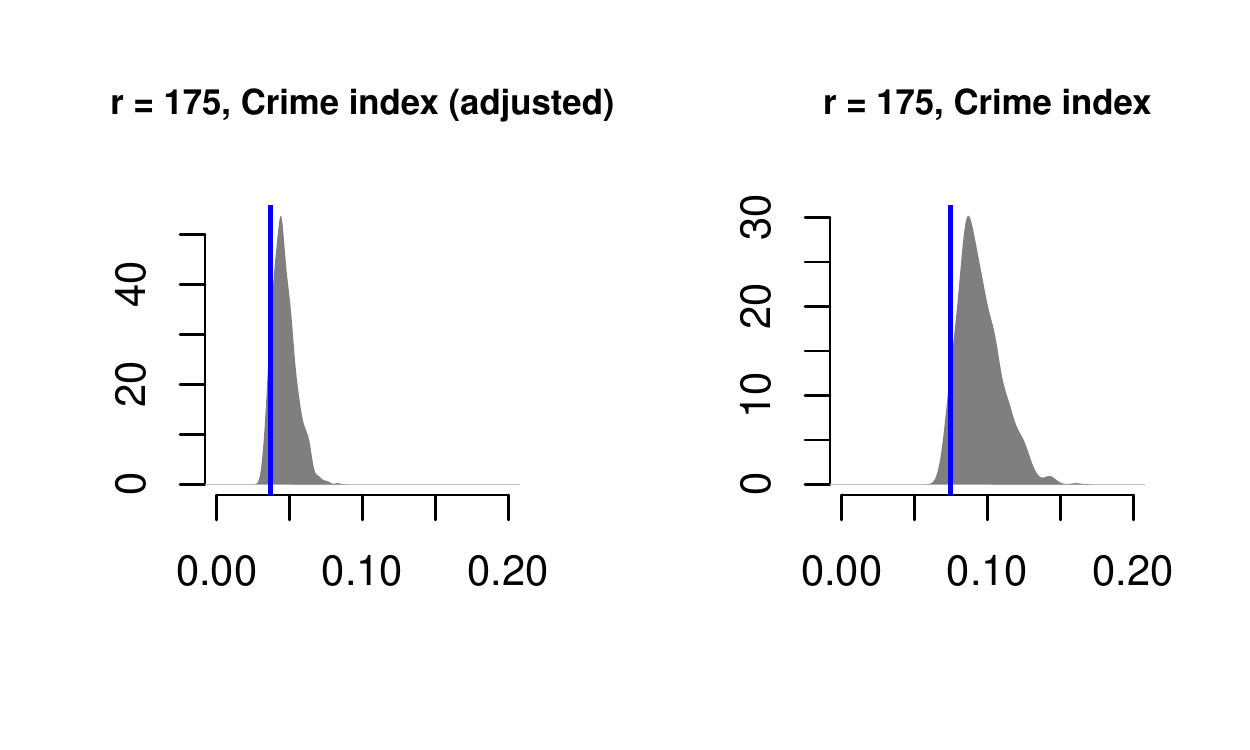} \hspace{5mm}\includegraphics[scale=0.6]{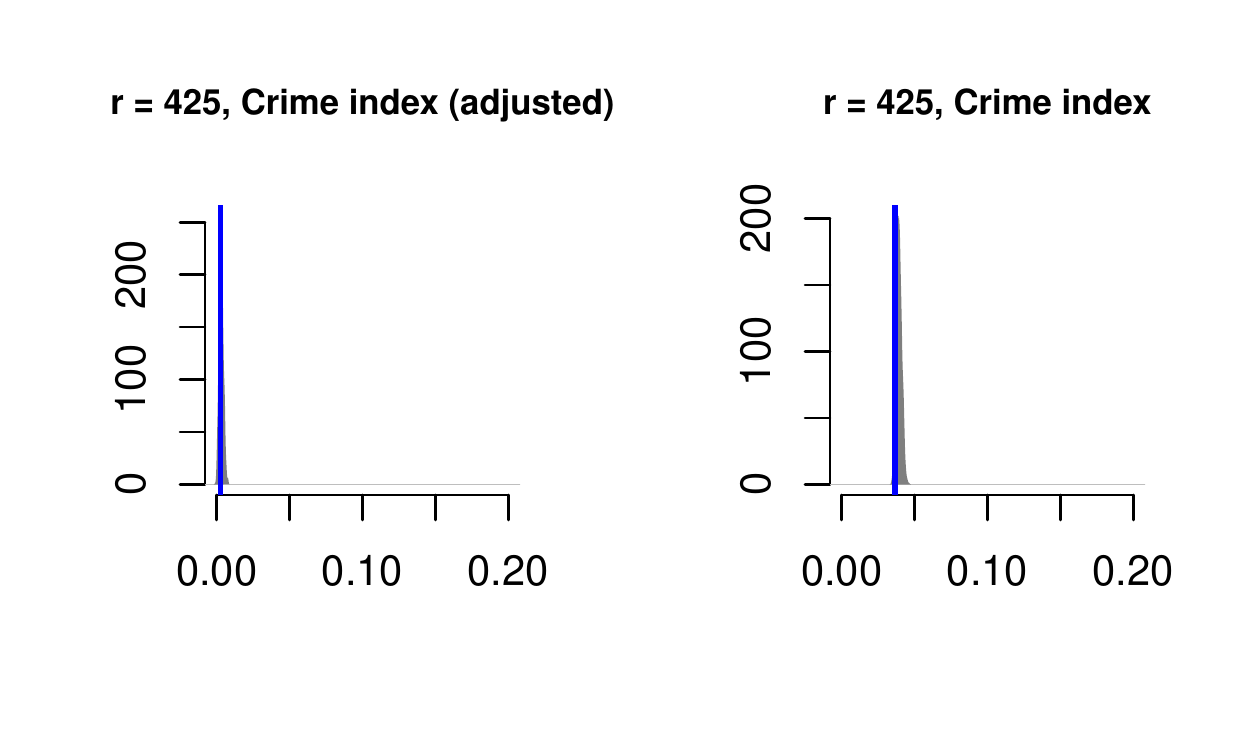}} 
	\caption{Randomization distributions and observed test statistics (blue) for 20 biclique tests. The first and third columns use the adjusted crime index as the outcome, while the second and fourth columns use the raw crime index.  The radius defining spillover exposure status varies from 75 meters (top left) to 425 meters (bottom right). Note that the randomization distribution using the adjusted crime index have lower variance and are centered at smaller positive values compared to their raw index counterparts.  However, the p-values from the tests are largely same.}
	\label{app:randdist_fig}
\end{figure}

\newpage

\section{Testing the general intersection hypothesis, $\Hinter$}\label{appendix:Hex}

\subsection{Proof of Theorem~4}

\begin{theorem}
\ThmTwo
\end{theorem}
\begin{proof}
The main difference with Theorem~\ref{thm:main} is that the test conditions on a biclique decomposition, $\Cset$, that is comprised of multiple null exposure graphs, $G(z; \Zset)$.
Given $\Cset$, as in Theorem~\ref{thm:main}, our conditioning mechanism satisfies:
$$
P(C | \Zobs) = \Ind{\Zobs \in \Zset(C)} \Ind{C\in\Cset},
$$
and so the derivations in Step 2 of Theorem~\ref{thm:main} still hold. Thus, we only need to show that 
Step 1 about imputability of potential outcomes is correct too.

To see this, let $C$ be the conditioning biclique calculated from $G(\Zobs; \Zdom_0)$ for some $\Zdom_0\subseteq \Zdom$, as described in Procedure~\ref{proc:multi}. 
Take any unit $i$ and assignment $z'$ in $C$. Since $(i, z')$ is an edge in $C$, then, by definition of $G(z; \Zset)$, 
$f_i(z') = f_i(\Zobs)$. Under $\Hex$, it follows that $Y_i(z') = Y_i(\Zobs)$.
\end{proof}

\subsection{Multi-null exposure graph}\label{appendix_HI}

Here, we show how to extend the test for $\Hex$ in order to test $\Hinter$ in its more general form~\eqref{eq:H0_inter}.
%
Below we define formally the concept of multi-null exposure graph. 

\begin{definition}[Multi-null exposure graph]
\label{def:multi}
Consider the intersection hypothesis $\Hinter$ of Equation~\eqref{eq:H0_inter}.
For any unit $i\in\Udom$ and $z\in\Zset\subseteq\Zdom$,  let $A(i, z)$ be the unique set $\Fset\in\Inter$ such that $f_i(z)\in\Fset$, if such set exists; 
otherwise, let $A(i, z) = \{ \}$. 
Also, let  $G(z; \Zset) = (V, E)$ be the graph such that $V = \Udom \cup \Zset$ and $E =\{ (i, z') \in \Udom \times \Zset: f_i(z') \in A(i, z)\}$.
Then, $G(z; \Zset)$ is the multi-null exposure graph of $\Hinter$ with respect to $z\in\Zdom, \Zset\subseteq\Zdom$.
\end{definition}

This definition is a generalization with respect to the definition based on $\Hex$. The main difference is that Definition~\ref{def:multi} may leave out some units-assignment pairs if the corresponding potential outcomes cannot be imputed under $\Hinter$. 
To test $\Hinter$, Procedure~\ref{proc:multi} then needs to be slightly updated  by 
initializing $\Zdom_0$ not as $\Zdom$ but as 
$$
\Zdom_0 \gets \Zdom \setminus \left\{ z \in\Zdom : f_i(z)\notin\bigcup_{j=1}^J \Fset_j~\text{for all}~i\in\Udom\right\}.
$$
Definition~\ref{def:multi} of the multi-null exposure graphs $G(z; \Zset)$ can also be used 
to test $\Hinter$ based on the method of \citet{athey2018exact} 
as described in Section~\ref{sec:multi_athey}.
\color{black}

\singlespacing
\small
\bibliographystyle{apalike}
\bibliography{refs.bib}

\end{document}